\renewcommand*{\@fnsymbol}[1]{\ensuremath{\ifcase#1\or \star\or \dagger\or \ddagger\or
       \mathsection\or \mathparagraph\or \|\or **\or \dagger\dagger
       \or \ddagger\ddagger \else\@ctrerr\fi}}
\begin{document}
\newcommand{\setP}{O}
\newcommand{\setH}{H}
\newcommand{\setL}{L}
\newcommand{\setK}{K}
\newcommand{\Sub}[2]{{#1}_{|{#2}}}
\newcommand{\Hist}{Hist}
\newcommand{\Plays}{Plays}
\newcommand{\devstep}[1]{$#1$-deviation step}
\newcommand{\out}[1]{\langle #1 \rangle}
\newcommand{\Succ}{Succ}
\newcommand{\Remove}{\textit{Remove}}
\newcommand{\Adjust}{\textit{Adjust}}
\newcommand{\BotSCC}{{\cal C}}
\newcommand{\Attr}{{Attr}}

%
%

\title{On the existence of weak subgame perfect equilibria}
\author{V\'eronique Bruy\`ere\inst{1} \and St\'ephane Le Roux\inst{2}$^{,}$\thanks{Author supported by ERC Starting Grant (279499: inVEST).}$^{,}$\thanks{Le Roux has since moved to TU Darmstadt, Germany.} \and Arno Pauly\inst{2}$^{,\star,}$\thanks{Pauly has since moved to Swansea University, UK.}\and \\ Jean-Fran\c cois Raskin\inst{2}$^{,*}$}

\institute{D\'{e}partement d'informatique, Universit\'{e} de Mons (UMONS), Belgium
\and
D\'{e}partement d'informatique, Universit\'{e} Libre de Bruxelles (ULB), Belgium}

\date{}

\maketitle

\begin{abstract}
We study multi-player turn-based games played on (potentially infinite) directed graphs. An outcome is assigned to every play of the game. Each player has a preference relation on the set of outcomes which allows him to compare plays. We focus on the recently introduced notion of weak subgame perfect equilibrium (weak SPE). This is a variant of the classical notion of SPE, where players who deviate can only use strategies deviating from their initial strategy in a finite number of histories. Having an SPE in a game implies having a weak SPE but the contrary is generally false.

We propose general conditions on the structure of the game graph and on the preference relations of the players that guarantee the existence of a weak SPE, that additionally is finite-memory.
From this general result, we derive two large classes of games for which there always exists a weak SPE: $(i)$ the games with a finite-range outcome function, and $(ii)$ the games with a finite underlying graph and a prefix-independent outcome function. For the second class, we identify conditions on the preference relations that guarantee memoryless strategies for the weak SPE.
\end{abstract}

\section{Introduction}

Subgame-perfect equilibria (SPEs) are a core solution concept for sequential games. For infinite duration games, they only exist in restricted cases, though. A weaker variant of SPE, \emph{weak SPE} was recently introduced in~\cite{BBMR15}. While an SPE must be resistant to any unilateral deviation of one player, a weak SPE must be resistant to such deviations where the deviating strategy differs from the original one on a \emph{finite number} of histories only, or, equivalently, a \emph{single} history. The latter class of deviating strategies is a well-known notion that for instance appears in the proof of Kuhn's theorem~\cite{kuhn53} with the one-step deviation property.

There are games for which there exists a weak SPE but no SPE~\cite{BBMR15,SV03}. The notion of weak SPE is important for several reasons (more details are given in the related work discussed below). First, for the large class of games with upper-semicontinuous payoff functions and for games played on finite trees, the notions of SPE and weak SPE are equivalent. Second, it is a central technical ingredient used to reason on SPEs as shown in~\cite{BBMR15} and~\cite{Flesch10}. Third, being immune to strategies that finitely deviate from the initial strategy profile may be sufficient from the perspective of synthesis (see more below).

In this paper, we provide the following contributions. First, we identify a general class of games played on potentially infinite graphs and prove that these games always admit weak SPE (Theorem~\ref{thm:generalgraph}). The proof of our result has an algorithmic flavour, and proceeds via transfinite induction. The weak SPEs we construct require only finite memory to execute, meaning that the prescribed action at any history depends only on the current vertex and on the state of some finite automaton. Second, starting from this general existence result, we prove the existence of a weak SPE:
 \begin{itemize}
	\item for games on infinite trees with a \emph{finite} number of outcomes (Theorem~\ref{thm:infinitetree}, reproving a result from \cite{Flesch10});
	\item for games with a \emph{finite} underlying graph and a \emph{prefix-independent} outcome function (Theorem~\ref{thm:finitegraph}).
 \end{itemize}
\noindent
Additionally, in the second result, we identify conditions on the players' outcome preferences that guarantee the existence of a weak SPE composed of \emph{uniform memoryless} strategies only (Theorem~\ref{thm:uniform}).

\paragraph{{\bf Related work}}
The concept of SPE has been first introduced and studied by the game theory community. In~\cite{kuhn53}, Kuhn proves the existence of SPEs in games played on finite trees. This result has been generalized in several ways. All games with a continuous real-valued outcome function and a finitely branching tree always have an SPE~\cite{Roux14} (the special case with finitely many players is first established in~\cite{Fudenberg83}). In~\cite{Flesch10} (resp. \cite{Purves11}), the authors prove that there always exists an SPE for games with a finite number of players and with a real-valued outcome function that is upper-semicontinuous (resp. lower-semicontinuous) and has finite range. The result of \cite{Purves11} is extended to an infinite number of players in~\cite{Flesch17}.
In~\cite{Roux14}, it is proved using Borel determinacy that all two-player games with antagonistic preferences over finitely many outcomes and a Borel-measurable outcome function have an SPE. In~\cite{Roux15}, Le Roux shows that all games where the preferences over finitely many outcomes are free of some ``bad pattern'' and the outcome function is $\Delta^0_2$ measurable (a low level in the Borel hierarchy) have an SPE.

In part of the aforementioned works, the equivalence between SPEs and weak SPEs is implicitly used as a proof technique: in a finite setting in~\cite{kuhn53}, in a continuous setting in \cite{Fudenberg83}, and in a lower-semicontinuous setting in \cite{Flesch10}. In the latter reference, it is implicitly proven that all games with a finite range real-valued outcome function always have a weak SPE (which appears to be an SPE when the outcome function is additionally lower-semicontinuous). We obtain this result here as a consequence of a more general theorem, with a proof of a more algorithmic nature.

The concept of SPE and other solution concepts for multi-player non zero-sum games have been considered recently by the theoretical computer community, see~\cite{BrenguierCHPRRS16} for a survey. The existence of SPEs (and thus weak SPEs) is established in~\cite{Ummels06} for games played on graphs by a finite number of players and with Borel Boolean objectives. In~\cite{BBMR15}, weak SPEs are introduced as a technical tool for showing the existence of SPEs in quantitative reachability games played on finite weighted graphs. An algorithm is also provided for the construction of a (finite-memory) weak SPE that appears to be an SPE for this particular class of games. In this paper, we give several existence results that are orthogonal to the results obtained in \cite{BBMR15} as they are concerned with possibly infinite graphs or prefix-independent outcome functions.

Other refinements of Nash equilibria (NE) are studied. Let us mention the secure equilibria for two players first introduced in~\cite{CHJ06} and then used for reactive synthesis in~\cite{ChatterjeeH07}. These equilibria are generalized to multiple players in~\cite{Depril14} or to quantitative objectives in~\cite{BMR14}, see also a variant called Doomsday equilibrium in \cite{Chatterjee0FR14}. Like NEs, they are subject to possible non-credible threats. Other alternatives to NE are provided by the notion of admissible strategy introduced in~\cite{Berwanger07}, with computational aspects studied in~\cite{BrenguierRS14}, and potential for synthesis studied in~\cite{BrenguierRS15}. Note that these notions are free, like (weak) SPEs, of non-credible threats. Finally, in \cite{KupfermanPV16}, the authors introduce the notion of cooperative and non-cooperative rational synthesis as a general framework where rationality can be specified by either NE, or SPE, or the notion of dominating strategies. In all cases except~\cite{BMR14} and~\cite{Depril14}, the proposed solution concepts are not guaranteed to exist, hence results concern mostly algorithmic techniques to decide their existence, instead of general conditions for existence as in this paper.

\paragraph{{\bf Applications to reactive synthesis}}

Games played on graphs have a large number of applications in theoretical computer science. One particularly important application is \emph{reactive synthesis}~\cite{PR89}, i.e.  the design of a controller that guarantees a good behavior of a reactive system evolving in a possibly hostile environment. One classical model proposed for the synthesis problem is the notion of \emph{two-player zero-sum game played on a graph}. One player is the reactive system and the other one is the environment; the vertices of the graph model their possible states and the edges model their possible actions. Interactions between the players generate an infinite play in the graph which model behaviors of the system within its environment. As one cannot assume cooperation of the environment, the objectives of the two players are considered to be opposite. 
Constructing a controller for the system then means devising a \emph{winning strategy} for the player modeling it. Reality is often more subtle and the environment is usually not fully adversarial as it has its own objective, meaning that the game should be non zero-sum. Moreover instead of two players, we could consider the more general situation of several players modeling different interacting systems/environments each of them with its own objective.

This has lead to an exploration of a variety of solution concepts for sequential games from the perspective of theoretical computer science (see survey \cite{bruyere-survey}). Weak SPE have the benefit of allowing less unreasonable threats than Nash equilibria, but existing in more cases than SPE. We can even imagine ruling out infinite deviations by letting a meta-agent punish every one-shot deviation with a (low) fixed probability. A player using an infinitely-deviating strategy will thus be punished by the meta-agent with probability one. Protocols like BitTorrent use similar ideas: every deviant user is temporarily denied suitable bandwidth (see Chapter \textit{Bandwidth Trading as Incentive} in \cite{BitTorrent} for details).

\paragraph{{\bf Structure of the paper}} In Section~\ref{sec:prelim}, we recall the useful notions of game, strategy and weak SPE. In Section~\ref{sec:general}, we present our general conditions that guarantee the existence of a weak SPE. From this general existence result, we derive two large classes of games with a weak SPE: games with a finite-range outcome function in Section~\ref{sec:first}, and games with a finite underlying graph and a prefix-independent outcome function in Section~\ref{sec:second}. In Section \ref{sec:counterexample} we provide an example of a game without weak SPE demonstrating limitations to possible extensions of our main theorem.

An extended abstract omitting most proofs has appeared as \cite{wSPE-fossacs}.

\section{Preliminaries} \label{sec:prelim}

In this section, we recall the useful notions of game, strategy, and weak subgame perfect equilibrium. We illustrate these notions with examples.

\subsection{Games}
%
We consider multi-player turn-based games such that an outcome is assigned to every play. Each player has a preference relation on the set of outcomes which allows him to compare plays.

\begin{definition}
A \emph{game} is a tuple $G = (\Pi, V, (V_i)_{i \in \Pi}, E, \setP, \mu, (\prec_i)_{i \in \Pi})$ where:

\begin{itemize}
  \item $\Pi $ is a set of players,
  \item $V$ is a set of vertices and $E \subseteq V \times V$ is a set of edges, such that w.l.o.g. each vertex has at least one outgoing edge,
  \item $(V_i)_{i \in \Pi}$ is a partition of $V$ such that $V_i$ is the set of vertices controlled by player $i \in \Pi$,
  \item $\setP$ is a set of outcomes and $\mu : V^\omega \to \setP$ is an outcome function,
    \item $\prec_i$ $\subseteq \setP \times \setP$ is a preference relation for player $i \in \Pi$.
\end{itemize}
\end{definition}

In this definition the underlying graph $(V,E)$ can be infinite (that is, of arbitrarily cardinality), as well as the set $\Pi$ of players and the set $\setP$ of outcomes.

A \emph{play} of $G$ is an infinite (countable) sequence $\rho = \rho_0 \rho_1 \ldots \in V^\omega$ of vertices such that $(\rho_i, \rho_{i + 1}) \in E$ for all $i \in \mathbb N$. \emph{Histories} of $G$ are finite
sequences $h = h_0 \ldots h_n \in V^+$ defined in the same way. We often use notation $hv$ to mention the last vertex $v \in V$ of the history. 
Usually histories are non empty, but in specific situations it will be useful to consider the empty history $\epsilon$. The set of plays is denoted by $\Plays$ and the set of histories (ending with a vertex in $V_i$) by $\Hist$ (resp. by $\Hist_i$).\footnote{Indexing $\Plays_G$ or $\Hist_G$ with $G$ allows to recall the related game $G$.}
A \emph{prefix} (resp. \emph{suffix}) of a play $\rho = \rho_0\rho_1 \ldots$ is a finite sequence $\rho_{\leq n} = \rho_0 \dots \rho_n$  (resp. infinite sequence $\rho_{\geq n} = \rho_n \rho_{n+1} \ldots$). We use notation $h < \rho$ when a history $h$ is prefix of a play $\rho$.
When an initial vertex $v_0 \in V$ is fixed, we call $(G, v_0)$ an \emph{initialized} game. In this case, plays and histories are supposed to start in $v_0$, and we use notations $\Plays(v_0)$ and $\Hist(v_0)$. In this article, we often \emph{unravel} the graph of the game $(G, v_0)$ from the initial vertex $v_0$, which yields an infinite tree rooted at $v_0$.

The outcome function assigns an outcome $\mu(\rho) \in \setP$ to each play $\rho \in V^\omega$. It is \emph{prefix-independent} if $\mu(h \rho) = \mu(\rho)$ for all histories $h$ and play $\rho$. A \emph{preference} relation $\prec_i$ $\subseteq \setP \times \setP$ is an irreflexive and transitive binary relation.
It allows for player~$i$ to compare two plays $\rho, \rho' \in V^\omega$ with respect to their outcome: $\mu(\rho) \prec_i \mu(\rho')$ means that player~$i$ prefers $\rho'$ to $\rho$. In this paper we restrict to \emph{linear} preferences. (It is w.l.o.g. since the preference properties that we use are preserved by linear extension).
We write $o \preceq_i o'$ when $o \prec_i o'$ or $o = o'$; notice that $o \nprec_i o'$ if and only if $o' \preceq_i o$.  We sometimes use notation $\prec_v$ instead of $\prec_i$ when vertex $v \in V_i$ is controlled by player~$i$.

\begin{example} \label{ex:classical}
Let us mention some classical classes of games where the set of outcomes~$\setP$ is a subset of $(\mathbb R \cup \{+\infty, -\infty \})^\Pi$, and for all player~$i \in \Pi$, $\prec_i$ is the usual ordering $<$ on $\mathbb R \cup \{+\infty, -\infty \}$ on the outcome $i$-th components. In other words, each player~$i$ has a real-valued payoff function $\mu_i : \Plays \to \mathbb R \cup \{+\infty, -\infty \}$. The outcome function of the game is then equal to $\mu = (\mu_i)_{i \in \Pi}$, and for all $i \in \Pi$, $\mu(\rho) \prec_i \mu(\rho')$ whenever $\mu_i(\rho) < \mu_i(\rho')$.

Games with \emph{Boolean} objectives are such that $\mu_i : \Plays \to \{0,1\}$ where $1$ (resp. $0$) means that the play is won (resp. lost) by player~$i$. Classical objectives are Borel objectives including $\omega$-regular objectives, like reachability, B\"uchi, parity, aso~\cite{GU08}. Prefix-independence of $\mu_i$ holds in the case of B\"uchi and parity objectives, but not for reachability objective.

We have \emph{quantitative} objectives when $\mu_i : \Plays \to \mathbb R \cup \{+\infty, -\infty \}$ replaces $\mu_i : \Plays \to \{0,1\}$. Usually, such a $\mu_i$ is defined from a weight function $w_i : E \to \mathbb R$ that assigns a weight to each edge. Classical examples of $\mu_i$ are \emph{limsup} and \emph{mean-payoff} functions~\cite{LaurentDoyen}, that is\footnote{The limit inferior can be used instead of the limit superior.},
\begin{itemize}
\item \emph{limsup}: $\mu_i(\rho) = \limsup_{k \to \infty} w_i(\rho_k,\rho_{k+1})$
\item \emph{mean-payoff}: $\mu_i(\rho) = \limsup_{n \to \infty} \sum_{k=0}^{n} \frac{w_i(\rho_k,\rho_{k+1})}{n}$
\end{itemize}
\end{example}

\subsection{Strategies}

Let $(G, v_0)$ be an initialized game. A \emph{strategy} $\sigma$ for player~$i$ in $(G,v_0)$ is a function $\sigma: \Hist_i(v_0) \to V$ assigning to each history $hv \in \Hist_i(v_0)$ a vertex $v' = \sigma(hv)$ such that $(v, v') \in E$.
A strategy $\sigma$ of player $i$ is \emph{positional} if it only depends on the last vertex of the history, i.e. $\sigma(hv) = \sigma(v)$ for all $hv \in \Hist_i(v_0)$. It is a \emph{finite-memory} strategy if it can be encoded by a deterministic \emph{Moore machine} ${\cal M} = (M, m_0, \alpha_U, \alpha_N)$ where $M$ is a finite set of states (the memory of the strategy), $m_0 \in M$ is an initial memory state, $\alpha_U : M \times V \rightarrow M$ is an update function, and $\alpha_N : M \times V_i \rightarrow V$ is a next-move function.\footnote{Moore machines are usually defined for finite sets $V$ of vertices. We here allow infinite sets $V$.} Such a machine defines a strategy $\sigma$ such that $\sigma(hv) = \alpha_N(\widehat{\alpha}_U(m_0,h),v)$ for all histories $hv \in \Hist_i(v_0)$, where $\widehat{\alpha}_U$ extends $\alpha_U$ to histories as expected. The \emph{memory size} of $\sigma$ is then the size $|M|$ of $\cal M$. In particular $\sigma$ is positional when it has memory size one.

The previous definitions of (positional, finite-memory) strategy are given for an initialized game $(G, v_0)$. We call \emph{uniform} every positional strategy $\sigma$ of player~$i$ defined for all $hv \in \Hist_i$ (instead of $\Hist_i(v_0)$), that is, when $\sigma$ is a positional strategy in all initialized games $(G,v)$, $v \in V$.

A play $\rho$ is \emph{consistent} with a strategy $\sigma$ of player~$i$ if $\rho_{n+1} = \sigma(\rho_{\leq n})$ for all $n$ such that $\rho_n \in V_i$. A \emph{strategy profile} is a tuple $\bar\sigma = (\sigma_i)_{i \in \Pi}$ of strategies, where each $\sigma_i$ is a strategy of player~$i$. It is called \emph{positional} (resp. \emph{finite-memory with memory size bounded by $c$}, \emph{uniform}) if all $\sigma_i$, $i \in \Pi$, are positional (resp. finite-memory with memory size bounded by $c$, uniform).
Given an initial vertex $v_0$, such a strategy profile determines a unique play of $(G, v_0)$ that is consistent with all the strategies. This play induced by $\bar\sigma$ in $(G,v_0)$ is denoted by $\out{\bar\sigma}_{v_0}$ and we say that $\bar\sigma$ has outcome $\mu(\out{\bar\sigma}_{v_0})$.

Let $\bar \sigma$ be a strategy profile. When all players stick to their own strategy except player $i$ that shifts from $\sigma_i$ to $\sigma'_i$, we denote by $(\sigma'_i, \bar \sigma_{-i})$ the derived strategy profile, and by $\out{\sigma'_i, \bar \sigma_{-i}}_{v_0}$ the induced play in $(G, v_0)$. We say that $\sigma'_i$ is a \emph{deviating} strategy from $\sigma_i$. When $\sigma_i$ and $\sigma'_i$ only differ on a finite number of histories (resp. on $v_0$), we say that $\sigma'_i$ is a \emph{finitely-deviating} (resp. \emph{one-shot deviating}) strategy from $\sigma_i$. One-shot deviating strategies is a well-known notion that for instance appears in the proof of Kuhn's theorem~\cite{kuhn53} with the one-step deviation property. Finitely-deviating strategies have been introduced in \cite{BBMR15}.

\subsection{Variants of subgame perfect equilibria}

In this section we recall the notion of subgame perfect equilibrium (SPE) and its variants. Let us first recall the classical notion of Nash equilibrium (NE). Informally, a strategy profile $\bar\sigma$ in an initialized game $(G,v_0)$ is an NE if no player has an incentive to deviate (with respect to his preference relation), if the other players stick to their strategies.

\begin{definition}
Given an initialized game $(G, v_0)$, a strategy profile $\bar \sigma = (\sigma_i)_{i \in \Pi}$ of $(ùG,v_0)$ is a \emph{Nash equilibrium} if for all players $i \in \Pi$, for all strategies $\sigma'_i$ of player~$i$, we have
$\mu(\out{\bar \sigma}_{v_0}) \nprec_i \mu(\out{\sigma'_i, \bar \sigma_{-i}}_{v_0})$.
\end{definition}

When $\mu(\out{\bar \sigma}_{v_0}) \prec_i \mu(\out{\sigma'_i, \bar \sigma_{-i}}_{v_0})$, we say that $\sigma'_i$ is a \emph{profitable deviation} for player~$i$ w.r.t. $\bar\sigma$.

The notion of subgame perfect equilibrium is a refinement of NE. In order to define it, we need to introduce the following concepts. Given a game $G = (\Pi, V, (V_i)_{i \in \Pi}, E, \mu, (\prec_i)_{i \in \Pi})$ and a history $h \in \Hist$, we denote by $\Sub{G}{h}$ the game $(\Pi, V, (V_i)_{i \in \Pi}, E, \Sub{\mu}{h}, (\prec_i)_{i \in \Pi})$ where $\Sub{\mu}{h}(\rho) = \mu(h\rho)$ for all plays of $\Sub{G}{h}$\footnote{In this article, we will always use notation $\mu(h\rho)$ instead of $\Sub{\mu}{h}(\rho)$.}, and we say that $\Sub{G}{h}$ is a \emph{subgame} of $G$. Given an initialized game $(G, v_0)$ and a history $hv \in \Hist(v_0)$, the initialized game $(\Sub{G}{h}, v)$ is called the subgame of $(G, v_0)$ with history $hv$. In particular $(G, v_0)$ is a subgame of itself with history $hv_0$ such that $h = \epsilon$. Given a strategy $\sigma$ of player $i$ in $(G,v_0)$, the strategy $\Sub{\sigma}{h}$ in $(\Sub{G}{h}, v)$ is defined as $\Sub{\sigma}{h}(h') = \sigma(hh')$ for all histories $h' \in \Hist_i(v)$. Given a strategy profile $\bar \sigma$ in $(G,v_0)$, we use notation $\Sub{\bar \sigma}{h}$ for $(\Sub{\sigma_i}{h})_{i \in \Pi}$, and $\out{\Sub{\bar \sigma}{h}}_{v}$ is the play induced by $\Sub{\bar \sigma}{h}$ in the subgame $(\Sub{G}{h}, v)$.

We can now recall the classical notion of subgame perfect equilibrium: an SPE is a strategy profile in an initialized game that induces an NE in each of its subgames.
Two variants of SPE, called weak SPE and very weak SPE, are proposed in~\cite{BBMR15} such that no player has an incentive to deviate in any subgame using finitely deviating strategies and one-shot deviating strategies respectively (instead of any deviating strategy). 

\begin{definition}
Given an initialized game $(G, v_0)$, a strategy profile $\bar \sigma$ of $(G,v_0)$ is a \emph{(weak, very weak resp.) subgame perfect equilibrium} if for all histories $hv \in \Hist(v_0)$, for all players $i \in \Pi$, for all (finitely, one-shot resp.) deviating strategies $\sigma'_i$ from $\Sub{\sigma_i}{h}$ of player~$i$ in the subgame $(\Sub{G}{h}, v)$, we have
$\mu(\out{\Sub{\bar \sigma}{h}}_{v}) \nprec_i \mu(\out{\sigma'_i, \bar\sigma_{-i | h}}_{v})$.
\end{definition}

Trivially, every SPE is a weak SPE, and every weak SPE is a very weak SPE.

\begin{proposition}[\cite{BBMR15}] \label{prop:weak-veryweak}
Let  $\bar \sigma$ be a strategy profile in $(G, v_0)$. Then $\bar \sigma$ is a weak SPE iff $\bar \sigma$ is a very weak SPE.
There exists an initialized game $(G, v_0)$ with a weak SPE but no SPE.
\end{proposition}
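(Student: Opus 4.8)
The statement bundles two claims: the equivalence of weak and very weak SPE for a fixed profile, and the existence of a separating example. The forward implication of the equivalence is immediate, since a one-shot deviating strategy is in particular finitely-deviating, so immunity to all finitely-deviating strategies in every subgame entails immunity to all one-shot ones. The content is the converse, and the plan is a minimal-counterexample argument on the number of histories where a deviating strategy differs from the prescribed one. Suppose $\bar\sigma$ is a very weak SPE but not a weak SPE. Among all witnesses of failure --- a history $hv \in \Hist(v_0)$, a player $i$, and a finitely-deviating strategy $\sigma'_i$ from $\Sub{\sigma_i}{h}$ that is a profitable deviation for $i$ in $(\Sub{G}{h},v)$ --- pick one minimizing the number $k$ of histories on which $\sigma'_i$ and $\Sub{\sigma_i}{h}$ disagree. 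A $0$-deviation strategy equals $\Sub{\sigma_i}{h}$ and hence is never profitable, and a $1$-deviation strategy is a one-shot deviation, which by the very-weak-SPE property is not profitable; therefore $k \ge 2$, and the aim is to produce from the chosen witness either a profitable one-shot deviation (in some subgame) or another witness with strictly fewer disagreement histories.

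Write $\rho = \out{\sigma'_i, \bar\sigma_{-i|h}}_v$; profitability gives $\mu(\out{\Sub{\bar\sigma}{h}}_v) \prec_i \mu(h\rho)$, so $\rho$ is not the equilibrium play and $\sigma'_i$ must disagree with $\Sub{\sigma_i}{h}$ on some prefix of $\rho$. Let $g = g'w$ be the \emph{longest} such prefix (these prefixes are linearly ordered, so $g$ is well defined), with last vertex $w \in V_i$, and consider the subgame $(\Sub{G}{hg'}, w)$, which is the subgame of $(G,v_0)$ with history $hg$. Introduce two modifications. First, $\beta$: the one-shot deviation from $\Sub{\sigma_i}{hg'}$ in $(\Sub{G}{hg'}, w)$ that plays $\sigma'_i(g)$ at the initial history and follows $\Sub{\sigma_i}{hg'}$ afterwards; by maximality of $g$, the subsequent moves prescribed by $\sigma'_i$ along the play already coincide with those of $\Sub{\sigma_i}{h}$, so $\out{\beta, \bar\sigma_{-i|hg'}}_w$ is exactly the suffix of $\rho$ starting at $w$, hence has outcome $\mu(h\rho)$. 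Second, $\tau$: the strategy of $i$ in $(\Sub{G}{h}, v)$ that agrees with $\sigma'_i$ on every history that is not an extension of $g$ and with $\Sub{\sigma_i}{h}$ on every history extending $g$ (including $g$ itself); it disagrees with $\Sub{\sigma_i}{h}$ on at most $k-1$ histories, and $\out{\tau, \bar\sigma_{-i|h}}_v$ follows $\rho$ up to $g'$ and then coincides with the equilibrium play $\out{\Sub{\bar\sigma}{hg'}}_w$, so it has the equilibrium outcome $o$ of the subgame at $hg$.

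Now compare $o$ with $\mu(h\rho)$ in the linear order $\prec_i$. If $o \prec_i \mu(h\rho)$, then $\beta$ is a profitable one-shot deviation from $\Sub{\sigma_i}{hg'}$ in $(\Sub{G}{hg'}, w)$, contradicting that $\bar\sigma$ is a very weak SPE. Otherwise $\mu(h\rho) \preceq_i o$, so $\mu(\out{\Sub{\bar\sigma}{h}}_v) \prec_i \mu(h\rho) \preceq_i o$, and then $\tau$ is a profitable deviation for $i$ in $(\Sub{G}{h}, v)$ with at most $k-1$ disagreement histories --- and at least one, since $0$-deviation strategies are not profitable --- contradicting minimality of $k$. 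Either way we reach a contradiction, so $\bar\sigma$ is a weak SPE. I expect the main obstacle to be the two bookkeeping claims: that $\beta$ reproduces the suffix of $\rho$ (this is exactly where maximality of $g$ is used) and that $\tau$ reproduces the equilibrium play of the subgame at $g$; the case split itself is routine because preferences are linear.

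For the last assertion it suffices to exhibit one initialized game with a weak SPE but no SPE, and such games are known; one may recall the one in \cite{SV03} (see also \cite{BBMR15}). It is a finite-graph turn-based ``waiting game'' in which, at each of his vertices, the current player may either wait or irrevocably stop, with outcomes chosen so that, on one hand, the profile in which everyone always waits is immune to every finite deviation --- altering one waiting move reaches a position from which the profile then yields an outcome that the deviator ranks no higher than the all-waiting outcome, and likewise in every subgame --- hence is a weak SPE, while, on the other hand, a direct case analysis shows that every strategy profile admits, in some subgame, a profitable deviation that changes infinitely many moves, so that no profile is an SPE. (Once Theorem~\ref{thm:finitegraph} is available, an alternative is to take any finite-graph game with a prefix-independent outcome function realizing the ``bad pattern'' of \cite{Roux15}: it has a weak SPE by that theorem and no SPE by \cite{Roux15}.)
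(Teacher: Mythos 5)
The paper does not actually prove this proposition: it is imported from \cite{BBMR15}, with Example~\ref{ex:contrex} (Figure~\ref{fig:gameNoSPE}) supplying the witness for the second assertion. Your argument for the equivalence is correct and is the standard one-step-deviation argument: the minimal-counterexample induction on the number of disagreement histories, the splitting of the deviation at the last disagreement $g$ along the induced play into a one-shot deviation $\beta$ in the subgame at $hg$ and a shorter deviation $\tau$, and the case split on whether $o \prec_i \mu(h\rho)$ all go through; the two bookkeeping claims you flag are exactly where maximality of $g$ and the definition of $\tau$ are needed, and you handle them correctly. One imprecision: under the paper's definitions a one-shot deviating strategy differs from $\Sub{\sigma_i}{h}$ only at the \emph{initial} history of the subgame, so a strategy disagreeing at a single deeper history is finitely-deviating but not one-shot deviating, and your claim that ``a $1$-deviation strategy is a one-shot deviation,'' used to get $k \ge 2$, is not literally true. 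This is harmless, because your main reduction already disposes of $k=1$ (there $\tau$ has zero disagreements, so the second case is impossible and the first case contradicts the very weak SPE property directly), but the base case should be dropped or rephrased accordingly.

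The second assertion is where the proposal falls short of a self-contained proof: no game is exhibited, only a pointer to \cite{SV03}, and the sketch given does not match that example. In the game of Figure~\ref{fig:gameNoSPE} the ``everyone always waits'' profile $(v_0v_1)^\omega$ has outcome $o_1$, and player~$1$ improves to $o_2$ by a one-shot deviation to $v_2$, so the all-wait profile is not even a very weak SPE; the weak SPE there is the profile in which player~$2$ stops at $v_1$, yielding $o_3$. Since the proposition is itself cited from \cite{BBMR15}, deferring to the literature is defensible, but a complete proof of the existence claim requires the concrete game and the verification carried out in Example~\ref{ex:contrex}.
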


%


\begin{figure}[ht!]
\begin{center}

\begin{tikzpicture}[initial text=,auto, node distance=2cm, shorten >=1pt] 

\node[state, scale=0.6]              (1)                     {$v_0$};
\node[state, rectangle, scale=0.6]   (2)    [right=of 1]     {$v_1$};
\node[state, scale=0.6]              (3)    [left=of 1]      {$v_2$};
\node[state, scale=0.6]              (4)    [right=of 2]     {$v_3$};

\path[->] (1) edge [bend right=25, thick, black]          node[below, scale=0.7, black]    {}   (2)
             edge                                node[above, scale=0.7]           {}   (3)

         (2) edge  [bend left=-25]               node[above, scale=0.7]           {}   (1)
             edge  [thick, black]                        node[above, scale=0.7, black]    {}   (4)

         (3) edge  [loop above, thick, black]            node[midway, scale=0.7, black]   {}   ()

         (4) edge  [loop above, thick, black]            node[midway, scale=0.7, black]   {}   ();

\end{tikzpicture}

%
%
%
%
%
%
%
%
%
%
\end{center}
\caption{A initialized game $(G,v_0)$ with a (very) weak SPE and no SPE.}
\label{fig:gameNoSPE}
\end{figure}
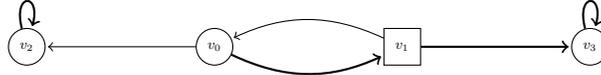

\begin{example}[\cite{BBMR15}] \label{ex:contrex}
Consider the two-player game $(G, v_0)$ in Figure~\ref{fig:gameNoSPE} such that player~$1$ (resp. player~$2$) controls vertices $v_0, v_2, v_3$ (resp. vertex $v_1$). The set $\setP$ of outcomes is equal to $\{o_1, o_2, o_3 \}$, and the outcome function is prefix-independent such that $\mu((v_0v_1)^\omega) = o_1$, $\mu(v_2^\omega) = o_2$, and $\mu(v_3^\omega) = o_3$. The preference relation for player~$1$ (resp. player~$2$) is $o_1 \prec_1 o_2 \prec_1 o_3$ (resp. $o_2 \prec_2 o_3 \prec_2 o_1$).

It is known that this game has no SPE~\cite{SV03}. Nevertheless the positional strategy profile $\bar \sigma$ depicted with thick edges is a very weak SPE, and thus a weak SPE by Proposition~\ref{prop:weak-veryweak}. Let us give some explanation. Due to the simple form of the game, only two cases are to be treated. Consider first the subgame $(\Sub{G}{h}, v_0)$ with $h \in (v_0v_1)^\ast$, and the one-shot deviating strategy $\sigma'_1$ from $\Sub{\sigma_1}{h}$ such that $\sigma'_1(v_0) = v_2$. Then $\out{\Sub{\bar\sigma}{h}}_{v_0} = v_0v_1v_3^\omega$ and $\out{\sigma'_1,\Sub{\sigma_2}{h}}_{v_0} = v_0v_2^\omega$ with respective outcomes $o_3$ and $o_2$, showing that $\sigma'_1$ is not a profitable deviation for player~$1$ in $(\Sub{G}{h}, v_0)$. Now in the subgame $(\Sub{G}{h}, v_1)$ with $h \in (v_0v_1)^\ast v_0$, the one-shot deviating strategy from $\Sub{\sigma_2}{h}$ such that $\sigma'_2(v_1) = v_0$ is not profitable for player~$2$ in $(\Sub{G}{h}, v_1)$ because $\out{\Sub{\bar\sigma}{h}}_{v_1} = v_1v_3^\omega$ and $\out{\Sub{\sigma_1}{h},\sigma'_2}_{v_1} = v_1v_0v_1v_3^\omega$ with the same outcome $o_3$.

Notice that $\bar \sigma$ is not an SPE. Indeed the strategy $\sigma'_2$ such that $\sigma'_2(hv_1) = v_0$ for all $h$, is infinitely deviating from $\sigma_2$, and is a profitable deviation for player~$2$ in $(G, v_0)$ since $\out{\sigma_1,\sigma'_2}_{v_0} = (v_0v_1)^\omega$ with outcome~$o_1$.
\end{example}

\section{General conditions for the existence of weak SPEs} \label{sec:general}

In this section, we propose general conditions to guarantee the existence of weak SPEs. In the next sections, from this result, we will derive two interesting large families of games always having a weak SPE.

\begin{theorem} \label{thm:generalgraph}
Let $(G,v_0)$ be an initialized game with a subset $\setL \subseteq V$ of vertices called \emph{leaves} with only one outgoing edge $(l,l)$ for all $l \in \setL$. Suppose that:
\begin{enumerate}
\item for all $v \in V$, there exists a play $\rho = hl^{\omega}$ for some $h \in \Hist(v)$ and $l \in \setL$,
\item for all plays $\rho = hl^\omega$ with $h \in \Hist(v)$ and $l \in \setL$, $\mu(\rho) = \mu(l^\omega)$,
\item the set of outcomes $\setP_\setL = \{\mu(l^{\omega}) \mid l \in \setL\}$ is finite.
\end{enumerate}
Then there always exists a weak SPE $\bar \sigma$ in $(G, v_0)$. Moreover, $\bar \sigma$ is finite-memory with memory size bounded by $|\setP_\setL|$.
\end{theorem}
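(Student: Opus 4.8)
The plan is to prove the theorem by transfinite induction, building the weak SPE in an algorithmic fashion by successively eliminating "unstable" edges from the unravelling of $(G,v_0)$. By Proposition~\ref{prop:weak-veryweak}, it suffices to construct a \emph{very weak} SPE, i.e. a profile that resists only one-shot deviations, which is much more tractable because one-shot resistance at a history $hv$ is a purely local condition comparing the value of the edge chosen at $v$ against the values of the other outgoing edges of $v$.

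First I would set up the iterative construction. Starting from the tree $T_0$ obtained by unravelling $(G,v_0)$, I maintain at each ordinal stage $\lambda$ a subtree $T_\lambda$ (equivalently, a subset $E_\lambda$ of edges) with the invariant that every vertex still has at least one outgoing edge and that, thanks to hypotheses (1)–(2), from every node of $T_\lambda$ one can still reach some leaf $l \in \setL$ and hence realise some outcome in $\setP_\setL$. At a successor stage I pick a node where the currently "best possible" continuation for its owner is \emph{not} achieved by any edge that is itself locally consistent (an unstable situation), and I remove the offending edge(s); at limit stages I take intersections. Since each removal strictly decreases something — e.g. the multiset of outcomes still reachable, which lives in the finite set $\setP_\setL$ — monotonicity plus the finiteness in hypothesis (3) forces the process to stabilise at some ordinal. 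The key point is to show that at the fixpoint $T_\infty$, choosing at every vertex $v\in V_i$ an edge leading to the $\prec_i$-maximal reachable outcome yields a profile $\bar\sigma$ such that along $\out{\Sub{\bar\sigma}{h}}_v$ nobody can gain by a one-shot deviation: a deviation would have to send the play into $T_\infty$ along an edge that was not removed, and by the fixpoint property that edge's subtree cannot offer the deviator anything strictly better than what $\bar\sigma$ already secures.

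The main obstacle, and the part needing the most care, is the bookkeeping that makes "best reachable outcome" well-defined and stable simultaneously at all vertices, and in particular guaranteeing that a vertex always retains at least one outgoing edge so that $\bar\sigma$ is a genuine strategy profile — one must argue that an edge is only ever removed when at least one sibling edge survives witnessing an outcome that is at least as good for the owner, using hypothesis (2) to identify the outcome of an eventually-looping play with that of its leaf. One must also handle the subtlety that we work on the unravelled tree, so the "state" of a node is its whole history; the finite-memory claim then comes at the end by observing that the only information the resulting strategies need to remember about a history is which outcomes in the finite set $\setP_\setL$ are still "in play", so a Moore machine with state space (a quotient of) $2^{\setP_\setL}$ — or more tightly, with $|\setP_\setL|$ states tracking the current target outcome — suffices, giving the bound $|\setP_\setL|$ on the memory size.

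Finally I would assemble the pieces: verify the invariants are preserved through successors and limits, invoke ordinal monotonicity together with hypothesis (3) to terminate at a fixpoint, extract $\bar\sigma$ from the fixpoint tree, check one-shot resistance in every subgame $(\Sub{G}{h},v)$, conclude via Proposition~\ref{prop:weak-veryweak} that $\bar\sigma$ is a weak SPE, and read off the memory bound from the description of the strategy above.
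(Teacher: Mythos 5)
Your overall architecture (reduce to very weak SPE via Proposition~\ref{prop:weak-veryweak}, run a transfinite pruning procedure to a fixpoint, extract the profile, then argue finite memory) matches the paper's, but the step where you extract the equilibrium from the fixpoint is wrong, and it is not a repairable detail: you propose to have each owner of a vertex $v\in V_i$ play ``an edge leading to the $\prec_i$-maximal reachable outcome'' and claim one-shot resistance follows because no surviving edge ``can offer the deviator anything strictly better.'' The games $G_n$ of Example~\ref{ex:Gn} refute this. In $G_4$, every leaf-outcome is reachable from every $v_i$, and greedy play sends each player onward around the cycle (player~$1$ prefers $o_2,o_3$ to $o_1$, player~$2$ prefers $o_3,o_4$ to $o_2$, etc.), producing $(v_1v_2v_3v_4)^\omega$ with outcome $\bot$, which is worst for everyone and admits an obvious profitable one-shot deviation; and the paper shows (proof of Corollary~\ref{cor:finite-mem}) that \emph{every} weak SPE of $(G_n,v_1)$ must be non-greedy and genuinely history-dependent, with player~$1$ settling for $o_1$ at the root and the ``punishments'' rotating with period $n$. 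The point you are missing is that the fixpoint must retain a \emph{set} $\lambda(v)\subseteq\setP_\setL$ of surviving outcomes at each vertex, with only an existential guarantee: for each promised $o\in\lambda(v)$ and each successor $v'$ there exists \emph{some} $o'\in\lambda(v')$ with $o\nprec_v o'$. The equilibrium then threads a promised outcome through the histories and, after a deviation to $v'$, switches to a punishing $o'$ chosen \emph{relative to the previously promised $o$} --- not to the owner's favourite outcome at $v'$. Collapsing each vertex to a single best value destroys exactly this punishment structure, and it also undercuts your memory bound: a greedy profile has no ``current target outcome'' to track, whereas the $|\setP_\setL|$ states in the paper's Moore machine are precisely the promised outcomes.

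Two further gaps. First, your termination argument (``each removal strictly decreases \dots the multiset of outcomes still reachable, which lives in the finite set $\setP_\setL$'') does not work over an infinite vertex set; what actually terminates the process is that a pointwise nonincreasing transfinite sequence of labelings into subsets of a fixed finite set must stabilise at some ordinal. Second, and most importantly, you name but do not address the hardest part of the proof: showing that the fixpoint is nonempty at every vertex (equivalently, that every vertex retains a viable choice). In the paper this occupies Lemmata~\ref{lem:fixpoint}--\ref{lemma:invlimit} and requires three interlocking invariants maintained through \Remove, \Adjust\ and limit stages --- in particular INV3, the existence from each $v$ of a path to a leaf along which all labels are contained in $\lambda(v)$, which is what saves INV2 when an outcome is pruned. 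Saying ``one must argue that an edge is only ever removed when a sibling survives'' is a statement of the obligation, not a proof of it, and without a counterpart of INV3 (or of the \Adjust\ step, which your edge-removal formulation has no analogue of) that obligation cannot be discharged.
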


Let us comment the hypotheses. The first condition means that from each vertex $v$ of the game there is a leaf reachable from $v$; in particular $L$ is not empty. The second condition expresses prefix-independence of the outcome function restricted to plays 
eventually looping in a leaf $l \in \setL$. The last condition means that even if there is an infinite number of leaves, the set of outcomes assigned by $\mu$ to plays eventually looping in $\setL$ is finite. The next example describes a family of games satisfying the conditions of Theorem~\ref{thm:generalgraph}.

\begin{example} \label{ex:Gn}
For each natural number $n \geq 3$, we build a game $G_n$ with $n$ players, $2n$ vertices, $3n$ edges, and $n+1$ outcomes. The set of players is $\Pi = \{1,2, \ldots, n\}$ and the set of vertices is $V = \{v_1,\ldots,v_n, l_1, \ldots l_n \}$ such that $V_i = \{v_i, l_i\}$ for all $i \in \Pi$. The edges are $(v_1, v_2), (v_2,v_3), \ldots, (v_n,v_1)$, and $(v_i, l_i), (l_i,l_i)$ for all $i \in \Pi$. The game $G_4$ is depicted in Figure~\ref{fig:G4}. The set $\setP$ of outcomes is equal to $\{o_1, \ldots, o_n, \bot \}$, and the outcome function is prefix-independent such that $\mu((v_1v_2 \ldots v_n)^\omega) = \bot$ and $\mu(l_i^\omega) = o_i$ for all $i \in \Pi$. Each player $i$ has a preference relation $\prec_i$ satisfying $\bot \prec_i o_{i-1} \prec_i o_i \prec_i o_j$ for all $j \in \Pi \setminus \{i-1,i\}$ (with the convention that $o_0 = o_n$).

\begin{figure}
\centering

\begin{tikzpicture}[shorten >=1pt,node distance=2cm, auto]
 \node[state, scale=0.6] (a1) {$v_1$};
 \node[state, scale=0.6] (a2) [right of = a1] {$v_2$};
 \node[state, scale=0.6] (a3) [below of = a2] {$v_3$};
 \node[state, scale=0.6] (a4) [left of = a3] {$v_4$};
 \node[state, scale=0.6] (x1) [above left of = a1]{$l_1$};
 \node[state, scale=0.6] (x2) [above right of = a2]{$l_2$};
 \node[state, scale=0.6] (x3) [below right of = a3]{$l_3$};
 \node[state, scale=0.6] (x4) [below left of = a4]{$l_4$};

\path[->] (a1) edge node {} (a2)
		(a2) edge node {} (a3)
		(a3) edge node {} (a4)		
		(a4) edge node {} (a1)
		(a1) edge node {} (x1)		
		(a2) edge node {} (x2)
		(a3) edge node {} (x3)		
		(a4) edge node {} (x4)
		(x1) edge [loop above] node {} ()
		(x2) edge [loop above] node {} ()
		(x3) edge [loop below] node {} ()
		(x4) edge [loop below] node {} ();
\end{tikzpicture}
 \caption{Game $G_4$}
 \label{fig:G4}
 \end{figure}
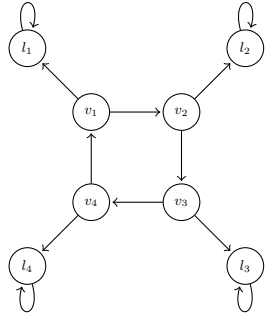

Each game $(G_n,v_1)$ satisfies the hypotheses of Theorem~\ref{thm:generalgraph} with $\setL = \{l_1,\ldots,l_n\}$ and thus has a finite-memory weak SPE. Such a strategy profile $\bar \sigma$ is depicted in Figure~\ref{fig:profileG4} for $n = 4$ (see the thick edges on the unravelling of $G_4$ from the initial vertex $v_1$) and can be easily generalized to every $n \geq 3$. One verifies that this profile is a very weak SPE, and thus a weak SPE by Proposition~\ref{prop:weak-veryweak}. For all $i \in \Pi$, the strategy $\sigma_i$ of player~$i$ is finite-memory with a memory size equal to $n-1$. Intuitively, along $(v_1 \ldots v_n)^\omega$, player~$i$ repeatedly produces one move $(v_i,l_i)$ followed by $n-2$ moves $(v_i,v_{i+1})$. Hence the memory states of the Moore machine for $\sigma_i$ are counters from $1$ to $n-1$. The Moore machine for $\sigma_1$ in the game $(G_4,v_1)$ is depicted in Figure~\ref{fig:MooreG4} (with $M = \{1,2,3\}$, $m_0 = 1$, and the update and next-move functions indicated by the edges).

\begin{figure}
\centering
\begin{tikzpicture}[shorten >=1pt,node distance=.9cm, auto]
  \node (a11) {$v_1$};
  \node (a21) [right of = a11] {$v_2$};
  \node (a31) [right of = a21] {$v_3$};
  \node (a41) [right of = a31] {$v_4$};
  \node (a12) [right of = a41] {$v_1$};
  \node (a22) [right of = a12] {$v_2$};
  \node (a32) [right of = a22] {$v_3$};
  \node (a42) [right of = a32] {$v_4$};
  \node (a13) [right of = a42] {$v_1$};
  \node (a23) [right of = a13] {$v_2$};
  \node (a33) [right of = a23] {$v_3$};
  \node (a43) [right of = a33] {$v_4$};
  \node (a14) [right of = a43] {$v_1$};
  \node (inf) [right of = a14] {};

  \node (x11) [below of = a11]{$l_1$};
  \node (x21) [below of = a21] {$l_2$};
  \node (x31) [below of = a31] {$l_3$};
  \node (x41) [below of = a41] {$l_4$};
  \node (x12) [below of = a12] {$l_1$};
  \node (x22) [below of = a22] {$l_2$};
  \node (x32) [below of = a32] {$l_3$};
  \node (x42) [below of = a42] {$l_4$};
  \node (x13) [below of = a13] {$l_1$};
  \node (x23) [below of = a23] {$l_2$};
  \node (x33) [below of = a33] {$l_3$};
  \node (x43) [below of = a43] {$l_4$};
  \node (x14) [below of = a14] {$l_1$};

\draw[->] (a11) edge node {} (a21);
\draw[->] (a21) edge [thick] node {} (a31);
\draw[->] (a31) edge [thick] node {} (a41);	
\draw[->] (a41) edge node {} (a12);
\draw[->] (a12) edge [thick] node {} (a22)	;	
\draw[->] (a22) edge [thick] node {} (a32);
\draw[->] (a32) edge node {} (a42);	
\draw[->] (a42) edge  [thick]node {} (a13);
\draw[->] (a13) edge  [thick]node {} (a23);
\draw[->] (a23) edge node {} (a33);		
\draw[->] (a33) edge  [thick]node {} (a43);
\draw[->] (a43) edge  [thick] node {} (a14);
		
\draw[dashed] (a14) edge node {} (inf);

\draw[->] (a11) edge [thick] node {} (x11);
\draw[->] (a21) edge node {} (x21);
\draw[->] (a31) edge node {} (x31);
\draw[->] (a41) edge [thick] node {} (x41);
\draw[->] (a12) edge node {} (x12);
\draw[->] (a22) edge node {} (x22);
\draw[->] (a32) edge [thick] node {} (x32);
\draw[->] (a42) edge node {} (x42);
\draw[->] (a13) edge node {} (x13);
\draw[->] (a23) edge [thick] node {} (x23);
\draw[->] (a33) edge node {} (x33);
\draw[->] (a43) edge node {} (x43);
\draw[->] (a14) edge [thick] node {} (x14);
\end{tikzpicture}
\caption{Weak SPE in $(G_4,v_1)$}
\label{fig:profileG4}
\end{figure}
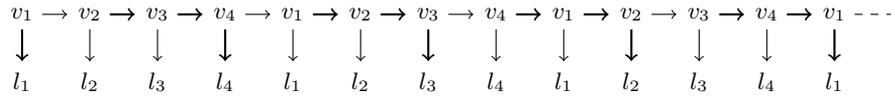

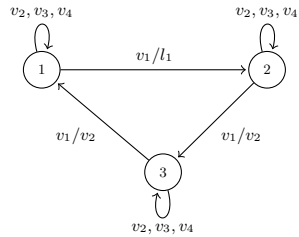
\begin{figure}
\centering

\begin{tikzpicture}[initial text=, auto, shorten >=1pt,node distance=2cm, auto]
 \node[state, scale=0.6] (1) {$1$};
 \node[state, scale=0.6] (2) [right=2.5cm of 1] {$2$};
 \node (fictif) [right=1.25cm of 1] {};
 \node[state, scale=0.6] (3) [below=1cm of fictif] {$3$};

\path[->] (1) edge node [scale=0.7] {$v_1/l_1$} (2)
		  (2) edge node [scale=0.7] {$v_1/v_2$} (3)
		  (3) edge node [scale=0.7] {$v_1/v_2$} (1)
		  (1) edge [loop above] node [scale=0.7] {$v_2,v_3,v_4$} ()
		  (2) edge [loop above] node [scale=0.7] {$v_2,v_3,v_4$} ()
		  (3) edge [loop below] node [scale=0.7] {$v_2,v_3,v_4$} ();
\end{tikzpicture}
\caption{The Moore machine for $\sigma_1$}
\label{fig:MooreG4}
\end{figure}

\end{example}

Let us now proceed to the proof of Theorem~\ref{thm:generalgraph}. Recall that it is enough to prove the existence of a very weak SPE by Proposition~\ref{prop:weak-veryweak}. The proof idea is the following one. Initially, for each vertex $v$, we accept all plays $\rho = hl^{\omega}$ with $h \in \Hist(v)$ and $l \in \setL$ as \emph{potential} plays induced by a very weak SPE in the initialized game $(G,v)$. We thus label each $v$ by the set of outcomes $\mu(l^\omega)$ for such leaves $l$ (recall that $\mu(\rho) = \mu(l^\omega)$ by the second condition of Theorem~\ref{thm:generalgraph}). Notice that this labeling is finite (resp. not empty) by the third (resp. first) condition of the theorem. Step after step, we are going to remove some outcomes from the vertex labelings by a \emph{\Remove} operation followed by an \emph{\Adjust} operation. The \Remove\ operation removes an outcome $o$ from the labeling of a given vertex $v$ when there exists an edge $(v,v')$ for which $o \prec_v o'$ for all outcomes $o'$ that label $v'$. Indeed $o$ cannot be the outcome of a play induced by a very weak SPE since the player who controls $v$ will choose the move $(v,v')$ to get a preferable outcome $o'$. Now it may happen that for another vertex $u$ having $o$ in its labeling, all potential plays induced by a very weak SPE from $u$ with outcome $o$ necessarily cross vertex $v$.  As $o$ has been removed from the labeling of $v$, these potential plays do no longer survive and $o$ will also be removed from the labeling of $u$ by the \Adjust\ operation. Repeatedly applying these two operations converge to a fixpoint for which we will prove non-emptiness (this is the difficult part of the proof, non-emptiness will be obtained by maintaining three invariants, see Lemma~\ref{lem:fixpoint}).
From this fixpoint, for each vertex $v$ and each outcome $o$ of the resulting labeling of $v$, there exists a play $\rho_{v,o}= hl^{\omega}$ with outcome $o$ for some $h \in \Hist(v)$ and $l \in \setL$. We can thus build a very weak SPE $\bar \sigma$ in $(G,v_0)$ as follows.  The construction of $\bar \sigma$ is done step by step: \emph{(i)} initially $\bar \sigma$ is partially defined such that $\out{\bar \sigma}_{v_0} = \rho_{v_0,o_0}$ for some $o_0$; \emph{(ii)} then in the subgame $(\Sub{G}{h},v)$ such that $\out{\Sub{\bar \sigma}{h}}_{v} = \rho_{v,o}$, if the player who controls $v$ chooses the move $(v,v')$ in a one-shot deviation, then there exists $\rho_{v',o'}$ such that $o \nprec_v o'$ by definition of the fixpoint, and we thus extend the construction of $\bar \sigma$ such that $\out{\Sub{\bar \sigma}{hv}}_{v'} = \rho_{v',o'}$.

Let us now go into the details of the proof. For each $l \in \setL$, we denote by $o_l$ the outcome $\mu(l^{\omega})$. Recall that for all $\rho = hl^{\omega}$ we have $\mu(\rho) = o_l$ by the second hypothesis of the theorem. For each $v \in V$, we denote by $\Succ(v)$ the set of successors of $v$ distinct from $v$, that is, the vertices $v' \neq v$ such that $(v,v') \in E$. Notice that the leaves $l$ are the vertices with only one outgoing edge $(l,l)$. Thus, by definition, $\Succ(v) = \emptyset$ for all $v \in \setL$ and $\Succ(v) \neq \emptyset$ for all $v \in V \setminus \setL$.

The labeling $\lambda_{\alpha}(v)$ of the vertices $v$ of $G$ by subsets of $\setP_\setL$ is an inductive process on the ordinal $\alpha$. Initially (step $\alpha = 0$), each $v \in V$ is labeled by:
$$\lambda_0(v) = \{o_l \in \setP_\setL \mid \mbox{there exists a play } hl^{\omega} \mbox{ with } h \in \Hist(v) \mbox{ and } l \in \setL\}.$$
(In particular $\lambda_0(l) = \{o_l\}$ for all $l \in \setL$). By the first hypothesis of the theorem, $\lambda_0(v) \neq \emptyset$. Let us introduce some additional terminology. At step $\alpha$, when there is a path\footnote{By path, we mean a finite path} $\pi$ from $v$ to $v'$ in $G$, we say that $\pi$ is \emph{$(o,\alpha)$-labeled} if $o \in \lambda_{\alpha}(u)$ for all the vertices $u$ of $\pi$. Thus initially, we have a $(o_l,0)$-labeled path from $v$ to $l$ for each $o_l \in \lambda_0(v)$. For $v \in V$, let
$$m_{\alpha}(v) = \mbox{$\max_{\prec_{v}}$} \{ \mbox{$\min_{\prec_{v}}$} \lambda_{\alpha}(v') \mid v' \in \Succ(v)\}$$
with the convention that $m_\alpha(v) = \top$ if $\Succ(v) = \emptyset$ or if $\lambda_{\alpha}(v') = \emptyset$ for all $v' \in \Succ(v)$.\footnote{We suppose that $o \prec_v \top$ for all $o \in \setP_\setL$.} When $m_\alpha(v) \neq \top$, we says that $v' \in \Succ(v)$ \emph{realizes} $m_{\alpha}(v)$ if $m_{\alpha}(v) =\min_{\prec_{v}} \lambda_{\alpha}(v')$.  Notice that even if $\Succ(v)$ could be infinite, there are finitely many sets $\lambda_{\alpha}(v')$ since $\setP_\setL$ is finite. This justifies our use of $\max_{\prec_{v}}$ and $\min_{\prec_{v}}$ operators in the definition of $m_{\alpha}(v)$.

We alternate between applying \Remove\ and \Adjust\ to the current labeling. More formally, we define the labeling $\lambda_\alpha$ inductively\footnote{Note that our definition as written makes non-deterministic choices. This is immaterial for our purposes, but could be determinized by demanding a well-ordering of the vertex set and the outcomes.}. In the following, $\gamma$ is always assumed to be a limit ordinal and $n$ to be a natural number.

\begin{itemize}
\item {\bf Defining $\lambda_{\gamma + 2n + 1}$ via \Remove\ operation}

Let $\alpha := \gamma + 2n + 1$. Test if for some $v \in V$, there exist $o \in \lambda_{\alpha - 1}(v)$ and $v' \in \Succ(v)$ such that
$$o \prec_v o', \mbox{ for all } o' \in \lambda_{\alpha - 1}(v').$$

If such a $v$ exists, then $\lambda_{\alpha}(v) =  \lambda_{\alpha - 1}(v) \setminus \{o\}$, and $\lambda_{\alpha}(u) =  \lambda_{\alpha - 1}(u)$ for the other vertices $u \neq v$. Otherwise $\lambda_{\alpha}(u) =  \lambda_{\alpha - 1}(u)$ for all $u \in V$.







\item {\bf Defining $\lambda_{\gamma + 2n + 2}$ via \Adjust\ operation}

Let $\alpha := \gamma + 2n + 2$. Suppose that $\lambda_{\alpha - 1}(v) =  \lambda_{\alpha - 2}(v) \setminus \{o\}$ at the previous step. For all $u \in V$ such that $o \in \lambda_{\alpha - 1}(u)$, test if there exists a $(o,\alpha - 1)$-labeled path from $u$ to some $l \in \setL$. If yes, then $\lambda_{\alpha}(u) =  \lambda_{\alpha - 1}(u)$, otherwise $\lambda_{\alpha}(u) =  \lambda_{\alpha - 1}(u) \setminus \{o\}$. For all $u \in V$ such that $o \not\in \lambda_{\alpha-1}(u)$, let $\lambda_{\alpha}(u) =  \lambda_{\alpha-1}(u)$.

Suppose that $\lambda_{\alpha-1}(v) =  \lambda_{\alpha-2}(v)$ for all $v \in V$ at the previous step, then $\lambda_{\alpha}(v) =  \lambda_{\alpha-1}(v)$ for all $v \in V$.

\item {\bf Defining $\lambda_\gamma$ via intersection}

Let $\lambda_{\gamma}(v) = \cap_{\beta < \gamma} \lambda_{\beta}(v)$ for all $v \in V$.

\end{itemize}

For each $v$, the sequence $(\lambda_\alpha(v))_{\alpha}$ is nonincreasing (w.r.t.~set inclusion), and thus the sequence $(m_\alpha(v))_{\alpha}$ is nondecreasing (w.r.t.~$\prec_v$). Moreover, the sequence $(\lambda_\alpha)_\alpha$ is nonincreasing w.r.t.~pointwise set inclusion. Thus, there exists some ordinal $\alpha^*$ such that $\lambda_{\alpha^*} = \lambda_\beta$ for all $\beta > \alpha^*$. By inspecting the definition, we see that it suffices to check that $\lambda_{\alpha^*} = \lambda_{\alpha^*+1} = \lambda_{\alpha^*+2}$ in order to see that $\alpha^*$ is a fixed point. If $V$ is finite, such a fixed point is reached after at most $2 |\setP_L| \cdot |V|$ steps. The central challenge is to show that this fixed point is non-empty in each component.

Notice that for all leaves $l \in \setL$ and all steps $\alpha$, we have $\lambda_\alpha(l) = \{o_l\}$.

\begin{lemma} \label{lem:fixpoint}
There exists an ordinal $\alpha^*$ such that
$$\lambda_{\alpha^*}(v) = \lambda_{\alpha^*+1}(v) = \lambda_{\alpha^*+2}(v)  \mbox{ for all } v \in V.$$
Moreover, $\lambda_{\alpha^*}(v) \neq \emptyset$ for all $v \in V$.
\end{lemma}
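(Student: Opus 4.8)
The plan is to establish the fixpoint claim first, then concentrate almost all the effort on the non-emptiness statement, which is the real content of the lemma. For the existence of $\alpha^*$: since the sequence $(\lambda_\alpha)_\alpha$ is nonincreasing with respect to pointwise set inclusion and each $\lambda_0(v) \subseteq \setP_\setL$ is finite, the decreasing ordinal-indexed sequence of sets must stabilize at some ordinal $\alpha^*$; and as noted in the text, since the one-step behaviour of \Remove\ and \Adjust\ depends only on the current labeling, $\lambda_{\alpha^*} = \lambda_{\alpha^*+1} = \lambda_{\alpha^*+2}$ is exactly the condition witnessing that $\alpha^*$ is a genuine fixed point (the next \Remove\ finds no candidate $v$, hence the next \Adjust\ does nothing, and so on transfinitely). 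This part is routine ordinal bookkeeping.

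For non-emptiness I would follow the hint in the preamble and maintain three invariants by transfinite induction on $\alpha$, for every vertex $v$: (I1) $\lambda_\alpha(v) \neq \emptyset$; (I2) for every $o \in \lambda_\alpha(v)$ there is an $(o,\alpha)$-labeled path from $v$ to some leaf $l$ with $o_l = o$ (so the label is genuinely ``realized'' by a surviving potential play); and (I3) some ``no profitable escape'' condition tying $\lambda_\alpha(v)$ to the successor labels — concretely, that $m_\alpha(v) \in \lambda_\alpha(v)$, equivalently that the best outcome the controller of $v$ can force by a one-shot move into a nonempty successor label is itself still an available label at $v$. Invariant (I3) is what makes \Remove\ never able to empty a vertex: if $o = \min_{\prec_v}\lambda_{\alpha-1}(v)$ were removed by \Remove\ via some successor $v'$, one must check that $m_{\alpha-1}(v)$ is still present after the removal, so $\lambda_\alpha(v)\neq\emptyset$; and (I2) is what makes \Adjust\ never able to empty a vertex, since the element $m_{\alpha-1}(v)$ retains an $(m_{\alpha-1}(v),\alpha-1)$-labeled path to a leaf (built by first moving to the realizing successor $v'$ and concatenating its path), hence survives the \Adjust\ test. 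The base case $\alpha=0$ uses hypotheses 1 and 2 directly; limit stages use that all three invariants are preserved under intersection, for which one needs that the relevant witnessing paths are finite (so a path surviving in every $\lambda_\beta$, $\beta<\gamma$, survives in the intersection) — here finiteness of $\setP_\setL$ is used again to ensure the labels themselves stabilize cofinally below $\gamma$ so the path's finitely many vertices are eventually all labeled $o$.

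The successor stages split into the \Remove\ case and the \Adjust\ case. In the \Remove\ case, removing $o$ from $\lambda(v)$: if $o \neq m_{\alpha-1}(v)$ then (I3) is trivially preserved and (I1), (I2) are immediate for $v$ and unchanged elsewhere; if $o = m_{\alpha-1}(v)$, one has to argue this cannot actually happen — \Remove\ only fires on some $o' \in \lambda_{\alpha-1}(v)$ with $o' \prec_v o''$ for all $o'' \in \lambda_{\alpha-1}(v')$ of a successor $v'$, and since $m_{\alpha-1}(v) = \max_{\prec_v}\{\min_{\prec_v}\lambda_{\alpha-1}(v'') : v'' \in \Succ(v)\} \succeq_v \min_{\prec_v}\lambda_{\alpha-1}(v')$, the removed $o'$ satisfies $o' \prec_v \min_{\prec_v}\lambda_{\alpha-1}(v') \preceq_v m_{\alpha-1}(v)$, hence $o' \neq m_{\alpha-1}(v)$; so $m_{\alpha-1}(v)$ always survives \Remove\ at $v$, giving (I1). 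In the \Adjust\ case, a vertex $u$ loses $o$ only if it has no $(o,\alpha-1)$-labeled path to a matching leaf — but by (I2) at step $\alpha-1$ it had an $(o,\alpha-1\text{ or earlier})$-labeled such path; the only way this breaks is that the path passed through $v$ (the vertex \Remove\ just touched) and used the now-deleted $o$; the key point is that $m_{\alpha-1}(u)$ cannot be lost, because one can re-route through the successor realizing $m_{\alpha-1}(u)$, whose label still (by induction and the \Remove\ analysis) contains a path to a leaf. I expect this \Adjust\ step to be the main obstacle: carefully formulating (I2)/(I3) so that they are simultaneously strong enough to push the induction through \emph{and} actually preserved, and handling the interaction where \Adjust\ can cascade removals, is the delicate part. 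The limit-stage preservation of (I2) (finite paths through possibly transfinitely-shrinking labels) is the second most delicate point and is where hypothesis 3 is essential.
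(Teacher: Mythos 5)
Your overall architecture — transfinite induction on $\alpha$ with three invariants, split into a \Remove\ case, an \Adjust\ case and a limit case — is the same as the paper's, your non-emptiness invariant coincides with the paper's INV2, and your observation that the outcome deleted by \Remove\ at $v$ satisfies $o \prec_v m_\alpha(v)$ (hence cannot be $m_\alpha(v)$) is correct and is used in the paper too. However, the two auxiliary invariants you chose are not strong enough to close the induction.

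First, your (I3), namely $m_\alpha(v) \in \lambda_\alpha(v)$, is not preserved by \Remove. When \Remove\ deletes an outcome at a vertex $v'$ that is a \emph{successor} of some $u$, the quantity $m(u)$ can strictly increase while $\lambda(u)$ is unchanged; knowing only that the old value $m_\alpha(u)$ lies in $\lambda_\alpha(u)$ does not give you that the new value $m_{\alpha+1}(u)$ does. The paper's INV1 is exactly the needed strengthening: the whole upper set $\{o \in \lambda_\alpha(v') \mid m_\alpha(u) \preceq_u o\}$ is contained in $\lambda_\alpha(u)$ for every successor $v'$; since this set only loses elements as $m(u)$ increases and as labels shrink, the containment is stable under the induction, and it specializes to your (I3) when needed.

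Second, your (I2) (every $o \in \lambda_\alpha(v)$ is witnessed by an $(o,\alpha)$-labeled path to a leaf) is false at the stages immediately following a \Remove\ — that is precisely why \Adjust\ exists — and, more seriously, it does not pass through limit ordinals. At a limit $\gamma$ the witnessing paths at stages $\beta < \gamma$ may all be different and collectively involve infinitely many vertices, so finiteness of $\setP_\setL$ stabilizes each individual label cofinally but does not produce a single path whose finitely many vertices all retain $o$; a path that is $(o,\beta)$-labeled can cease to be $(o,\beta')$-labeled for $\beta' > \beta$. The paper's INV3 replaces this by a monotone-friendly statement: there is a path from $v$ to a leaf along which every label is a \emph{subset} of $\lambda_\alpha(v)$. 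Once $\lambda(v)$ has stabilized below $\gamma$, the path witnessing INV3 at that stage still witnesses it at $\gamma$ because all labels only shrink. INV3 also yields non-emptiness under \Adjust\ directly (if $\lambda_\alpha(v) = \{o\}$, all labels on the INV3-path equal $\{o\}$ by non-emptiness, so the path is $(o,\alpha)$-labeled and $o$ survives the \Adjust\ test), avoiding the appeal to (I2) at a post-\Remove\ stage, which is where your \Adjust\ argument becomes circular. So the plan needs both invariants reformulated along these lines before the induction goes through.
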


To be able to prove that $\lambda_{\alpha^*}(v) \neq \emptyset$, we introduce three invariants for which we will prove that they are initially true (Lemma~\ref{lemma:invinit}) and remain true after each step~$\alpha$ (Lemmata \ref{lemma:invremove},\ref{lemma:invadjust},\ref{lemma:invlimit}). The non emptiness of $\lambda_{\alpha^*}(v)$ will follow from the second invariant.

\begin{description}
\item[INV1] For $v \in V$, we have for all $v' \in \Succ(v)$ that
$$\{ o \in \lambda_{\alpha}(v') \mid m_{\alpha}(v) \preceq_{v} o \} \subseteq \lambda_{\alpha}(v).$$
In particular, when $m_{\alpha}(v) \neq \top$, for each $v'$ that realizes $m_{\alpha}(v)$, we have
\begin{eqnarray}\label{a(v)}
\lambda_{\alpha}(v') \subseteq \lambda_{\alpha}(v).
\end{eqnarray}
\item[INV2] For $v \in V$, $\lambda_{\alpha}(v) \neq \emptyset$.
\item[INV3] For $v \in V$, there exists a path from $v$ to some $l \in \setL$  such that for all vertices $u$ in this path, $\lambda_{\alpha}(u) \subseteq \lambda_{\alpha}(v)$.
%
\end{description}

\begin{lemma}
\label{lemma:invinit}
All three invariants are true for $\lambda_0$.
\end{lemma}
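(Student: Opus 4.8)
The plan is to verify each of the three invariants for the initial labeling $\lambda_0$, checking them in the order INV2, INV3, INV1, since the later ones can lean on the earlier. Recall that $\lambda_0(v) = \{o_l \mid \text{there is a play } hl^\omega \text{ with } h \in \Hist(v),\ l \in \setL\}$.

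\textbf{INV2 for $\lambda_0$.} This is immediate from hypothesis~(1) of Theorem~\ref{thm:generalgraph}: from every vertex $v$ there is a play of the form $hl^\omega$ with $h \in \Hist(v)$ and $l \in \setL$, hence $o_l \in \lambda_0(v)$ and $\lambda_0(v) \neq \emptyset$.

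\textbf{INV3 for $\lambda_0$.} Fix $v$. By hypothesis~(1) there is a play $hl^\omega$ with $h \in \Hist(v)$; let $\pi$ be the finite path underlying $h$ from $v$ to $l$. I claim every vertex $u$ on $\pi$ satisfies $\lambda_0(u) \subseteq \lambda_0(v)$. Indeed, if $o_{l'} \in \lambda_0(u)$, then there is a play $h'l'^\omega$ with $h' \in \Hist(u)$; prepending the segment of $\pi$ from $v$ to $u$ yields a history $h'' \in \Hist(v)$ with $h''l'^\omega$ a play of $G$ (this is where we use that edges are followed along $\pi$), so $o_{l'} \in \lambda_0(v)$. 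Hence $\pi$ is the required path.

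\textbf{INV1 for $\lambda_0$.} Fix $v$ and $v' \in \Succ(v)$; we must show $\{o \in \lambda_0(v') \mid m_0(v) \preceq_v o\} \subseteq \lambda_0(v)$. Actually the stronger fact $\lambda_0(v') \subseteq \lambda_0(v)$ holds for \emph{every} $v' \in \Succ(v)$, by the same one-step-prepending argument as above: if $o_l \in \lambda_0(v')$ via a play $h'l^\omega$ with $h' \in \Hist(v')$, then since $(v,v') \in E$ the sequence $vh'$ is a history in $\Hist(v)$ and $vh'l^\omega$ is a play, so $o_l \in \lambda_0(v)$. This immediately gives the displayed inclusion in INV1, and in particular the special case~\eqref{a(v)}. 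This completes the verification of all three invariants for $\lambda_0$, and the proof will note that the ``main obstacle'' here is purely bookkeeping — the real difficulty lies in preserving INV1--INV3 under the \Remove\ and \Adjust\ steps (Lemmata~\ref{lemma:invremove}--\ref{lemma:invadjust}), not in the base case, which follows straightforwardly from hypotheses~(1)--(2) and the closure of reachability under edge concatenation.
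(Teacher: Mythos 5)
Your proof is correct and follows essentially the same route as the paper: INV2 from hypothesis~(1), and then the key observation that $\lambda_0(v') \subseteq \lambda_0(v)$ for every $v' \in \Succ(v)$ (by prepending edges to witnessing plays), which yields both INV1 and INV3. The paper states this more tersely but the argument is identical.
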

\begin{proof}
Consider $v \in V$ at the initial step $\alpha = 0$. By hypothesis there is a path from $v$ to some $l \in \setL$. Thus $\lambda_{\alpha}(v) \neq \emptyset$ and INV2 is true. Moreover, for all $v' \in \Succ(v)$, we have $\lambda_{\alpha}(v') \subseteq \lambda_{\alpha}(v)$ by the initial labeling, and thus INV1 and INV3 are also true.
\qed\end{proof}

\begin{lemma}
\label{lemma:invremove}
All three invariants are preserved by \Remove.
\end{lemma}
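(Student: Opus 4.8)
I would prove Lemma~\ref{lemma:invremove} by a careful case analysis on what the \Remove\ operation does at step $\alpha := \gamma + 2n+1$. If \Remove\ does nothing (no eligible $v$ exists), then $\lambda_\alpha = \lambda_{\alpha-1}$ and all three invariants hold trivially by the induction hypothesis. So assume \Remove\ deletes outcome $o$ from the label of some vertex $w$, i.e. $\lambda_\alpha(w) = \lambda_{\alpha-1}(w)\setminus\{o\}$, while $\lambda_\alpha(u) = \lambda_{\alpha-1}(u)$ for all $u \neq w$. The key point motivating the definition of \Remove\ is that there is an edge $(w,w')$ with $o \prec_w o'$ for every $o' \in \lambda_{\alpha-1}(w')$; hence $m_{\alpha-1}(w) \succeq_w \min_{\prec_w}\lambda_{\alpha-1}(w') \succ_w o$, so $o$ is strictly below $m_{\alpha-1}(w)$, and (since labels only shrank at $w$ and the successor labels are unchanged) also $o \prec_w m_\alpha(w)$. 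This observation is what keeps INV1 and INV3 intact at $w$.

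\textbf{INV2.} I must show $\lambda_\alpha(w) \neq \emptyset$, i.e. that $\lambda_{\alpha-1}(w)$ contained something other than $o$. By INV1 at step $\alpha-1$, for any $w'$ realizing $m_{\alpha-1}(w)$ we have $\lambda_{\alpha-1}(w') \subseteq \lambda_{\alpha-1}(w)$, and $\lambda_{\alpha-1}(w') \neq \emptyset$ by INV2; moreover every element of $\lambda_{\alpha-1}(w')$ is $\succeq_w m_{\alpha-1}(w) \succ_w o$, hence distinct from $o$. So $\lambda_{\alpha-1}(w)$ has an element $\neq o$, and $\lambda_\alpha(w)\neq\emptyset$. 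For $u \neq w$ nothing changed, so INV2 holds everywhere.

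\textbf{INV1.} Fix $v \in V$ and $v' \in \Succ(v)$; I need $\{o'' \in \lambda_\alpha(v') \mid m_\alpha(v) \preceq_v o''\} \subseteq \lambda_\alpha(v)$. If $v \neq w$ and $v' \neq w$, both labels are unchanged and $m_\alpha(v) = m_{\alpha-1}(v)$, so this is exactly INV1 at $\alpha-1$. If $v = w$: the only difference from step $\alpha-1$ is that $o$ left $\lambda(w)$; but we showed $o \prec_w m_\alpha(w)$, so $o$ is never in the set on the left-hand side, and the inclusion (which held at $\alpha-1$, possibly for a smaller $m$, hence a priori larger LHS) still holds — here I use that $m_\alpha(w) \succeq_w m_{\alpha-1}(w)$, so the LHS only shrinks. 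If $v' = w$ (and $v \neq w$): the LHS is $\{o'' \in \lambda_{\alpha-1}(w)\setminus\{o\} \mid m_\alpha(v) \preceq_v o''\}$, a subset of the corresponding LHS at step $\alpha-1$ (again $m_\alpha(v) = m_{\alpha-1}(v)$ since $v$'s successor labels could only shrink, which can only raise $m$ — actually $\min_{\prec_v}\lambda_\alpha(w) \succeq_v \min_{\prec_v}\lambda_{\alpha-1}(w)$, so $m_\alpha(v) \succeq_v m_{\alpha-1}(v)$), so it is contained in $\lambda_{\alpha-1}(v) = \lambda_\alpha(v)$ by INV1 at $\alpha-1$. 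The ``in particular'' clause about $v'$ realizing $m_\alpha(v)$ follows as before, since such a $v'$ has $\lambda_\alpha(v') \subseteq \{o'' \mid m_\alpha(v)\preceq_v o''\}$.

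\textbf{INV3.} For $u \neq w$ whose INV3-witnessing path at step $\alpha-1$ avoids $w$, the path still works since those labels are unchanged. The delicate cases are (a) $u = w$, and (b) $u \neq w$ but its old witnessing path passes through $w$. For (a): $w$'s old path $w = u_0, u_1, \dots, u_k = l$ had $\lambda_{\alpha-1}(u_j) \subseteq \lambda_{\alpha-1}(w)$ for all $j$; if none of the $u_j$ for $j \geq 1$ equals $w$, then after removing $o$ we still have $\lambda_\alpha(u_j) = \lambda_{\alpha-1}(u_j) \subseteq \lambda_{\alpha-1}(w)$, and I must upgrade this to $\subseteq \lambda_\alpha(w) = \lambda_{\alpha-1}(w)\setminus\{o\}$, i.e. show $o \notin \lambda_{\alpha-1}(u_j)$. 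This is where I would instead build a \emph{new} path: take the edge $(w,w')$ to a successor with $o \prec_w o'$ for all $o' \in \lambda_{\alpha-1}(w')$ — but $w'$ need not be on a good path. The cleaner route is to use the $w'$ realizing $m_{\alpha-1}(w)$ together with INV1's equation~(\ref{a(v)}): $\lambda_\alpha(w') = \lambda_{\alpha-1}(w') \subseteq \lambda_{\alpha-1}(w)$, and every element of $\lambda_{\alpha-1}(w')$ is $\succeq_w m_{\alpha-1}(w) \succ_w o$, so $o \notin \lambda_{\alpha-1}(w')$, giving $\lambda_\alpha(w') \subseteq \lambda_\alpha(w)$; then prepend the edge $(w,w')$ to INV3's path for $w'$ (at step $\alpha$), whose internal labels are $\subseteq \lambda_\alpha(w') \subseteq \lambda_\alpha(w)$. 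For (b), a similar surgery: where the old path for $u$ first hits $w$, splice in this new tail from $w$; the prefix before $w$ has unchanged labels still $\subseteq \lambda_{\alpha-1}(u)$, and I check $o \notin \lambda_{\alpha-1}(u)$ is not needed because the original path had label $o \notin$ it only if... — here I need: since $\lambda_{\alpha-1}(w)$ appeared on $u$'s old path, $\lambda_{\alpha-1}(w) \subseteq \lambda_{\alpha-1}(u)$, but after the splice I only keep the prefix up to (not including) $w$, so I just need those prefix labels $\subseteq \lambda_\alpha(u) = \lambda_{\alpha-1}(u)$ (true, unchanged) and the new tail labels $\subseteq \lambda_\alpha(u)$; the latter needs $\lambda_\alpha(w) \subseteq \lambda_{\alpha-1}(u)$, which holds since $\lambda_\alpha(w) \subseteq \lambda_{\alpha-1}(w) \subseteq \lambda_{\alpha-1}(u) = \lambda_\alpha(u)$.

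\textbf{Main obstacle.} The routine parts are INV2 and the $v,v' \neq w$ cases of INV1. The real work is INV3 when a vertex's witnessing path runs through $w$: one cannot just keep the old path (its labels may contain the now-deleted $o$, breaking the ``$\subseteq \lambda_\alpha(w)$'' requirement when the path is viewed from $w$), so one must surgically reroute through a successor realizing $m_{\alpha-1}(w)$, and the crux is the inequality $o \prec_w m_{\alpha-1}(w)$ — precisely the firing condition of \Remove\ — which guarantees the deleted outcome is disjoint from the rerouted tail's labels. Getting the monotonicity bookkeeping right ($m_\alpha \succeq_v m_{\alpha-1}$ whenever successor labels only shrink, and labels only shrink at $w$) is the other thing to be careful about.
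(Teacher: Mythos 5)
Your proof is correct and follows essentially the same route as the paper's: the firing condition gives $o \prec_w m_{\alpha-1}(w)$, monotonicity of $m$ handles INV1, a realizer of $m$ handles INV2 and the INV3 path for the modified vertex $w$. The only difference is that your case (b) of INV3 (a path through $w$ from some $u \neq w$) is needlessly rerouted — since $\lambda_\alpha(u) = \lambda_{\alpha-1}(u)$ and all labels on the old path only shrink, the old path already witnesses INV3 — but your bookkeeping there is still valid.
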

\begin{proof}
Consider some $\alpha = \gamma + 2n$ for limit ordinal $\gamma$ and $n \in \mathbb{N}$ such that all invariants hold for $\lambda_\alpha$. If $\lambda_{\alpha+1} = \lambda_{\alpha}$, then trivially, all invariants hold for $\lambda_{\alpha+1}$. Otherwise there exist $v$ and $o$ such that $\lambda_{\alpha+1}(v) = \lambda_{\alpha}(v) \setminus \{o\}$ and $\lambda_{\alpha+1}(u) = \lambda_{\alpha}(u)$ for all $u \neq v$. In particular $v \notin L$. For all $u \in V$, we have $m_{\alpha}(u) \preceq_{u} m_{\alpha +1}(u)$, with the particular case $m_{\alpha}(v) = m_{\alpha+1}(v)$.

\begin{itemize}
\item {\bf \Remove\ cannot violate INV1}. We first consider $u \in V$ such that $u \neq v$. For all $u' \in \Succ(u)$, we have
$$\begin{array}{llll}
\{o' \in \lambda_{\alpha+1}(u') \mid m_{\alpha+1}(u) \preceq_u o' \}\\
\subseteq \{o' \in \lambda_{\alpha}(u') \mid m_{\alpha}(u) \preceq_u o' \} & \mbox{since $\lambda_{\alpha+1}(u') \subseteq \lambda_{\alpha}(u')$} \\
& \mbox{and $m_{\alpha}(u) \preceq_{u} m_{\alpha +1}(u)$,}\\
\subseteq \lambda_{\alpha}(u) & \mbox{by INV1 at step $\alpha$,} \\
= \lambda_{\alpha+1}(u) & \mbox{as $u \neq v$.}
\end{array}$$

Let us turn to vertex $v$. As $o \prec_v m_{\alpha}(v)$, the previous inclusions can be modified as follows. For all $v' \in \Succ(v)$, we now have $\{o' \in \lambda_{\alpha+1}(v') \mid m_{\alpha+1}(v) \preceq_v o' \} \subseteq \{o' \in \lambda_{\alpha}(v') \mid m_{\alpha}(v) \preceq_v o' \} \subseteq \lambda_{\alpha}(v) \setminus \{o\} = \lambda_{\alpha+1}(v)$.

\item {\bf \Remove\ cannot violate INV2}. We only have to show that $\lambda_{\alpha + 1}(v) \neq \emptyset$. As $\Succ(v) \neq \emptyset$\footnote{Recall that $v \not\in \setL$, and that $\Succ(v) \neq \emptyset$ for all $v \in V \setminus \setL$.} and by INV2, we have $m_{\alpha}(v) \neq \top$. Hence there exists $v' \in \Succ(v)$ that realizes $m_{\alpha}(v) = m_{\alpha+1}(v)$. By INV1 and in particular $(\ref{a(v)})$ at step $\alpha +1$, we thus have $\lambda_{\alpha+1}(v') \subseteq \lambda_{\alpha+1}(v)$. As $\lambda_{\alpha+1}(v') = \lambda_{\alpha}(v') \neq \emptyset$, it follows that $\lambda_{\alpha+1}(v) \neq \emptyset$.

\item {\bf \Remove\ cannot violate INV3}.
We first consider $u \neq v$. By INV3, there exists a path $\pi$ from $u$ to some $l \in \setL$ such that $\lambda_{\alpha}(w) \subseteq \lambda_{\alpha}(u)$ for all vertices $w$ in this path. We can keep the path $\pi$ at step $\alpha+1$ since $\lambda_{\alpha+1}(w) \subseteq \lambda_{\alpha}(w)$ for all $w$ in $\pi$ and $\lambda_{\alpha+1}(u) = \lambda_{\alpha}(u)$.

We now consider vertex $v$. Consider again $v' \in \Succ(v)$ that realizes $m_{\alpha+1}(v)$. By $(\ref{a(v)})$, $\lambda_{\alpha+1}(v') \subseteq \lambda_{\alpha+1}(v)$. We know that there exists a path $\pi$ from $v'$ to some $l \in \setL$ such that $\lambda_{\alpha}(w) \subseteq \lambda_{\alpha}(v')$ for all $w$ in $\pi$. This path $\pi$ augmented with the edge $(v,v')$ is the required path for INV3 at step $\alpha+1$ because for all $w$ in $\pi$, we have $\lambda_{\alpha+1}(w) \subseteq \lambda_{\alpha}(w) \subseteq \lambda_{\alpha}(v') = \lambda_{\alpha+1}(v') \subseteq \lambda_{\alpha+1}(v)$.
\end{itemize}
\qed\end{proof}

\begin{lemma}
\label{lemma:invadjust}
All three invariants are preserved by \Adjust.
\end{lemma}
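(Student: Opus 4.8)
The plan is to fix an \Adjust\ step $\alpha = \gamma + 2n + 2$ that follows a \Remove\ step. If \Remove\ left the labeling unchanged there is nothing to do, so I assume $\lambda_{\alpha-1}(v) = \lambda_{\alpha-2}(v)\setminus\{o\}$ for the vertex $v$ and outcome $o$ involved, and I write $U \subseteq V$ for the set of vertices from which \Adjust\ deletes $o$, i.e.\ the $u$ with $o \in \lambda_{\alpha-1}(u)$ admitting no $(o,\alpha-1)$-labeled path to a leaf. Thus $\lambda_\alpha(u) = \lambda_{\alpha-1}(u)\setminus\{o\}$ for $u \in U$ and $\lambda_\alpha(u) = \lambda_{\alpha-1}(u)$ otherwise. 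The one structural fact I would isolate first is a \emph{closure property of $U$}: if $u \in U$, $u' \in \Succ(u)$ and $o \in \lambda_{\alpha-1}(u')$, then $u' \in U$ --- for otherwise an $(o,\alpha-1)$-labeled path from $u'$ to a leaf, prefixed with the edge $(u,u')$, would witness $u \notin U$. Dually, every vertex on an $(o,\alpha-1)$-labeled path lies outside $U$ (a suffix of such a path is again one), so such a path survives \Adjust\ intact.

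Next I would dispatch INV2 and INV1. For INV2, suppose $\lambda_\alpha(v) = \emptyset$; then $v \in U$ and $\lambda_{\alpha-1}(v) = \{o\}$, but INV3 at step $\alpha-1$ yields a path from $v$ to a leaf on which every label is nonempty (INV2 at $\alpha-1$) and contained in $\{o\}$, hence equals $\{o\}$; that path is therefore $(o,\alpha-1)$-labeled, contradicting $v \in U$. For INV1, take $v$, $v' \in \Succ(v)$ and $o'' \in \lambda_\alpha(v')$ with $m_\alpha(v) \preceq_v o''$; then $o'' \in \lambda_{\alpha-1}(v')$, and since $m_{\alpha-1}(v) \preceq_v m_\alpha(v)$ and INV1 held at $\alpha-1$, also $o'' \in \lambda_{\alpha-1}(v)$. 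If $o''$ is not deleted from $v$ we are done; otherwise $v \in U$ and $o'' = o$, but $o \in \lambda_\alpha(v')$ forces $v' \notin U$, giving an $(o,\alpha-1)$-labeled path from $v'$ to a leaf whose extension by $(v,v')$ (legal as $o \in \lambda_{\alpha-1}(v)$) contradicts $v \in U$.

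The hard part will be INV3, because when $o$ is removed from $v$ one must produce a genuinely new witnessing path. If $v \notin U$ then $\lambda_\alpha(v) = \lambda_{\alpha-1}(v)$ and the INV3-path from $v$ at step $\alpha-1$ still works, as labels only shrink. If $v \in U$, I would take the INV3-path $\pi$ from $v$ at step $\alpha-1$; not all vertices of $\pi$ can carry $o$ (else $\pi$ is $(o,\alpha-1)$-labeled, impossible since $v \in U$), so let $u^*$ be its first vertex with $o \notin \lambda_{\alpha-1}(u^*)$. By the closure property the segment of $\pi$ from $v$ to $u^*$ lies entirely in $U$, so $o$ is stripped from each of its vertices while their inclusions in $\lambda_{\alpha-1}(v)$ persist, making the segment $\lambda_\alpha(v)$-decreasing; and $u^* \notin U$ keeps its already $o$-free label, with $\lambda_\alpha(u^*) = \lambda_{\alpha-1}(u^*) \subseteq \lambda_{\alpha-1}(v)\setminus\{o\} = \lambda_\alpha(v)$. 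Appending the INV3-path from $u^*$ at step $\alpha-1$ --- all of whose vertices have labels inside $\lambda_{\alpha-1}(u^*)$, hence avoid $o$, hence avoid $U$ and stay unchanged --- completes a path from $v$ to a leaf all of whose labels are contained in $\lambda_\alpha(v)$. The crux is this splitting at $u^*$ combined with the $U$-closure property that controls the initial segment; the remaining inclusions are routine bookkeeping.
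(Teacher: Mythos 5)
Your proof is correct and follows essentially the same route as the paper's: your ``closure property of $U$'' is precisely the paper's key observation (its equation~(3), that once $o$ is adjusted away from a vertex it is also absent after \Adjust\ from every successor), the INV1 and INV2 arguments coincide, and your INV3 rerouting at the first $o$-free vertex $u^*$ of the old witnessing path is the same splitting idea the paper implements by locating the first vertex that retains $o$ and rerouting from its predecessor. No gaps; the two write-ups differ only in bookkeeping conventions.
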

\begin{proof}
Let all three invariants hold for $\alpha = \gamma + 2n + 1$ for limit ordinal $\gamma$ and $n \in \mathbb{N}$. Then the preceding step was a \Remove\ step. If $\lambda_{\alpha} = \lambda_{\alpha-1}$, then $\lambda_{\alpha+1} = \lambda_\alpha$. Otherwise, there are $v_0 \in V$ and an outcome $o$ such that $\lambda_{\alpha}(v_0) = \lambda_{\alpha-1}(v_0) \setminus \{o\}$ and $\lambda_\alpha(u) = \lambda_{\alpha-1}(u)$ for all $u \neq v_0$.

For all $v \in V$, either $\lambda_{\alpha+1}(v) = \lambda_{\alpha}(v)$ or $\lambda_{\alpha+1}(v) = \lambda_{\alpha}(v) \setminus \{o\}$, and $m_{\alpha}(v) \preceq_v m_{\alpha+1}(v)$.

Consider $v \in V$ such that $o \notin \lambda_{\alpha+1}(v)$ and $o \in \lambda_{\alpha}(v)$. Then
\begin{eqnarray} \label{son}
\forall v' \in \Succ(v),~ o \not\in \lambda_{\alpha+1}(v')
\end{eqnarray}
Otherwise if $o \in \lambda_{\alpha+1}(v')$ for some $v' \in \Succ(v)$, this means that $o$ has not been removed from $\lambda_{\alpha}(v')$, i.e., there exists a $(o,\alpha)$-labeled path from $v'$ to some $l \in \setL$, and thus also from $v$ to $l$ by using the edge $(v,v')$. This is in contradiction with $o$ being removed from $\lambda_{\alpha}(v)$.

\begin{itemize}
\item {\bf \Adjust\ cannot violate INV1}. We first consider $v \in V$ such that $\lambda_{\alpha+1}(v) = \lambda_{\alpha}(v)$. As done for INV1 and \Remove, we have for all $v' \in \Succ(v)$ that $\{o' \in \lambda_{\alpha+1}(v') \mid m_{\alpha+1}(v) \preceq_v o' \} \subseteq \{o' \in \lambda_{\alpha}(v') \mid m_{\alpha}(v) \preceq_v o' \} \subseteq \lambda_{\alpha}(v) = \lambda_{\alpha+1}(v)$.

We now consider $v \in V$ such that $\lambda_{\alpha+1}(v) \neq \lambda_{\alpha}(v)$. Let $v' \in \Succ(v)$. From $(\ref{son})$, we have $\{o' \in \lambda_{\alpha+1}(v') \mid m_{\alpha+1}(v) \preceq_v o' \} \subseteq \{o' \in \lambda_{\alpha}(v') \mid m_{\alpha}(v) \preceq_v o' \} \setminus \{o \} \subseteq \lambda_{\alpha}(v) \setminus\{o\} = \lambda_{\alpha+1}(v)$.

\item {\bf \Adjust\ cannot violate INV2}. Assume that for some $v \in V$, $\lambda_{\alpha+1}(v) = \emptyset$, that is, $\lambda_{\alpha}(v) = \{o\}$. By INV3, there exists a path $\pi$ from $v$ to some $l \in \setL$ such that $\lambda_{\alpha}(u) \subseteq \lambda_{\alpha}(v)$ for all $u$ in $\pi$.  From $\lambda_{\alpha}(v) = \{o\}$ and $\lambda_{\alpha}(u) \neq \emptyset$ (by INV2), we get $\lambda_{\alpha}(u) = \{o\}$ for all such $u$. Therefore, the path $\pi$ from $v$ to $l$ is $(o,\alpha)$-labeled and $o$ cannot be removed from $\lambda_{\alpha}(v)$, showing that $\lambda_{\alpha+1}(v) \neq \emptyset$.

\item {\bf \Adjust\ cannot violate INV3}. Let $v \in V$ and by INV3 take a path $u_1 \ldots u_n$ from $v = u_1$ to some $l = u_n$ with $l \in \setL$ such that $\lambda_{\alpha}(u_i) \subseteq \lambda_{\alpha}(v)$ for all $i$. Either this path is still valid at step $\alpha+1$, or there exists a smallest $i$ such that $o \in \lambda_{\alpha+1}(u_i) = \lambda_{\alpha}(u_i)$, but $o \in \lambda_{\alpha}(v)$ and $o \not\in \lambda_{\alpha+1}(v)$. By minimality of $i$, $o \not\in \lambda_{\alpha+1}(u_j)$ for all $j \leq i-1$.

By the contraposition of (\ref{son}) with $u_{i-1}$ and $u_{i}$, knowing that $o \not\in \lambda_{\alpha+1}(u_{i-1})$, it follows that $o \not\in \lambda_{\alpha}(u_{i-1})$. By INV3 there is a path $\pi$ from $u_{i-1}$ to some $l' \in \setL$ such that for all $w$ in $\pi$, $\lambda_{\alpha}(w) \subseteq \lambda_{\alpha}(u_{i-1})$ $(\subseteq \lambda_{\alpha}(v))$. Notice that $o \not\in \lambda_{\alpha}(w)$ for all these $w$ since $o \not\in \lambda_{\alpha}(u_{i-1})$. The path $\pi'$ obtained by concatenating $u_1 \ldots u_{i-1}$ with $\pi$ is the required path from $v$ for INV3 at step $\alpha +1$. Indeed for all $w'$ in $\pi'$, we have seen that $\lambda_{\alpha}(w') \subseteq \lambda_{\alpha}(v)$ and $o \notin \lambda_{\alpha+1}(w')$. Thus $\lambda_{\alpha+1}(w') \subseteq \lambda_{\alpha}(v) \setminus \{o\}= \lambda_{\alpha+1}(v)$.
\end{itemize}
\qed\end{proof}

\begin{lemma}
\label{lemma:invlimit}
If all three invariants are true for each $\lambda_\beta$, $\beta < \alpha$, $\alpha$ a limit ordinal, then they are true for $\lambda_\alpha$.
\end{lemma}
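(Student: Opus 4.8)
The plan is to exploit that, since $\setP_\setL$ is finite (third hypothesis of Theorem~\ref{thm:generalgraph}), for every vertex $v$ the nonincreasing family $(\lambda_\beta(v))_{\beta<\alpha}$ of subsets of $\setP_\setL$ stabilizes \emph{strictly} below $\alpha$: otherwise $|\lambda_\beta(v)|$ would strictly decrease cofinally in $\alpha$, which is impossible for an $\mathbb{N}$-valued nonincreasing function. Hence there is an ordinal $\beta_v<\alpha$ with $\lambda_\gamma(v)=\lambda_{\beta_v}(v)$ for all $\gamma\in[\beta_v,\alpha)$, and therefore $\lambda_\alpha(v)=\bigcap_{\beta<\alpha}\lambda_\beta(v)=\lambda_{\beta_v}(v)$. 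The essential point is that, the game graph being possibly infinite, there need not be a single ordinal $<\alpha$ at which \emph{all} vertices have stabilized; so each invariant has to be transferred to step $\alpha$ using the vertex-dependent ordinal $\beta_v$, together with the monotonicity facts recorded just before Lemma~\ref{lem:fixpoint}: $\lambda_\alpha\subseteq\lambda_\beta$ pointwise, and $m_\beta(v)\preceq_v m_\alpha(v)$, whenever $\beta<\alpha$.

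For \textbf{INV2}, the equality $\lambda_\alpha(v)=\lambda_{\beta_v}(v)$ and INV2 at step $\beta_v$ give $\lambda_\alpha(v)\neq\emptyset$ at once; from then on $m_\alpha(v)$ is unproblematically defined. For \textbf{INV1}, fix $v$, $v'\in\Succ(v)$ and $o\in\lambda_\alpha(v')$ with $m_\alpha(v)\preceq_v o$; since $\lambda_\alpha(v')\subseteq\lambda_{\beta_v}(v')$ we have $o\in\lambda_{\beta_v}(v')$, and since $m_{\beta_v}(v)\preceq_v m_\alpha(v)\preceq_v o$, INV1 at step $\beta_v$ yields $o\in\lambda_{\beta_v}(v)=\lambda_\alpha(v)$; the clause~(\ref{a(v)}) is then an immediate consequence of this. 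For \textbf{INV3}, take the path $u_1\cdots u_n$ from $v=u_1$ to some leaf $u_n\in\setL$ provided by INV3 at step $\beta_v$, so $\lambda_{\beta_v}(u_i)\subseteq\lambda_{\beta_v}(v)$ for all $i$; this same finite path works at step $\alpha$, since for each $i$ one has $\lambda_\alpha(u_i)\subseteq\lambda_{\beta_v}(u_i)\subseteq\lambda_{\beta_v}(v)=\lambda_\alpha(v)$, the first inclusion because $\beta_v<\alpha$ and the family at $u_i$ is nonincreasing.

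The only genuinely delicate point is the opening observation — per-vertex stabilization strictly below $\alpha$ — which is exactly where the finiteness of $\setP_\setL$ enters; everything afterwards is bookkeeping with the already-established monotonicity of $(\lambda_\beta)_\beta$ and $(m_\beta(v))_\beta$. Note that INV1 goes through because one only needs the \emph{inequality} $m_{\beta_v}(v)\preceq_v m_\alpha(v)$, not an exact computation of $m_\alpha(v)$, and that INV3 goes through because the witness path is \emph{finite}, so the required inclusions pass to the limit without needing a uniform stabilization ordinal. I expect no real obstacle here, only the need to order the three checks so that INV2 at step $\alpha$ is settled before $m_\alpha$ is used.
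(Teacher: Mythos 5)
Your proposal is correct and follows essentially the same route as the paper's proof: per-vertex stabilization of $(\lambda_\beta(v))_{\beta<\alpha}$ strictly below $\alpha$ (forced by finiteness of $\setP_\setL$), giving $\lambda_\alpha(v)=\lambda_{\beta_v}(v)$, after which INV2 is immediate, INV3 is inherited from the witness path at step $\beta_v$, and INV1 follows from $\lambda_\alpha(v')\subseteq\lambda_{\beta_v}(v')$ together with $m_{\beta_v}(v)\preceq_v m_\alpha(v)$. The paper's argument is the same, with your $\beta_v$ playing the role of its vertex-dependent $\gamma$.
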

\begin{proof}
Let $\alpha$ be a limit ordinal, and suppose that the three invariants are true for each ordinal $\beta < \alpha$. Given $v \in V$, as the set $\lambda_{\beta}(v)$ is finite\footnote{This is the place in the proof where finiteness of the number of outcomes is used in a crucial way.} and the sequence $(\lambda_{\beta}(v))_{\beta < \alpha}$ is nonincreasing, there exists some $\gamma < \alpha$ such that $\lambda_{\beta}(v) = \lambda_{\gamma}(v)$ for all $\beta$, $\gamma \leq \beta < \alpha$. Therefore
\begin{eqnarray} \label{eq:gamma}
\lambda_\alpha(v) = \cap_{\beta < \alpha} \lambda_{\beta}(v) = \lambda_{\gamma}(v).
\end{eqnarray}
It immediately follows that INV2 holds at step $\alpha$. To show that INV3 also holds, consider a path $\pi$ from $v$ to some $l \in \setL$ such that $\lambda_{\gamma}(u) \subseteq \lambda_{\gamma}(v)$ for all $u$ in $\pi$ (by INV3 at step $\gamma$). We can take this path $\pi$ for INV3 at step $\alpha$ since for all these $u$, we have $\lambda_{\alpha}(u) \subseteq \lambda_{\gamma}(u) \subseteq \lambda_{\gamma}(v) = \lambda_{\alpha}(v)$. Finally, the first invariant remains true at step $\alpha$ because for all $v' \in \Succ(v)$, we have
$$\begin{array}{llll}
\{ o \in \lambda_{\alpha}(v') \mid m_{\alpha}(v) \preceq_{v} o \} \\
\subseteq \{ o \in \lambda_{\gamma}(v') \mid m_{\gamma}(v) \preceq_{v} o \} & \mbox{since $\lambda_{\alpha}(v') \subseteq \lambda_{\gamma}(v')$ and $m_{\gamma}(v) \preceq_v m_{\alpha}(v)$,} \\
\subseteq \lambda_{\gamma}(v) & \mbox{by INV1 at step $\gamma$,} \\
=  \lambda_{\alpha}(v) & \mbox{by (\ref{eq:gamma}).}
\end{array}$$
\qed\end{proof}

To get Theorem~\ref{thm:generalgraph}, it remains to explain how to build a finite-memory weak SPE $\bar \sigma$ from the fixed point provided by Lemma~\ref{lem:fixpoint}.

\begin{proof}[of Theorem~\ref{thm:generalgraph}]
By Lemma~\ref{lem:fixpoint}, we have a fixed point of \Remove\ and \Adjust\ such that that $\lambda_{\alpha^*}(v) \neq \emptyset$ for all $v \in V$. Since $\lambda_{\alpha^*}$ is unchanged by \Adjust, for all $o \in \lambda_{\alpha^*}(v)$, there is a $(o,\alpha^*)$-labeled path $\pi$ from $v$ to some $l \in \setL$ with $o_l = o$. We denote by $\rho_{v,o}$ the play $\pi l^\omega$:
\begin{eqnarray} \label{eq:rhovl}
\rho_{v,o} = \pi l^\omega.
\end{eqnarray}
(*) Recall that $\mu(\rho_{v,o}) = o_l$, and have in mind that $o_l \in \lambda_{\alpha^*}(u)$ for all vertices $u$ in $\rho_{v,o}$.

The construction of $\bar \sigma$ will be done step by step thanks to a progressive labeling of the histories by outcomes in $\setP_\setL$ and by using the plays $\rho_{v,o}$. This labeling $\kappa : \Hist(v_0) \rightarrow \setP_\setL$ will allow to recover from history $hv$ the outcome $o$ of the play $\out{\Sub{\bar \sigma}{h}}_{v}$ induced by $\bar \sigma$ in the subgame $(\Sub{G}{h},v)$.

We start with history $v_0$ and any $o_0 \in \lambda_{\alpha^*}(v_0)$. Consider $\rho_{v_0,o_0}$ as in (\ref{eq:rhovl}). The strategy profile $\bar \sigma$ is partially built such that $\out{\bar \sigma}_{v_0} = \rho_{v_0,o_0}$. The non empty prefixes $g$ of $\rho_{v_0,o_0}$ are all labeled with $\kappa(g) = o_0$.

At the following steps, we consider a history $h'v'$ that is not yet labeled, but such that $h' = hv$ has already been labeled by $\kappa(hv) = o$.  The labeling of $hv$ by $o$ means that $\bar \sigma$ has already been built to produce the play $\out{\Sub{\bar \sigma}{h}}_{v}$ with outcome $o$ in the subgame $(\Sub{G}{h},v)$, such that $\out{\Sub{\bar \sigma}{h}}_{v}$ is suffix of $\rho_{u,o}$ from some $u$. By (*) we have $o \in \lambda_{\alpha^*}(v)$. As $\lambda_{\alpha^*}$ is invariant under \Remove\ (noting $o \in \lambda_{\alpha^*}(v)$ and $v' \in \Succ(v)$), there exists $o' \in \lambda_{\alpha^*}(v')$ such that
\begin{eqnarray} \label{eq:chosen}
o \nprec_v o'.
\end{eqnarray}
With $\rho_{v',o'}$ as in (\ref{eq:rhovl}), we then extend the construction of $\bar \sigma$ such that $\out{\Sub{\bar \sigma}{h'}}_{v'} = \rho_{v',o'}$, and for each non empty prefix $g$ of $\rho_{v',o'}$, we label $h'g$ by $\kappa(h'g) = o'$ (notice that the prefixes of $h'$ have already been labeled by choice of $h'$). This process is iterated to complete the construction of $\bar \sigma$.

Let us show that the constructed profile $\bar \sigma$ is a very weak SPE in $(G,v_0)$. Consider a history $h' = hv \in \Hist(v_0)$ with $v \in V_i$, and a one-shot deviating strategy $\sigma'_i$ from $\Sub{\sigma_i}{h}$ in the subgame $(\Sub{G}{h},v)$.
Let $v'$ be such that $\sigma'_i(v) = v'$. By definition of $\bar \sigma$, we have $\kappa(hv) = o$ and $\kappa(h'v') =  o'$ such that (\ref{eq:chosen}) holds. Let $\rho = \out{\Sub{\bar \sigma}{h}}_v$ and $\rho' = \out{\Sub{\bar \sigma}{h'}}_{v'}$.  Then $o = \mu(h\rho)$ and $o' = \mu(hv \rho')$ by (*). By (\ref{eq:chosen}), $\sigma'_i$ is not a profitable deviation for player $i$. Hence $\bar \sigma$ is a very weak SPE and thus a weak SPE by Proposition~\ref{prop:weak-veryweak}.

It remains to prove that $\bar \sigma$ is finite-memory by correctly choosing the plays $\rho_{v,o}$ of (\ref{eq:rhovl}). Fix $o \in \setP_\setL$ and consider the set $U_o$ of vertices $v$ such that $o \in \lambda_{\alpha^*}(v)$. Then we choose the plays $\rho_{v,o} = \pi l^\omega$ for all $v \in U_o$, such that the set of associated finite paths $\pi l$ forms a tree. Therefore having $o$ in memory, the required Moore machine can produce positionally each $\rho_{v,o}$ with $v \in U_o$. Hence its set $M$ of states is equal to $\setP_\setL$.
\qed\end{proof}

The next corollary is an easy consequence of Theorem~\ref{thm:generalgraph}. Under the same conditions except perhaps the second one, and when the underlying graph of $G$ is a tree, it guarantees the existence of a weak SPE that is positional.

\begin{corollary} \label{cor:generaltree}
Let $(G,v_0)$ be an initialized game with a subset $\setL \subseteq V$ of leaves\footnote{The existence of leaves $l$ with a unique outgoing edge $(l,l)$ is abusive since the graph is a tree: it should be understood as a unique infinite play from $l$.} such that the underlying graph is a tree rooted at $v_0$. If $(G,v_0)$ satisfies the first and third conditions of Theorem~\ref{thm:generalgraph}, then there exists a positional weak SPE in $(G, v_0)$.
\end{corollary}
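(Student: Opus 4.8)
The plan is to reduce the tree case to the already-established Theorem~\ref{thm:generalgraph} applied to essentially the same game, and then to argue that, because the graph is a tree, the finite-memory weak SPE produced there can in fact be taken positional. First I would observe that the only hypothesis of Theorem~\ref{thm:generalgraph} possibly missing is the second one (prefix-independence of $\mu$ along plays eventually looping in $\setL$); but in a tree rooted at $v_0$, every leaf $l \in \setL$ is reached by a \emph{unique} history $h_l \in \Hist(v_0)$, so the play $h_l l^\omega$ is the unique play of the form $h l^\omega$ with $l \in \setL$. Hence condition~2 is vacuously satisfied once we set $\mu(h_l l^\omega)$ as the ``leaf outcome'' $o_l$: there is nothing to coincide with. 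More carefully, I would note that the construction in the proof of Theorem~\ref{thm:generalgraph} only ever uses condition~2 to justify the equality $\mu(\rho_{v,o}) = o_l$ in the starred remark (*); in the tree setting we may simply \emph{define} $o_l := \mu(h_l l^\omega)$ and carry out the identical fixpoint construction (Lemmata~\ref{lem:fixpoint}--\ref{lemma:invlimit}) verbatim, since those arguments are purely combinatorial in the labelings $\lambda_\alpha$ and never invoke condition~2 again. Thus we obtain a nonempty fixpoint $\lambda_{\alpha^*}$ and a weak SPE $\bar\sigma$ exactly as before.

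Next I would upgrade finite-memory to positional using the tree structure. In the proof of Theorem~\ref{thm:generalgraph} the memory was needed because several distinct histories could reach the same vertex $v$ while carrying different labels $\kappa(hv) = o$, forcing the strategy to remember which $o$ it is ``aiming at.'' In a tree this collision cannot occur: every vertex $v$ is reached by a unique history $h_v$, so the labeling $\kappa$ assigns to each vertex a single well-defined outcome, and the continuation $\rho_{v, \kappa(v)}$ prescribed by $\bar\sigma$ depends only on $v$. Concretely, I would define, for each vertex $v$ appearing on some chosen play $\rho_{u,o}$ built during the construction, the value $\bar\sigma$ at the (unique) history $h_v v$ to be the next vertex along $\rho_{v, \kappa(v)}$ — and for vertices not yet assigned (those off every constructed play, which in a tree are exactly the vertices not reachable consistently with $\bar\sigma$), pick any $o \in \lambda_{\alpha^*}(v)$, take the corresponding $(o,\alpha^*)$-labeled path to a leaf guaranteed by condition~3's fixpoint, and follow it positionally. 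Since each $v$ is visited along at most one branch of the tree, this yields a well-defined positional strategy for each player, and it coincides with the strategy profile $\bar\sigma$ of Theorem~\ref{thm:generalgraph}, hence is a (very) weak SPE; by Proposition~\ref{prop:weak-veryweak} it is a weak SPE.

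The only mild subtlety — the step I expect to require the most care in a full write-up — is bookkeeping the positional assignment coherently: one must check that when the construction of $\bar\sigma$ reaches two different histories $h'v'$ and $h''v''$ with $v' = v''$, the prescribed next moves agree. In an arbitrary game this is false and is precisely what forces memory; in a tree it holds because $h' = h'' = h_{v'}$, the unique history to $v'$. So the argument is: (i) unique histories $\Rightarrow$ condition~2 free and $\kappa$ well-defined on vertices; (ii) run the Theorem~\ref{thm:generalgraph} construction, which never consulted condition~2 past (*); (iii) read off a positional profile since the prescribed continuation at each vertex is determined by the vertex alone. I would close by remarking that the leaf convention in the footnote — reading ``$(l,l)$'' as a designated unique infinite play from $l$ — is exactly what makes $o_l = \mu(h_l l^\omega)$ meaningful and the whole reduction transparent.
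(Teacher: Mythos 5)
Your argument is correct and essentially the paper's: the paper handles the possible failure of condition~2 by replacing $\mu$ with a $\mu'$ that assigns to every suffix of the unique root-to-leaf play $\pi l^\omega$ the outcome $\mu(\pi l^\omega)$ --- exactly your $o_l := \mu(h_l l^\omega)$ --- then invokes Theorem~\ref{thm:generalgraph} as a black box and observes that any strategy profile on a tree is automatically positional because each vertex is reached by a unique history. (Your aside that condition~2 is ``vacuously satisfied'' is imprecise, since it quantifies over histories from \emph{every} vertex $v$ and hence constrains all suffixes of $h_l l^\omega$, but your subsequent redefinition of $o_l$ deals with precisely this, just as the paper's $\mu'$ does.)
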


\begin{proof}
If the second condition of Theorem~\ref{thm:generalgraph} is not satisfied, we replace the outcome function $\mu$ by a new function $\mu'$ defined as follows. For all plays $l^\omega$, with $l \in \setL$, there is a unique path $\pi$ from $v_0$ to $l$ as the underlying graph is a tree. For all suffixes $\rho$ of $\pi l^\omega$, we let $\mu'(\rho) = \mu(\pi l^\omega)$. For all the remaining plays $\rho$, we let $\mu'(\rho) = \mu(\rho)$. With the new function $\mu'$, the game $(G,v_0)$ now satisfies all the conditions of Theorem~\ref{thm:generalgraph} and has thus a weak SPE $\bar \sigma$ with respect to $\mu'$. It is easy to see that $\bar \sigma$ is also a weak SPE with respect to $\mu$. Notice that this profile is necessarily positional as the underlying graph is a tree.
\qed\end{proof}

In the next two sections, we present two large families of games for which there always exists a weak SPE. We will explain how these results are obtained from Theorem~\ref{thm:generalgraph} and its Corollary~\ref{cor:generaltree}. Before that, we demonstrate the argument establishing Theorem~\ref{thm:generalgraph} on the game $G_4$ as introduced in Example \ref{ex:Gn}.

\begin{example}
Let us describe the inductive process for the game $G_4$ of Figure~\ref{fig:G4} (Page \pageref{fig:G4}). For all $i \in \Pi$ and all steps $\alpha$, we have $\lambda_{\alpha}(l_i) = \{o_i\}$. Table~\ref{tab:steps} indicates the different steps until reaching $\alpha^*$ for the vertices $v_i$, $i \in \Pi$, with $\setP_\setL = \{o_1,o_2,o_3,o_4\}$. For instance, at step~$1$, \Remove\ removes $o_4$ from $\lambda_\alpha(v_1)$ because $o_4 \prec_1 o'$ for all $o' \in \lambda_\alpha(l_1) = \{o_1\}$. At step~2, \Adjust\ removes no outcome. For $v = v_1$ and $o \in \lambda_\alpha(v_1)$, the plays $\rho_{v,o}$ are:
$$\rho_{v_1,o_1} = v_1 l_1^{\omega}, \quad \rho_{v_1,o_2} = v_1 v_2 l_2^{\omega}, \quad \rho_{v_1,o_3} = v_1 v_2 v_3 l_3^{\omega}.$$
The other vertices $v \neq v_1$ have similar plays $\rho_{v,o}$.

\begin{table}
\begin{center}
$\begin{array}{|c|c|c|c|c|}
\hline
\alpha & \lambda_\alpha(v_1) & \lambda_\alpha(v_2) & \lambda_\alpha(v_3) & \lambda_\alpha(v_4) \\
\hline
0 & \setP_\setL & \setP_\setL & \setP_\setL & \setP_\setL \\
1 & \setP_\setL\setminus \{o_4\} & \setP_\setL & \setP_\setL & \setP_\setL \\
2 & \setP_\setL\setminus \{o_4\} & \setP_\setL & \setP_\setL & \setP_\setL \\
3 & \setP_\setL\setminus \{o_4\} & \setP_\setL \setminus \{o_1\} & \setP_\setL & \setP_\setL \\
4 & \setP_\setL\setminus \{o_4\} & \setP_\setL \setminus \{o_1\} & \setP_\setL & \setP_\setL \\
5 & \setP_\setL\setminus \{o_4\} & \setP_\setL \setminus \{o_1\} & \setP_\setL \setminus \{o_2\} & \setP_\setL \\
6 & \setP_\setL\setminus \{o_4\} & \setP_\setL \setminus \{o_1\} & \setP_\setL \setminus \{o_2\} & \setP_\setL \\
7 & \setP_\setL \setminus \{o_4\} & \setP_\setL \setminus \{o_1\} & \setP_\setL \setminus \{o_2\} & \setP_\setL \setminus \{o_3\} \\
\alpha^* = 8 & \setP_\setL \setminus \{o_4\} & \setP_\setL \setminus \{o_1\} & \setP_\setL \setminus \{o_2\} & \setP_\setL \setminus \{o_3\} \\
\hline
\end{array}$
\end{center}
\caption{The different steps until reaching a fixed point for game $G_4$}
\label{tab:steps}
\end{table}

In the case of game $(G_4,v_1)$, the construction of a weak SPE $\bar \sigma$, as described in the previous proof, leads to the strategy profile of Figure~\ref{fig:profileG4}. Indeed, the construction of $\bar \sigma$ begins with history $v_1$ and $\rho_{v_1,o_1} = v_1 l_1^\omega$. At the next step, we consider history $v_1v_2$ and $\rho_{v_2,o_4} = v_2v_3v_4l_4^\omega$ such that $o_1 \nprec_1 o_4$, aso. Notice that the previous proof states a memory size equal to~$4$ for $\bar \sigma$ whereas Figure~\ref{fig:MooreG4} depicts a Moore machine for $\bar \sigma$ with a better memory size equal to~$3$.
\end{example}

\section{First application} \label{sec:first}

In this section, we begin with the first application of the results of the previous section (more particularly Corollary~\ref{cor:generaltree}): when an initialized game has an outcome function with finite range, then it always has a weak SPE.

\begin{theorem}  \label{thm:infinitetree}
Let $(G,v_0)$ be an initialized game such that the outcome function has finite range. Then there exists a weak SPE in $(G,v_0)$.
\end{theorem}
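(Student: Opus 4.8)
The plan is to reduce Theorem~\ref{thm:infinitetree} to Corollary~\ref{cor:generaltree} by unravelling the game into a tree and truncating plays at the first point where their outcome is ``decided''. First I would unravel $(G,v_0)$ from $v_0$, obtaining an initialized game $(T, v_0)$ whose underlying graph is a tree and which has exactly the same plays, histories, strategies and weak SPEs as $(G,v_0)$ (a strategy profile is a weak SPE in $(G,v_0)$ iff the corresponding profile in the unravelling is, since subgames of $(G,v_0)$ with history $hv$ correspond to subtrees rooted at the node $hv$). So it suffices to produce a positional weak SPE in $(T,v_0)$; note that positionality on the tree just means the prescribed move depends only on the current node, which is automatic.

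The key idea is to use the finite range of $\mu$ to introduce leaves. Let $R = \{\mu(\rho) : \rho \in \Plays(v_0)\}$, which is finite by hypothesis. For each outcome $o \in R$, one would like to pick, for every node $hv$ of the tree from which some play with outcome $o$ is still achievable, a play witnessing this; the trouble is that $\mu$ is not prefix-independent, so one cannot simply declare all such nodes to carry outcome $o$. Instead I would proceed as follows: for each play $\rho$ of $T$, since $R$ is finite there is a ``stabilisation point'' after which the set of outcomes achievable from the current node along continuations that agree with $\rho$ no longer shrinks — more carefully, I would build, together with the tree, a subset $\setL$ of nodes by the following closure-type construction. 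Process the tree breadth-first; maintain for each already-processed node $t$ a label $A(t) \subseteq R$ of outcomes still ``available'' below $t$; at a node $t$, if $A(t)$ is a singleton $\{o\}$ and $o = \mu(\rho)$ for the unique play $\rho$ obtained by following any fixed continuation from $t$ that realises $o$, declare $t$ a leaf carrying self-loop outcome $o$. The point is that because the range is finite, along every infinite path the available-outcome set is non-increasing and hence eventually constant, so eventually it becomes a singleton and a leaf is declared; this guarantees condition~$(1)$ of Theorem~\ref{thm:generalgraph} (a leaf reachable from every node) and condition~$(3)$ (the set $\setP_\setL$ of leaf-outcomes is a subset of the finite set $R$). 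Condition~$(2)$, prefix-independence on plays eventually looping in a leaf, is exactly the content of Corollary~\ref{cor:generaltree}, which does not require it — one sets $\mu'(\rho) = \mu(\pi l^\omega)$ along the unique path $\pi$ to each leaf $l$, and a weak SPE for $\mu'$ is a weak SPE for $\mu$ because the truncation was chosen so that $\mu'$ and $\mu$ agree on every play that passes through a leaf, and every play either passes through a leaf or is unaffected.

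The delicate point — and the main obstacle — is to set up the leaf-declaration so that the modified game $(T', v_0)$ with leaves $\setL$ genuinely satisfies conditions $(1)$ and $(3)$ of Theorem~\ref{thm:generalgraph} while remaining equivalent to $(G,v_0)$ for the purpose of weak SPE existence. Concretely: one must verify that every node of $T'$ either is, or has a descendant that is, a leaf (so that $\setL$ is reached from everywhere) — this is where finiteness of the range is used, via the eventual stabilisation of available-outcome sets along paths; and one must check that truncating the tree at leaves does not lose any relevant deviations, i.e. that a very weak SPE of $(T', v_0)$ (which exists and is positional by Corollary~\ref{cor:generaltree}) extends to / already is a very weak SPE of the full game, using that below a leaf carrying $\{o\}$ the outcome is forced to be $o$ no matter what any single player does in a one-shot deviation, so no one-shot deviation below a leaf can be profitable. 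Once these verifications are in place, Corollary~\ref{cor:generaltree} applied to $(T', v_0)$ yields a positional weak SPE, which transports back to a weak SPE of $(G,v_0)$, completing the proof.
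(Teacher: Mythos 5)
There is a genuine gap, and it sits exactly at the point you flag as ``the delicate point'': your leaf-declaration rule never needs to fire. You argue that along every infinite path the set $A(t)$ of outcomes still achievable below the current node is non-increasing, hence (by finiteness of the range) eventually constant, ``so eventually it becomes a singleton''. Eventually constant does not imply eventually a singleton. Consider the full binary tree over $\{a,b\}$ with two outcomes, where $\mu(\rho)=o_0$ if $\rho$ contains finitely many $b$'s and $\mu(\rho)=o_1$ otherwise: from every node both outcomes remain achievable, so $A(t)=\{o_0,o_1\}$ everywhere, no leaf is ever declared, condition $(1)$ of Theorem~\ref{thm:generalgraph} fails for your $\setL=\emptyset$, and the reduction to Corollary~\ref{cor:generaltree} collapses. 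The same example also undercuts your final verification step, which relies on the outcome being \emph{forced} below a leaf so that one-shot deviations there are harmless; in general the outcome is never forced.

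The paper's proof has the same architecture as yours (unravel, introduce leaves, truncate, apply Corollary~\ref{cor:generaltree}, recombine via Lemma~\ref{lem:cutting}), but it replaces ``the outcome is determined below $t$'' by the weaker property that some single outcome is \emph{densely achievable} below $t$, i.e.\ reachable from every history extending $t$. This is what finiteness (indeed countability) of the range actually yields, via the Baire Category Theorem: since $lC^\omega=\bigcup_{o}\bigl(lC^\omega\cap\mu^{-1}(\{o\})\bigr)$ is a finite union, some $\mu^{-1}(\{o_0\})$ must be dense in some subtree, whose root becomes a leaf labelled $o_0$ (Lemma~\ref{lemma:bairecategory}). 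The subgame below such a leaf is then handled not by triviality but by Lemma~\ref{lem:dense}: density lets one build a profile all of whose subgame plays, including those reached by one-shot deviations, have outcome $o_0$. In the counterexample above, $\mu^{-1}(\{o_1\})$ is dense in the whole tree, so the root itself is a leaf and Lemma~\ref{lem:dense} already finishes. To repair your proof you would need to import exactly this density/Baire step in place of the singleton-stabilisation claim.
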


Let us comment this theorem. \emph{(i)} Kuhn's theorem~\cite{kuhn53} states that there always exist an SPE in initialized games played on a \emph{finite tree} (notice that in this particular case, the existence of a weak SPE is equivalent to the existence of an SPE). Theorem~\ref{thm:infinitetree} can be seen as a  generalization of Kuhn's theorem: if we keep the outcome set finite, all initialized games (regardless of the underlying graph and the player set) have weak SPE. \emph{(ii)} The next theorem is proved in~\cite{Flesch10} for outcome functions $\mu = (\mu_i)_{i \in \Pi}$ as presented in Example~\ref{ex:classical} and has strong relationship with Theorem~\ref{thm:infinitetree}. Recall that a payoff function $\mu_i : \Plays \to \mathbb R$ is \emph{lower-semicontinuous} if whenever a sequence of plays $(\rho_n)_{n \in \mathbb N}$ converges to a play $\rho = \lim_{n \rightarrow \infty} \rho_n$, then $\liminf_{n \rightarrow \infty} \mu_i(\rho_n) \geq \mu_i(\rho)$.


\begin{theorem}[\cite{Flesch10}]
Let $(G,v_0)$ be an initialized game with a finite set $\Pi$ of players and an outcome function $\mu = (\mu_i)_{i \in \Pi}$ such that each $\mu_i : \Plays \to \mathbb R$ has finite range and is lower-semicontinuous. Then there exists an SPE in $(G,v_0)$.
\end{theorem}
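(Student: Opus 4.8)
The statement to prove is the theorem from \cite{Flesch10}: an initialized game $(G,v_0)$ with a finite player set $\Pi$ and an outcome function $\mu=(\mu_i)_{i\in\Pi}$, where each $\mu_i:\Plays\to\mathbb R$ has finite range and is lower-semicontinuous, admits an SPE. The plan is to derive it from Theorem~\ref{thm:infinitetree} (which gives only a \emph{weak} SPE for finite-range outcome functions) by upgrading the weak SPE to a genuine SPE using lower-semicontinuity. First I would unravel $(G,v_0)$ into its infinite tree, so that a strategy profile is positional and the outcome function is just a labelling of branches; since the range of $\mu$ is finite, Theorem~\ref{thm:infinitetree} yields a weak SPE $\bar\sigma$ in $(G,v_0)$.

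The core step is to argue that, under lower-semicontinuity, \emph{weak SPE implies SPE}. I would proceed by contradiction: suppose some player $i$ has a profitable deviation $\sigma'_i$ from $\Sub{\sigma_i}{h}$ in some subgame $(\Sub{G}{h},v)$, i.e. $\mu_i(\out{\Sub{\bar\sigma}{h}}_v) < \mu_i(\out{\sigma'_i,\bar\sigma_{-i|h}}_v)=:c$. Because $\bar\sigma$ is already a weak SPE, no \emph{finitely} deviating strategy is profitable, so $\sigma'_i$ must differ from $\Sub{\sigma_i}{h}$ on infinitely many histories. I would then consider the sequence of truncations $\sigma'^{(n)}_i$ that agree with $\sigma'_i$ on the first $n$ levels of the unravelled tree and revert to $\Sub{\sigma_i}{h}$ afterwards. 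Each $\sigma'^{(n)}_i$ is a finitely deviating strategy, hence not profitable, so $\mu_i(\out{\sigma'^{(n)}_i,\bar\sigma_{-i|h}}_v)\le \mu_i(\out{\Sub{\bar\sigma}{h}}_v) < c$. But the plays $\out{\sigma'^{(n)}_i,\bar\sigma_{-i|h}}_v$ share longer and longer prefixes with the limit play $\rho^\ast:=\out{\sigma'_i,\bar\sigma_{-i|h}}_v$, so they converge to $\rho^\ast$ in the natural topology on $V^\omega$, and $\mu_i(\rho^\ast)=c$. Lower-semicontinuity of $\mu_i$ then gives $c=\mu_i(\rho^\ast)\le \liminf_n \mu_i(\out{\sigma'^{(n)}_i,\bar\sigma_{-i|h}}_v) \le \mu_i(\out{\Sub{\bar\sigma}{h}}_v) < c$, a contradiction. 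Hence $\bar\sigma$ has no profitable deviation in any subgame and is an SPE.

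One point to be careful about: reverting $\sigma'_i$ to $\Sub{\sigma_i}{h}$ after level $n$ is only literally a "finite number of histories" deviation if we identify the strategy with its behaviour on the tree reachable under $\bar\sigma_{-i}$ plus the one deviating branch — off that branch the two strategies may still differ in infinitely many places, but those histories are never visited along the induced play, so by Proposition~\ref{prop:weak-veryweak} (weak $=$ very weak SPE) it suffices that no one-shot deviation — and by a routine induction no finite composition of one-shot deviations — is profitable, which is exactly what applies to $\sigma'^{(n)}_i$ restricted to the reachable part. I would phrase the truncation argument directly in terms of finitely many one-shot deviations to avoid this subtlety.

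The main obstacle is this last bookkeeping — making precise that the truncated deviations are covered by the weak-SPE property and that the induced plays genuinely converge to the limit play with the limit value realising $c$. The convergence itself is immediate once one fixes the unravelling (prefixes of length $\ge n$ stabilise), and finiteness of the range of $\mu_i$ is not actually needed for the upgrade step — it is only used to invoke Theorem~\ref{thm:infinitetree} in the first place. Everything else (transitivity of $\prec_i$, the definition of profitable deviation, the fact that $\bar\sigma$ is a very weak SPE) is quoted directly from the preliminaries.
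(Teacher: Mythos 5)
Your proposal follows essentially the same route as the paper: invoke Theorem~\ref{thm:infinitetree} to obtain a weak SPE for the finite-range outcome function, then use lower-semicontinuity to upgrade it to an SPE. The only difference is that the paper simply cites \cite{BBMR15} for the fact that a weak SPE is an SPE under lower-semicontinuity, whereas you supply that argument yourself (correctly, including the right direction of the $\liminf$ inequality and the care needed to make the truncated deviations literally finitely-deviating along the deviating branch).
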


\noindent As every weak SPE is an SPE in the case of lower-semicontinuous payoff functions $\mu_i$~\cite{BBMR15}, we recover the previous result with our Theorem~\ref{thm:infinitetree}. Even if it is not explicitly mentioned in~\cite{Flesch10}, a close look at the details of the proof shows that the authors first show the existence of a weak SPE (without the hypothesis of lower-semicontinuity) and then show that it is indeed an SPE (thanks to this hypothesis). The first part of their proof could be replaced by ours, which is simpler: we remove outcomes from the sets $\lambda_{\alpha}(v)$ (see the proof of Theorem~\ref{thm:generalgraph}) whereas plays are removed in the inductive process of \cite{Flesch10}. 

\subsection{Intermediate results}
The proofs of Theorem~\ref{thm:infinitetree} in this section and Theorem~\ref{thm:finitegraph} in the next section require several intermediate results that we now describe. We begin with the next lemma where the set $\mu^{-1}(\{o\})$, with $o \in \setP$, is said to be \emph{dense in $(G,v_0)$} if for all $h \in \Hist(v_0)$, there exists $\rho$ such that $h\rho$ is a play with outcome $\mu(h\rho) = o$.

\begin{lemma} \label{lem:dense}
Let $(G,v_0)$ be an initialized game. If for some $o \in \setP$, the set $\mu^{-1}(\{o\})$ is dense in $(G,v_0)$, then there exists a weak SPE with outcome $o$ in $(G,v_0)$.
\end{lemma}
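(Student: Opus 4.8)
The plan is to reduce Lemma~\ref{lem:dense} to Theorem~\ref{thm:generalgraph} (or rather to Corollary~\ref{cor:generaltree}, since we will work on a tree) by unravelling $(G,v_0)$ and carefully choosing a set of leaves. First I would unravel the game graph from $v_0$, obtaining an initialized game $(G',v_0)$ whose underlying graph is an infinite tree and which is, for the purposes of weak SPE, equivalent to $(G,v_0)$ (a weak SPE in the unravelling transfers back to $(G,v_0)$ because strategies in the original game correspond to strategies in the tree via the canonical projection, and one-shot/finitely-deviating strategies are preserved). So it suffices to produce a weak SPE with outcome $o$ in $(G',v_0)$.

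\textbf{Choosing the leaves.} The key step is to pick, inside the tree, a family $\setL$ of nodes each of which is declared a leaf (i.e. replaced by a self-loop producing a single infinite play of outcome $o$), in such a way that (a)~every node of the tree has some node of $\setL$ below it, and (b)~the outcome of the looping play at each leaf is $o$. Condition (b) is where density is used: for a node $hv$ of the tree, density of $\mu^{-1}(\{o\})$ in $(G,v_0)$ gives a continuation $\rho$ with $\mu(hv\rho)=o$; I would use this to carve out, below every node, a branch along which $o$ is ``reachable'', and place a leaf on that branch. Concretely, one natural choice: enumerate the nodes of the tree (the set of histories is countable) and greedily grow a subtree $T$ of $(G',v_0)$ together with a set $\setL$ of distinguished nodes, so that $T$ contains $v_0$, every node of $T$ has a descendant in $\setL$, and for each $\ell\in\setL$ there is an infinite $o$-outcome continuation from $\ell$ that we freeze as the self-loop play. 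Because the range of $\mu$ need not be finite here, I would instead appeal directly to Corollary~\ref{cor:generaltree}: its hypotheses are only the first condition (a leaf below every vertex) and the third (finiteness of $\setP_\setL$); after the construction $\setP_\setL=\{o\}$ is a singleton, so condition~3 holds trivially, and the second condition of Theorem~\ref{thm:generalgraph} is not needed. Corollary~\ref{cor:generaltree} then yields a positional weak SPE $\bar\sigma$ in the modified tree game.

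\textbf{Transferring the outcome and concluding.} Since the only outcome attached to any leaf-play is $o$ and the modified game has the property that every node reaches a leaf, any play consistent with $\bar\sigma$ that eventually enters $\setL$ has outcome $o$; in particular the induced play $\out{\bar\sigma}_{v_0}$ has outcome $o$. (If the original outcome function assigns outcomes to plays that never hit $\setL$, these are irrelevant to $\bar\sigma$'s behaviour along the equilibrium path, but one must check that allowing players to escape to such plays does not create a profitable one-shot deviation — this is handled exactly as in the proof of Theorem~\ref{thm:generalgraph}, where the fixpoint labelling guarantees that for every move $(v,v')$ there is a surviving $o'\in\lambda_{\alpha^*}(v')$ with $o\nprec_v o'$.) Finally I would translate $\bar\sigma$ back to $(G,v_0)$ along the unravelling projection to obtain a weak SPE of $(G,v_0)$ with outcome $o$.

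\textbf{Main obstacle.} The delicate point is the construction of the leaf set $\setL$: one must simultaneously ensure that every node of the (unravelled) tree has a leaf below it — so that the first hypothesis of Corollary~\ref{cor:generaltree} holds — while keeping all leaf-plays of outcome exactly $o$, using only density (which a priori gives, for each history, \emph{one} continuation with outcome $o$, not a whole subtree). Making these infinitely many local choices coherent, and verifying that cutting the tree at the chosen leaves does not destroy reachability of a leaf from nodes that were ``passed over'', is the technical heart of the argument; a transfinite or greedy enumeration of histories, freezing an $o$-branch the first time a node would otherwise be orphaned, should do it.
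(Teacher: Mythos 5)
Your reduction to Corollary~\ref{cor:generaltree} has a genuine gap at the point where you ``declare'' a node $\ell$ of the unravelled tree to be a leaf and replace the whole subtree below it by a single frozen continuation of outcome $o$. This produces a \emph{different} game: in the original game the players acting along (and off) that frozen branch still have choices, and a weak SPE must prescribe moves and be immune to one-shot deviations at \emph{every} history, including those that enter the subtree below $\ell$ and then leave the frozen branch. The profile you obtain from Corollary~\ref{cor:generaltree} is not even defined there, and the fixpoint argument of Theorem~\ref{thm:generalgraph} that you invoke only controls deviations inside the modified game, not deviations into the truncated subtrees. The paper's tool for undoing such a truncation is Lemma~\ref{lem:cutting}, but its hypothesis is precisely that each subgame $(\Sub{G}{h},\ell)$ rooted at a leaf already has a weak SPE with the designated outcome --- and since $\mu^{-1}(\{o\})$ is dense in every such subgame, that hypothesis is exactly an instance of the statement you are trying to prove. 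So the reduction is either circular or leaves an unbounded recursion below the leaves; unwinding that recursion history by history is, in effect, the direct argument you were trying to avoid. (Your secondary worry about ``orphaned'' nodes is real but minor by comparison; density gives an $o$-continuation below \emph{every} history, so leaf-reachability is not the hard part.)

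The paper's proof is far more elementary and bypasses Theorem~\ref{thm:generalgraph} entirely: process the histories of $(G,v_0)$ (shortest unmarked first), and at each yet-unmarked history $hv$ use density to pick a continuation $\rho$ with $\mu(h\rho)=o$; define $\bar\sigma$ to follow $\rho$ in the subgame $(\Sub{G}{h},v)$ and mark all non-empty prefixes of $h\rho$. The resulting profile induces a play of outcome $o$ in \emph{every} subgame, so any one-shot deviation also leads to a play of outcome $o$ and cannot be profitable because $\prec_i$ is irreflexive; Proposition~\ref{prop:weak-veryweak} then upgrades this very weak SPE to a weak SPE. The observation to take away is that when all reachable continuation values are equal to $o$, no labelling or fixpoint machinery is needed --- irreflexivity of the preferences does all the work.
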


\begin{proof}
The construction of a very\footnote{As already done before, we apply Proposition~\ref{prop:weak-veryweak}. It will be the case in the sequel of the article without mentioning anymore this proposition.} weak SPE $\bar \sigma$ is done step by step thanks to a progressive marking of the histories $hv \in \Hist(v_0)$. Let us give the construction of $\bar \sigma$. Initially, for history $v_0$, we know by density that there exists $\rho_0 \in \Plays(v_0)$ with outcome $o$. We partially construct $\bar \sigma$ such that it produces $\rho_0$, and we mark each non empty prefix of $\rho_0$. Then we consider a shortest unmarked history $hv$, and we choose some $\rho \in \Plays(v)$ such that $\mu(h\rho) = o$ (this is possible by density). We continue the construction of $\bar \sigma$ such that it produces the play $\rho$ in $(\Sub{G}{h},v)$, and for each non empty prefix $g$ of $\rho$, we mark $hg$ (notice that the prefixes of $h$ have already been marked by choice of $h$), and so on. In this way, we get a strategy profile $\bar \sigma$ in $(G,v_0)$ that is a weak SPE because in each subgame $(\Sub{G}{h},v)$, the play $\rho$ induced by $\Sub{\bar \sigma}{h}$ has outcome $\mu(h\rho) = o$ and each one-shot deviating strategy in $(\Sub{G}{h},v)$ leads to a play with outcome $o$.
\qed\end{proof}

Lemma~\ref{lem:dense} leads to the next two corollaries. The first one states the existence of a uniform weak SPE in each initialized game $(G,v)$, $v \in V$, when the underlying graph of $G$ is strongly connected and the outcome function is prefix-independent. This corollary will provide a first step towards Theorem~\ref{thm:finitegraph} presented in Section~\ref{sec:second}; it is already interesting on its own right.

\begin{corollary} \label{cor:SCC}
Let $G$ be a game such that the underlying graph is strongly connected and the outcome function $\mu$ is prefix-independent.
\begin{itemize}
\item Then for all realizable outcomes $o$ such that $o = \mu(\rho)$ with $\rho \in \Plays(v_0)$, there exists a weak SPE with outcome $o$ in $(G,v_0)$.
\item Moreover, there exists a uniform strategy profile $\bar \sigma$ and an outcome $o$ such that for all $v \in V$ taken as initial vertex, $\bar \sigma$ is a weak SPE in $(G,v)$ with outcome~$o$.
\end{itemize}
\end{corollary}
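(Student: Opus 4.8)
The plan is to derive both bullets from Lemma~\ref{lem:dense} together with prefix-independence. For the first bullet, fix a realizable outcome $o$, so $o = \mu(\rho)$ for some $\rho \in \Plays(v_0)$. I would first observe that under prefix-independence and strong connectivity, $\mu^{-1}(\{o\})$ is dense in $(G, v_0)$: given any history $hv \in \Hist(v_0)$, strong connectivity provides a finite path $\pi$ from $v$ to $\rho_0$ (the first vertex of $\rho$), actually it suffices to go to any vertex lying on $\rho$, say reach $\rho_k$ via a path $\pi'$ from $v$; then $h \pi' \rho_{\geq k}$ is a play, and by prefix-independence $\mu(h\pi'\rho_{\geq k}) = \mu(\rho_{\geq k}) = \mu(\rho) = o$. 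Hence for every history there is a completion with outcome $o$, which is exactly density of $\mu^{-1}(\{o\})$ in $(G,v_0)$. Lemma~\ref{lem:dense} then yields a weak SPE with outcome $o$ in $(G,v_0)$.

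For the second bullet I want a single \emph{uniform} strategy profile working from every initial vertex. The key point is to pick the outcome $o$ once and for all: since $V$ is nonempty, fix any vertex and any play from it and let $o$ be its outcome; by strong connectivity and prefix-independence, $o$ is realizable from every $v \in V$ (run a path from $v$ to a vertex on a fixed $o$-play and continue along it). So for \emph{every} $v \in V$, the set $\mu^{-1}(\{o\})$ is dense in $(G,v)$ by the same argument as above. Now I re-run the construction in the proof of Lemma~\ref{lem:dense}, but \emph{globally}: instead of building $\bar\sigma$ separately for each initialized game, I build one positional — hence uniform — profile. Concretely, for each vertex $w \in V$ choose once a play $\rho_w \in \Plays(w)$ with $\mu(\rho_w) = o$; then define $\sigma_i(hv) := (\rho_v)_1$ for every history $hv$ with $v \in V_i$, i.e. the move always follows the fixed $o$-play anchored at the current vertex, ignoring the history. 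This $\bar\sigma$ is positional by construction, so it is a uniform strategy profile.

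It remains to check that this uniform $\bar\sigma$ is a weak SPE in $(G,v)$ for every $v$. By Proposition~\ref{prop:weak-veryweak} it suffices to rule out profitable one-shot deviations. In any subgame $(\Sub{G}{h}, u)$, the play induced by $\Sub{\bar\sigma}{h}$ is $\rho_u$ (the profile is positional, so the induced play from $u$ depends only on $u$), with $\mu(h\rho_u) = \mu(\rho_u) = o$ by prefix-independence. If some player controlling $u$ deviates once to an edge $(u, u')$ and then conforms to $\bar\sigma$, the resulting play is $u\rho_{u'}$, with outcome $\mu(hu\rho_{u'}) = \mu(\rho_{u'}) = o$, again by prefix-independence. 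Since the outcome is unchanged and $\prec_i$ is irreflexive, the deviation is not profitable. Hence $\bar\sigma$ is a (very) weak SPE in $(G,v)$ with outcome $o$, for every $v \in V$ simultaneously.

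The main obstacle — really the only subtlety — is making sure the \emph{same} outcome $o$ works uniformly from all vertices, and that the construction can be made positional rather than history-dependent; both are handled by prefix-independence, which lets us "forget the prefix" and re-anchor a fixed family of $o$-plays at each vertex. Everything else (density, the absence of profitable one-shot deviations) is a direct transcription of the argument already used for Lemma~\ref{lem:dense}, with prefix-independence supplying the needed invariance of $\mu$ under prepending histories.
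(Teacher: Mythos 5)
Your proof of the first bullet is exactly the paper's: strong connectivity gives a path from any history back onto an $o$-play, prefix-independence preserves the outcome, so $\mu^{-1}(\{o\})$ is dense and Lemma~\ref{lem:dense} applies. No issue there.

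The second bullet has a genuine gap. You define $\sigma_i(hv) := (\rho_v)_1$ and then claim that the play induced from $u$ is $\rho_u$. It is not: a positional profile built from the \emph{first steps} of the plays $\rho_w$ induces from $u$ the walk $u_0 u_1 u_2\ldots$ with $u_0 = u$ and $u_{k+1} = (\rho_{u_k})_1$, which coincides with $\rho_u$ only if the family is coherent, i.e.\ $(\rho_w)_{\geq 1} = \rho_{(\rho_w)_1}$ for all $w$. Nothing in your construction enforces this, and without it the induced play can have an outcome different from $o$. Concretely, take $V=\{a,b\}$ with all four edges present and a prefix-independent $\mu$ assigning $o$ to plays visiting both vertices infinitely often and $o'$ otherwise; choosing $\rho_a = aab(ab)^\omega$ and $\rho_b = b(ab)^\omega$ (both with outcome $o$) gives $(\rho_a)_1 = a$, so the induced play from $a$ is $a^\omega$ with outcome $o'$. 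The same defect invalidates your deviation analysis, where you assert the post-deviation continuation is $\rho_{u'}$. The paper sidesteps this by making the family coherent by construction: it fixes one \emph{simple} cycle $\pi_0$ through $v_0$, sets $o = \mu(\pi_0^\omega)$, lets the profile follow $\pi_0$ positionally on the cycle (simplicity makes this well defined and self-consistent), and routes every other vertex positionally to the cycle; prefix-independence then gives every induced play outcome $o$, and the one-shot deviation check goes through. Your argument can be repaired by imposing such coherence (e.g.\ making all $\rho_w$ eventually follow one fixed cycle), but as written the key claim about the induced play is false.
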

%

\begin{proof}
For the first statement, take $\rho \in \Plays(v_0)$ such that $o = \mu(\rho)$. By Lemma~\ref{lem:dense}, it is enough to show that $\mu^{-1}(\{o\})$ is dense in $(G,v_0)$ to get a weak SPE in $(G,v_0)$. For all $hv \in \Hist(v_0)$, there exists a path $\pi v_0$ from $v$ to $v_0$ as the underlying graph is strongly connected. The play $h\pi\rho$ has outcome equal to $\mu(\rho) = o$ since $\mu$ is prefix-independent. Hence $\mu^{-1}(\{o\})$ is dense.

To get the second statement, we need to go further by exhibiting a uniform weak SPE with the same outcome $o$ independently of the initial vertex $v$. Take any simple cycle $\pi_0v_0$ from $v_0$ to $v_0$. Such a cycle exists since the underlying graph is strongly connected. Let $\rho = \pi_0^\omega$ and $o = \mu(\rho)$ be its outcome. We partially construct a positional strategy profile $\bar \sigma$ that produces $\pi_0^{\omega}$ (recall that $\pi_0$ is simple). Let $U$ be the set of vertices that belong to $\pi_0$. Then extend the construction of $\bar \sigma$ to all $v \in V \setminus U$ in a way to reach $U$ (i.e. the cycle $\pi_0$) positionally. We then get the required uniform strategy profile $\bar \sigma$ with outcome~$o$.
\qed\end{proof}

The second corollary is a generalization of the previous one. It still guarantees the existence of a uniform weak SPE in all games $(G,v)$, $v \in V$, for graphs that are not necessarily strongly connected but have bottom strongly connected components all containing a play induced by a simple cycle and with the same outcome. This result will be useful in the proof of Theorem~\ref{thm:uniform} in Section~\ref{sec:second}.

\begin{corollary} \label{cor:SCCwithsamepayoff}
Let $G$ be a game such that the underlying graph is finite and the outcome function $\mu$ is prefix-independent. Suppose that there exists an outcome $o$ such that in each bottom strongly connected component $C$ of $G$, one can find a play $\rho_C \in \Plays(v)$ for some $v \in C$ such that $\mu(\rho_C) = o$ and $\rho_C$ is induced by a simple cycle. Then there exists a uniform weak SPE with outcome $o$ in $(G,v)$, for all $v \in V$.
\end{corollary}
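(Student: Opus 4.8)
The plan is to reduce to Theorem~\ref{thm:generalgraph} (more precisely to its hypotheses), in the spirit of how Corollary~\ref{cor:SCC} was used, by contracting each bottom SCC to a single leaf carrying the common outcome $o$. First I would fix, for each bottom SCC $C$, the vertex $v_C \in C$ and the simple cycle inducing a play $\rho_C \in \Plays(v_C)$ with $\mu(\rho_C)=o$. The idea is to build a modified game $G'$ on (essentially) the same vertex set: for every bottom SCC $C$ I replace $C$ by a fresh leaf $l_C$ with the single edge $(l_C,l_C)$, redirect into $l_C$ every edge that originally entered $C$ from outside, and set $\mu'(l_C^\omega)=o$ (and more generally $\mu'(h l_C^\omega)=o$ for every history $h$, so that the second condition of Theorem~\ref{thm:generalgraph} holds by fiat). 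For plays of $G'$ that never reach any $l_C$, keep $\mu'=\mu$; but note that since the graph is finite, every play eventually stays in some bottom SCC, so in $G'$ every infinite play that is not eventually looping in a leaf is irrelevant — in fact one can arrange that every play of $G'$ is of the form $h l_C^\omega$. Then $\setL=\{l_C : C \text{ bottom SCC}\}$, condition~1 holds because from any vertex of a finite graph one can reach a bottom SCC hence a leaf, condition~2 holds by construction, and condition~3 holds because $\setP_\setL=\{o\}$ is a singleton.

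Next I would apply Theorem~\ref{thm:generalgraph} to $(G',v)$ for an arbitrary starting vertex $v$, obtaining a finite-memory weak SPE $\bar\sigma'$ of $(G',v)$ with outcome necessarily $o$ (since $\setP_\setL=\{o\}$). I then need to transfer $\bar\sigma'$ back to a uniform weak SPE of $(G,v)$ with outcome $o$. Outside the bottom SCCs, $\bar\sigma'$ directly prescribes moves in $G$ (identifying the move "go to $l_C$" with "enter $C$ at the appropriate vertex"); inside a bottom SCC $C$, I replace the (now meaningless) behaviour by: follow the fixed simple cycle $\rho_C$ from wherever you entered $C$, after first taking a shortest path to $v_C$ within $C$. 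Because $\mu$ is prefix-independent and $\mu(\rho_C)=o$, the resulting play from $v$ has outcome $o$, and in any subgame the induced play also has outcome $o$. Uniformity and positionality come for free: the behaviour outside bottom SCCs can be taken positional (by Theorem~\ref{thm:generalgraph} it is finite-memory, but since $\setP_\setL$ is a singleton the memory is trivial, so it is in fact positional), and the behaviour inside each $C$ is positional since "walk along the fixed simple cycle, resp. along a fixed shortest path to $v_C$" depends only on the current vertex. So $\bar\sigma$ is a uniform strategy profile.

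Finally I would check the weak SPE (equivalently, by Proposition~\ref{prop:weak-veryweak}, very weak SPE) condition directly on $G$, rather than only transporting it from $G'$: in every subgame $(\Sub{G}{h},w)$ the play induced by $\Sub{\bar\sigma}{h}$ has outcome $o$, which is the \emph{maximal possible} outcome for \emph{every} player along any play, since $o$ is the only realizable outcome of a play that eventually loops — wait, that is not quite right, so instead the argument is: any one-shot deviation at $w$ by the controller of $w$ either still eventually ends up following $\bar\sigma$'s prescription inside the bottom SCC it reaches, yielding outcome $o$ again (by prefix-independence), or leads to a bottom SCC where $\bar\sigma$ again produces outcome $o$; in all cases the deviation yields outcome $o$, hence is not profitable. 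I expect the main obstacle to be the bookkeeping around redirecting edges into the contracted SCCs and making sure every play of $G'$ is of the form $h l_C^\omega$ so that Theorem~\ref{thm:generalgraph} applies cleanly; the back-translation itself is routine once prefix-independence and the simple-cycle hypothesis are in hand, and uniformity is immediate because everything in sight is positional.
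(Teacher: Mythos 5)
Your proof is correct, and its decisive step --- every play induced by the profile in every subgame, and every play resulting from a one-shot deviation, eventually follows a fixed simple cycle $\rho_C$ inside a bottom SCC and therefore has outcome $o$ by prefix-independence, so no deviation is profitable --- is exactly the paper's argument. Where you differ is in how you obtain the routing outside the bottom SCCs: you contract each bottom SCC to a leaf and invoke Theorem~\ref{thm:generalgraph} on the contracted game, whereas the paper (mirroring its proof of Corollary~\ref{cor:SCC}) simply fixes any positional profile reaching the set $U=\cup_C \rho_C$, which is possible by the definition of bottom SCCs in a finite graph. Your detour is sound but buys nothing: since $\setP_\setL=\{o\}$ is a singleton, \Remove{} never fires and the profile returned by Theorem~\ref{thm:generalgraph} is just ``reach a leaf along a fixed tree of paths''; moreover you never use its equilibrium property, only its positionality, because you re-verify the very weak SPE condition directly on $G$ at the end. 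Two minor points: your remark that one can arrange every play of $G'$ to be of the form $hl_C^\omega$ is false in general (a play may cycle forever in a non-bottom SCC, as in Figure~\ref{fig:gameNoSPE}), but it is also not needed, since Theorem~\ref{thm:generalgraph} only requires that \emph{some} such play exist from each vertex; and invoking Theorem~\ref{thm:generalgraph} separately for each initial vertex does not by itself yield a \emph{uniform} profile, though your direct final verification renders this moot as well.
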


\begin{proof}
Let $\BotSCC$ be the set of bottom strongly connected components of $G$. The construction of the strategy profile $\bar \sigma$ is very close to the one proposed in the previous proof. We partially construct $\bar \sigma$ in a way to produce each $\rho_C$. This is possible positionally since each $\rho_C$ is induced by a simple cycle. Let $U$ be the set of vertices that belong to $\cup_{C \in \BotSCC}\rho_C$. Then extend the construction of $\bar \sigma$ to all $v \in V \setminus U$ in a way to reach $U$ positionally. This is possible by definition of $\cal C$. The resulting strategy profile $\bar \sigma$ is uniform and is a weak SPE in each $(G,v)$, $v \in V$, such that $\mu(\out{\bar \sigma}_{v}) = o$. Indeed each $\rho_C$ has outcome $o$ and  $\mu$ is prefix-independent.
\qed\end{proof}

We end with a last lemma which indicates how to combine different weak SPEs into one weak SPE. It will be used in the proofs of Theorems~\ref{thm:infinitetree} and~\ref{thm:finitegraph}. 

\begin{lemma} \label{lem:cutting}
Consider an initialized game $(G,v_0)$ and a set of vertices $\setL \subseteq V$ such that for all $hl \in \Hist(v_0)$ with $l \in \setL$, the subgame $(\Sub{G}{h},l)$ has a weak SPE with outcome $o_{hl}$. Consider another initialized game $(G',v_0)$ obtained from $(G,v_0)$
\begin{itemize}
\item by replacing all edges $(l,v) \in E$ by one edge $(l,l)$, for all $l \in \setL$,
\item and with outcome function $\mu'$ such that for all $\rho' \in \Plays_{G'}(v_0)$, $\mu'(\rho') = o_{hl}$ if $\rho' = hl^\omega$ with $l \in L$ and $\mu'(\rho') = \mu(\rho')$ otherwise.
\end{itemize}
If $(G',v_0)$ has a weak SPE, then $(G,v_0)$ has also a weak SPE.
\end{lemma}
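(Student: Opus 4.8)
The plan is to glue, into a single strategy profile $\bar \sigma$ of $(G,v_0)$, a weak SPE $\bar \tau$ of $(G',v_0)$ together with one weak SPE $\bar \pi^{hl}$ of each subgame $(\Sub{G}{h},l)$ having outcome $o_{hl}$ (for $hl \in \Hist(v_0)$, $l \in \setL$; these exist by hypothesis), and then to prove that $\bar \sigma$ is a very weak SPE, hence a weak SPE by Proposition~\ref{prop:weak-veryweak}. If $v_0 \in \setL$ there is nothing to do since $(G,v_0) = (\Sub{G}{\epsilon},v_0)$ already has a weak SPE, so assume $v_0 \notin \setL$. Call a history of $G$ \emph{$\setL$-free} if none of its vertices lies in $\setL$. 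Define $\bar \sigma(g) := \bar \tau(g)$ whenever $g$ is $\setL$-free, and otherwise, writing $g = h\,l\,g'$ where $l$ is the first vertex of $g$ lying in $\setL$, set $\bar \sigma(g) := \bar \pi^{hl}(l g')$. Since a vertex $v \notin \setL$ has the same out-edges in $G$ and in $G'$, this is a well-defined strategy profile of $(G,v_0)$: its induced plays follow $\bar \tau$ as long as they avoid $\setL$, and from the moment they first reach some $l \in \setL$ after a prefix $hl$ they follow $\bar \pi^{hl}$ forever (they may leave $l$, and if they later re-enter $\setL$ they keep following $\bar \pi^{hl}$).

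The central auxiliary fact is an \emph{outcome-preservation property}: for every $\setL$-free history $gv \in \Hist(v_0)$, one has $\mu(g\,\out{\Sub{\bar \sigma}{g}}_{v}) = \mu'(g\,\out{\Sub{\bar \tau}{g}}_{v})$. To see this, inspect the play $\out{\Sub{\bar \tau}{g}}_{v}$ of $(\Sub{G'}{g},v)$: as long as it stays outside $\setL$, $\bar \sigma$ and $\bar \tau$ agree (both act by $\bar \tau$ on $\setL$-free histories) and the two graphs agree off $\setL$, so the $\bar \sigma$-play of $(\Sub{G}{g},v)$ visits the same vertices. If that play never meets $\setL$, the two plays are identical and $\mu'=\mu$ on plays never reaching $\setL$, which gives the equality. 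Otherwise let $l$ be the first leaf met, after a prefix $g\hat h\,l$ with $g\hat h$ $\setL$-free; then $\out{\Sub{\bar \tau}{g}}_{v} = \hat h\, l^{\omega}$ with $\mu'(g\hat h\,l^{\omega}) = o_{(g\hat h)l}$, while the $\bar \sigma$-play continues from $l$ along $\bar \pi^{(g\hat h)l}$, whose outcome in $(\Sub{G}{g\hat h},l)$ equals $o_{(g\hat h)l}$ by hypothesis; this establishes the equality. The same bookkeeping applied after one move $v \to v'$ with $(v,v') \in E$ shows: for any $\setL$-free $gv$ and any such $v'$, the plays $v\,\out{\Sub{\bar \sigma}{gv}}_{v'}$ and $v\,\out{\Sub{\bar \tau}{gv}}_{v'}$ have equal outcomes under $\mu$ and $\mu'$ respectively (via the property itself when $v' \notin \setL$, and via the hypothesis on $o_{(gv)l}$ directly when $v' = l \in \setL$; note that $v \notin \setL$ guarantees $(v,v') \in E'$).

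It remains to check that $\bar \sigma$ admits no profitable one-shot deviation in any subgame $(\Sub{G}{g},v)$ with $v \in V_i$. If $gv$ visits $\setL$ at a position strictly before $v$, then $\Sub{\bar \sigma}{g}$ is the restriction of some $\bar \pi^{hl}$ to one of its subgames, hence is itself a weak SPE and admits no such deviation; if $v \in \setL$ and $gv$ has no earlier $\setL$-vertex, then $\Sub{\bar \sigma}{g} = \bar \pi^{gv}$, again a weak SPE. Finally suppose $gv$ is $\setL$-free, and let $\sigma'_i$ be a one-shot deviation from $\Sub{\sigma_i}{g}$ with $\sigma'_i(v) = v'$, so the induced play is $v\,\out{\Sub{\bar \sigma}{gv}}_{v'}$. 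As $v \notin \setL$, the move $v \to v'$ is also available in $G'$; let $\tau'_i$ be the corresponding one-shot deviation from $\Sub{\tau_i}{g}$ in $(\Sub{G'}{g},v)$. By the two facts above, $\mu(g\,\out{\Sub{\bar \sigma}{g}}_{v}) = \mu'(g\,\out{\Sub{\bar \tau}{g}}_{v})$ and $\mu(g\,\out{\sigma'_i,\bar \sigma_{-i|g}}_{v}) = \mu'(g\,\out{\tau'_i,\bar \tau_{-i|g}}_{v})$; since $\bar \tau$ is a (very) weak SPE of $(G',v_0)$ we have $\mu'(g\,\out{\Sub{\bar \tau}{g}}_{v}) \nprec_i \mu'(g\,\out{\tau'_i,\bar \tau_{-i|g}}_{v})$, and this non-profitability transfers verbatim to $\bar \sigma$ in $G$. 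Hence $\bar \sigma$ is a very weak SPE of $(G,v_0)$, and a weak SPE by Proposition~\ref{prop:weak-veryweak}. The delicate point is the outcome-preservation property together with the exact matching between one-shot deviations of $\bar \sigma$ from $\setL$-free histories and one-shot deviations of $\bar \tau$; this matching works precisely because the new leaves of $G'$ are reached only through edges common to $G$ and $G'$.
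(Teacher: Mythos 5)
Your proof is correct and follows essentially the same route as the paper: switch from the weak SPE of $(G',v_0)$ to the weak SPE $\bar\pi^{hl}$ of $(\Sub{G}{h},l)$ at the first visit to $\setL$, and check one-shot deviations by cases. You actually spell out the outcome-preservation correspondence between plays of $G$ and $G'$ more explicitly than the paper does, which only invokes ``by definition of $\mu'$'' at that point.
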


\begin{proof}
Denote by $\bar \sigma^{hl}$ the weak SPE in each $(\Sub{G}{h},l)$, and by $\bar \sigma'$ the weak SPE in $(G',v_0)$. We then build a strategy profile $\bar \tau$ in $(G,v_0)$ as follows. For player~$i \in \Pi$ and history $hv \in \Hist_i(v_0)$:
\begin{itemize}
\item if no vertex of $\setL$ occurs in $hv$, then $\tau_i(hv) = \sigma'_i(hv)$;
\item otherwise, decompose $hv$ as $h_1h_2v$ such that the first occurrence of a vertex $l \in \setL$ is the first vertex of $h_2$. Then $\tau_i(hv) = \sigma^{h_1l}_i(h_2v)$.
\end{itemize}
Hence in the first case, $\tau_i$ mimics $\sigma'_i$ in the game $(G',v_0)$, and in the second case, $\tau_i$ mimics $\sigma^{h_1l}$ in the subgame $(\Sub{G}{h_1},l)$.

Let us show that $\bar \tau$ is a weak SPE in $(G,v_0)$. Consider any subgame $(\Sub{G}{h},v)$ such that $v \in V_i$, and any one-shot deviation strategy $\tau'_i$ of player~$i$ from $\Sub{\bar \tau}{h}$. Either no vertex of $\setL$ occurs in $hv$, and $\tau'_i$ is not profitable for player~$i$ because $\bar \sigma'$ is a weak SPE in $(G',v_0)$ and by definition of $\mu'$. Or $h = h_1h_2v$ such that the first occurrence of a vertex $l \in \setL$ is the first vertex of $h_2$, and again $\tau'_i$ is not profitable because $\bar \sigma^{h_1l}$ is a weak SPE in the subgame $(\Sub{G}{h_1},l)$.
\qed\end{proof}

\subsection{Proof of Theorem~\ref{thm:infinitetree}}

Now that we have established all useful intermediate results for this section and the next one, we can finally proceed to the proof of Theorem~\ref{thm:infinitetree}. W.l.o.g. we can suppose that the underlying graph of $G$ is a tree rooted at $v_0$ (by unraveling this graph from $v_0$). We first show how to transform a game played on an infinite tree to a game satisfying Conditions 1 and 2 from Theorem~\ref{thm:generalgraph} while reflecting weak SPE.

In the following lemma we write $h \sqsubseteq l$ to denote that $h$ is a prefix of $l$, and denote by $\mathrm{cl} (A)$ the topological closure of $A$.

\begin{lemma}
\label{lemma:bairecategory}
Consider a game played on an infinite tree $C^\omega$ with countable outcome set $O$ and outcome function $\mu : C^\omega \to O$. There exists a prefix-free set $L \subseteq C^*$ of \emph{leaves} and an assignment $\Theta : L \to O$ such that
\begin{enumerate}
\item For each $h \in C^*$ there exists some $l \in L$ with $h \sqsubseteq l$ or $l \sqsubseteq h$.
\item For each $l \in L$ we find that $\mu^{-1}(\{\Theta(l)\})$ is dense in $lC^\omega$.
\end{enumerate}
\end{lemma}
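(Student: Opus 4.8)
The plan is to construct $L$ and $\Theta$ by a greedy process along the tree, using a Baire category argument to decide, at each node, whether to declare it a leaf. Fix a countable dense (in the sense of the tree topology, i.e. cofinal) enumeration of all nodes of $C^*$, or rather process nodes in order of increasing length (breadth-first), so that every node is eventually examined. When examining a node $h \in C^*$ that is not already below some previously declared leaf, I ask whether there exists an outcome $o \in O$ such that $\mu^{-1}(\{o\})$ is \emph{dense in} $hC^\omega$, meaning: for every $h'$ with $h \sqsubseteq h'$ there is a play $h'\rho$ with $\mu(h'\rho)=o$. If yes, declare $h \in L$ and set $\Theta(h) = o$ (picking one such $o$, e.g. the least in a fixed enumeration of $O$). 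If no, do nothing and move on to the children of $h$.

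The key claim is that along every infinite branch of $C^\omega$ this process must eventually declare some prefix to be a leaf; this gives property (1) (prefix-freeness is built in, since once a node is declared a leaf we never look below it, and the "or $l \sqsubseteq h$" disjunct covers nodes that lie below a declared leaf). Property (2) holds by construction of $\Theta$. So the heart of the matter is the claim. Suppose for contradiction that some branch $\beta \in C^\omega$ has no prefix in $L$. Then for every prefix $h$ of $\beta$, no outcome $o$ is dense in $hC^\omega$; that is, for each $o \in O$ there is an extension $h^o$ of $h$ with $\mu^{-1}(\{o\}) \cap h^o C^\omega = \emptyset$, i.e. $\mu^{-1}(\{o\})$ is \emph{not} dense in $hC^\omega$. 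This says each $\mu^{-1}(\{o\})$ is nowhere dense in $h C^\omega$, hence in particular in $\beta$'s neighbourhoods.

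Here is where Baire category does the work. The space $\beta$ lives in — more precisely, any subtree $hC^\omega$ with $h$ a prefix of $\beta$ — is a Polish space (it is $C^\omega$ up to homeomorphism, or a closed subspace of one), hence a Baire space. For every prefix $h$ of $\beta$ and every $o \in O$, the set $\mathrm{cl}(\mu^{-1}(\{o\})) \cap hC^\omega$ is a closed nowhere dense subset of $hC^\omega$: closed trivially, and nowhere dense because non-density of $\mu^{-1}(\{o\})$ in $hC^\omega$ together with the fact that it is downward-witnessed at every further extension (any node below $h$ on which $o$ fails to be dense again has such an extension) forces the closure to have empty interior. Since $O$ is countable and $\bigcup_{o \in O} \mu^{-1}(\{o\}) = C^\omega$, the sets $\mathrm{cl}(\mu^{-1}(\{o\})) \cap hC^\omega$ for $o \in O$ would cover $hC^\omega$, contradicting the Baire category theorem. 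Hence the branch $\beta$ cannot exist, which proves the claim and hence property (1).

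The step I expect to be the main obstacle is verifying carefully that "$\mu^{-1}(\{o\})$ is not dense in $hC^\omega$ for every prefix $h$ of $\beta$" really yields that $\mathrm{cl}(\mu^{-1}(\{o\}))$ has empty interior when restricted to a suitable subtree along $\beta$ — one has to be a little careful because non-density at $h$ does not immediately give nowhere-density (there could be deeper subtrees where $o$ becomes dense). The resolution is to use that no prefix of $\beta$ is in $L$: this is a statement about \emph{all} prefixes, so for every basic open set $h'C^\omega$ with $h'$ a prefix of $\beta$, and in fact — by applying the same reasoning one level down — for cofinally many extensions, $\mu^{-1}(\{o\})$ fails to be dense, and a diagonalization over the countably many $o$ produces a single branch avoiding all closures, giving the contradiction directly without needing the nowhere-dense reformulation. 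Either packaging works; the Baire-category version is cleaner to state. A minor additional check is that $L$ as constructed is genuinely prefix-free and that every $h \in C^*$ satisfies the dichotomy in (1), which is immediate from the breadth-first bookkeeping.
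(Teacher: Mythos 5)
Your construction of $L$ (declare $h$ a leaf as soon as some outcome is dense in $hC^\omega$ itself) is a workable lazy variant of what the paper does, and prefix-freeness and property (2) do hold for it. The gap is in your verification of property (1). You reduce (1) to the claim that every infinite branch eventually has a prefix declared a leaf, but that claim is false, and your argument for it does not survive the obstacle you yourself flag at the end. Concretely, take $C=\{0,1\}$, $O=\{o_0,o_1,o_2,\dots\}$, let $\mu(\rho)=o_k$ if the first $1$ in $\rho$ occurs at position $k$, and $\mu(0^\omega)=o_0$. No outcome is dense in any $0^nC^\omega$ (each $\mu^{-1}(\{o_k\})=0^{k-1}1C^\omega$ misses the subtree $0^kC^\omega$, and $\mu^{-1}(\{o_0\})$ is a single point), so your process never declares any prefix of the branch $0^\omega$ a leaf; yet there is no contradiction with Baire, because $\mu^{-1}(\{o_{n+1}\})$ is dense in the off-branch subtree $0^n1C^\omega$. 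In other words, ``not dense in $hC^\omega$ for every prefix $h$ of $\beta$'' does not yield ``nowhere dense in $hC^\omega$''. Neither of your proposed repairs closes this: the nowhere-density assertion is exactly what fails in the example, and the diagonalization breaks after its first step, since once you move to the witness $h^{o}$ you are off the branch $\beta$, and the hypothesis (which concerns only prefixes of $\beta$) no longer supplies a witness for the next outcome.

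What you actually need is weaker and is what the paper proves: for a \emph{fixed} $h$ not yet covered by a leaf, apply the Baire category theorem to $hC^\omega=\bigcup_{o\in O}\bigl(\mu^{-1}(\{o\})\cap hC^\omega\bigr)$ to conclude that some $\mu^{-1}(\{o_0\})$ is somewhere dense in $hC^\omega$, i.e.\ dense in $lC^\omega$ for some $l\sqsupseteq h$. The paper declares that $l$ a leaf immediately; in your lazy process the same $l$, or some node between $h$ and $l$, gets declared when it is reached in breadth-first order. Either way $h$ acquires a leaf extension, which is all that (1) asks for --- note that (1) quantifies over finite histories, not over branches, and in the example above it is satisfied by the off-branch leaves $0^n1$ even though the branch $0^\omega$ never meets $L$. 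With the branch-based paragraph replaced by this pointwise application of Baire, your proof coincides with the paper's.
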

\begin{proof}
By iterative use of the Baire Category Theorem. We go through all $h \in C^*$ in some order, add elements to $L$ and extend $\Theta$. Let $h \in C^*$ be the current candidate. If we do not yet have added $l$ to $L$ with $l \sqsubseteq h$ or $h \sqsubseteq l$, then consider that $hC^\omega = hC^\omega \cap \bigcup_{o \in O} \mu^{-1}(\{o\})$. As $O$ is countable, the Baire Category Theorem implies that some $\mu^{-1}(\{o_0\})$ is somewhere dense, i.e.~that there exists some $l \sqsupseteq h$ such that $lC^\omega \subseteq \mathrm{cl} \left ( \mu^{-1}(\{o_0\}) \right )$. We add $l$ to $L$ and set $\Theta(l) = o_0$. Then we proceed to the next $h$. In the limit, we have constructed $L$ and $\Theta$ as desired.

To see that $L$ is prefix-free, assume that there are $l_1, l_2 \in L$ with $l_1 \sqsubset l_2$. If $l_1$ was added first, and $l_2$ was added when dealing with the history $h$, then $h \sqsubseteq l_2$. But as prefixes of a given history are linearly ordered, either $h \sqsubseteq l_1$ or $l_1 \sqsubseteq h$ follows. Thus, we would not have added $l_2$ when dealing with $h$. If $l_2$ was added first, and then $l_1$ when dealing with $h$, then we would find that $h \sqsubseteq l_1 \sqsubseteq l_2$, thus $h \sqsubseteq l_2$, thus we would not have added $l_2$. Hence, $L$ is prefix-free.
\qed\end{proof}

\begin{proof}[of Theorem~\ref{thm:infinitetree}]
Instead of reasoning with the underlying graph of $G$, we work w.l.o.g. 
with its unraveling from the initial vertex $v_0$.

We can apply Lemma \ref{lemma:bairecategory} to transform the game. For each leaf, we can apply Lemma~\ref{lem:dense} to obtain a weak SPE in the corresponding subgame. Together, the criteria of Lemma~\ref{lem:cutting} are satisfied. The implication of Lemma~\ref{lem:cutting} is true by Corollary~\ref{cor:generaltree}, and the conclusion yields the desired statement.
\qed\end{proof}

\section{Second application} \label{sec:second}

In this section, we present a second large family of games with a weak SPE, as another application of the general results of Section~\ref{sec:general} (more particularly Theorem~\ref{thm:generalgraph}). This family is constituted with all games with a finite underlying graph and a prefix-independent outcome function.

\begin{theorem} \label{thm:finitegraph}
Let $(G,v_0)$ be an initialized game such that the underlying graph is finite and the outcome function is prefix-independent. Then there exists a weak SPE in $(G,v_0)$.
\end{theorem}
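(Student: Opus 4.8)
The plan is to deduce Theorem~\ref{thm:finitegraph} from Theorem~\ref{thm:generalgraph} by turning the bottom strongly connected components of $G$ into leaves and then re-assembling the resulting game with Lemma~\ref{lem:cutting}. Since the underlying graph is finite, its condensation has a nonempty, finite set $\BotSCC$ of \emph{bottom} SCCs, that is, SCCs with no outgoing edge, and every vertex of $V$ can reach one of them; these two facts are exactly what the finiteness hypothesis buys us.

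Concretely, I would take $\setL = \bigcup_{C \in \BotSCC} C$ as the intended leaf set. For a bottom SCC $C$ and any $l \in C$, every play of $(G,l)$ stays inside $C$, so $(G,l)$ is the game played on the strongly connected subgraph induced by $C$ with the (still prefix-independent) restriction of $\mu$; hence Corollary~\ref{cor:SCC} gives a uniform strategy profile and an outcome $o_C$ that forms a weak SPE with outcome $o_C$ in $(G,l)$ for every $l \in C$. Because $\mu$ is prefix-independent, $\Sub{\mu}{h} = \mu$ for every history $h$, so $(\Sub{G}{h},l)$ is literally $(G,l)$ and therefore also admits a weak SPE with outcome $o_C$, independently of $h$. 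Now build $(G',v_0)$ from $(G,v_0)$ as prescribed by Lemma~\ref{lem:cutting}: replace every edge $(l,v)$ with $l \in \setL$ by the self-loop $(l,l)$, and put $\mu'(hl^\omega) = o_C$ when $l \in C \in \BotSCC$ and $\mu'(\rho) = \mu(\rho)$ otherwise. I claim $(G',v_0)$ satisfies the three conditions of Theorem~\ref{thm:generalgraph} with leaf set $\setL$: condition~3 holds since $\setP_\setL = \{o_C \mid C \in \BotSCC\}$ is finite; condition~2 holds by the very definition of $\mu'$; and condition~1 holds because from any $v$ there is a path in the finite graph $(V,E)$ into some bottom SCC, that path uses no edge leaving a vertex of $\setL$ before it first enters $\setL$, hence it survives unchanged in $G'$ up to its first vertex $l \in \setL$, after which the self-loop $(l,l)$ produces the required play $h l^\omega$.

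Applying Theorem~\ref{thm:generalgraph} then yields a weak SPE in $(G',v_0)$. Together with the weak SPEs of the subgames $(\Sub{G}{h},l)$ obtained above (with outcomes $o_{hl} := o_C$), the hypotheses of Lemma~\ref{lem:cutting} are met, and its conclusion gives a weak SPE in $(G,v_0)$, which proves the theorem. The only step that I expect to need genuine care is the verification of condition~1 of Theorem~\ref{thm:generalgraph} for $G'$: one must argue that making every bottom SCC absorbing does not destroy the ``a leaf is reachable from every vertex'' property, which relies on the observation that a shortest path from $v$ to a bottom SCC touches $\setL$ only at its last vertex. Everything else is bookkeeping, with prefix-independence of $\mu$ doing the essential work of making the subgame weak SPEs independent of the history so that Lemma~\ref{lem:cutting} is applicable.
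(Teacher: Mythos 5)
Your proposal is correct and follows essentially the same route as the paper's own proof: take $\setL$ to be the union of the bottom SCCs, obtain uniform weak SPEs on them via Corollary~\ref{cor:SCC} (with prefix-independence making them history-independent), check the three hypotheses of Theorem~\ref{thm:generalgraph} for the collapsed game $(G',v_0)$, and glue with Lemma~\ref{lem:cutting}. Your extra care in verifying condition~1 (a shortest path to a bottom SCC survives the collapse) is a slightly more explicit version of what the paper states in one line, and omitting the paper's separate case $v_0 \in \setL$ is harmless since Lemma~\ref{lem:cutting} covers it.
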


Let us comment this theorem. \emph{(i)} It guarantees the existence of a weak SPE for classical games with \emph{quantitative} objectives as presented in Example~\ref{ex:classical}, such that their outcome function is prefix-independent. This is the case of \emph{limsup} and \emph{mean-payoff} payoff functions (and their limit inferior counterparts). Recall that Example~\ref{ex:contrex} (see also Figure~\ref{fig:gameNoSPE}) provides a game with no SPE, where the payoff functions $\mu_i$ can be seen as either \emph{limsup} or \emph{mean-payoff} (or their limit inferior counterparts). \emph{(ii)} Later in this section, we will show that under the hypotheses of Theorem~\ref{thm:finitegraph}, there always exists a weak SPE that is \emph{finite-memory} (Corollary~\ref{cor:finite-mem}), and we will study in which cases it can be \emph{positional} or even \emph{uniform}  (Theorem~\ref{thm:uniform}).  \emph{(iii)} The families of games of Theorems~\ref{thm:infinitetree} and~\ref{thm:finitegraph} are incomparable: Boolean reachability games are in the first family but not in the second one, and mean-payoff games are in the second family but not in the first one.

\subsection{Proof of Theorem~\ref{thm:finitegraph}}

The proof of Theorem~\ref{thm:finitegraph} follows the same structure as for Theorem~\ref{thm:infinitetree}. The idea is to apply Lemma~\ref{lem:cutting} where $\setL$ is equal to the union of the bottom strongly connected components of the graph of $G$. The weak SPEs required by Lemma~\ref{lem:cutting} exist on the subgames $(\Sub{G}{h},l)$ with $l \in \setL$ by Corollary~\ref{cor:SCC}, and on the game $(G',v_0)$ thanks to Theorem~\ref{thm:generalgraph}.

\begin{proof}[of Theorem~\ref{thm:finitegraph}]
Let $\BotSCC$ be the set of bottom strongly connected components of the finite graph of $G$. By Corollary~\ref{cor:SCC}, for all $C \in \BotSCC$, there exist a uniform strategy profile $\bar \sigma_C$ and a outcome $o_C$ such that $\bar \sigma_C$ is a weak SPE with outcome $o_C$ in each $(G,v)$ with $v \in C$. Notice that as $\mu$ is prefix-independent, $\bar \sigma_C$ is also a weak SPE with outcome $o_C$ in all subgames $(\Sub{G}{h},v)$ with $hv \in \Hist(v_0)$ and $v \in C$.

If the initial vertex $v_0$ belongs to some $C \in \BotSCC$, then $\bar \sigma_C$ is the required weak SPE in $(G,v_0)$ (it is clearly finite-memory as it is uniform). From now on we suppose that $v_0 \not\in C$ for all $C \in \BotSCC$.

We consider the graph $(G',v_0)$ constructed from $(G,v_0)$ as described in Lemma~\ref{lem:cutting} with $\setL = \cup_{C \in \BotSCC}C$. This graph satisfies all the hypotheses of Theorem~\ref{thm:generalgraph}. 

The set $\setL$ of leaves is the one used for Lemma~\ref{lem:cutting}. The first hypothesis holds because $\setL$ is the union of the bottom strongly connected components of $G$. The second hypothesis holds because $\mu$ is prefix-independent. The third hypothesis holds because $V$ is finite. Therefore, $(G',v_0)$ has a weak SPE $\bar \sigma'$ by Theorem~\ref{thm:generalgraph}.

By the existence of the previous strategy profiles $\bar \sigma'$ and $\bar \sigma_C$, $C \in \BotSCC$, it follows by Lemma~\ref{lem:cutting} that there exists a weak SPE $\bar \tau$ in $(G,v_0)$.
\qed\end{proof}

\subsection{Finite-memory weak SPE} \label{subsec:finite-mem}

We here make the statement of Theorem~\ref{thm:finitegraph} more precise by guaranteeing the existence of a weak SPE with finite-memory.

\begin{corollary} \label{cor:finite-mem}
Let $(G,v_0)$ be an initialized game such that the underlying graph is finite and the outcome function is prefix-independent.
Then there exists a finite-memory weak SPE in $(G,v_0)$ with memory size bounded by the number of bottom strongly connected components of the graph. Moreover, a memory size linear in the number of bottom components is necessary.
\end{corollary}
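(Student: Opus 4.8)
The plan is to reopen the proof of Theorem~\ref{thm:finitegraph} and keep track of memory. The weak SPE $\bar\tau$ produced there is glued, by Lemma~\ref{lem:cutting}, from the uniform --- hence memoryless --- weak SPEs $\bar\sigma_C$ on the bottom strongly connected components $C$ (Corollary~\ref{cor:SCC}) and from the weak SPE $\bar\sigma'$ on $(G',v_0)$ obtained from Theorem~\ref{thm:generalgraph}. In $(G',v_0)$ the leaf-outcome set is exactly $\setP_\setL=\{o_C\mid C\in\BotSCC\}$, so by Theorem~\ref{thm:generalgraph} the profile $\bar\sigma'$ is finite-memory, realized by Moore machines with state set $\setP_\setL$, and $|\setP_\setL|\le|\BotSCC|$. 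I claim the gluing costs no extra memory: the components $C$ are pairwise disjoint and disjoint from the rest of $V$; and, as in the construction behind Theorem~\ref{thm:generalgraph}, whenever a play consistent with $\bar\sigma'$ (or with a one-shot deviation of it) first reaches a vertex of $C$, the machine is in state $o_C$. We may therefore let that state thereafter drive the memoryless $\bar\sigma_C$, since the current vertex alone determines which component we are in. This yields a weak SPE for $(G,v_0)$ of memory size $\le|\setP_\setL|\le|\BotSCC|$, and of memory size $1$ when $v_0$ already lies in some bottom component (then $\bar\sigma_C$ itself works).

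\textbf{Lower bound.} For the ``necessary'' part I would take the family $(G_n,v_1)$ of Example~\ref{ex:Gn}: it has $n$ bottom components $\{l_1\},\dots,\{l_n\}$, and the weak SPE exhibited there has memory size $n-1$. I would prove that \emph{every} weak SPE of $(G_n,v_1)$ has memory size $\Omega(n)$. Two elementary facts come first. (a) No subgame continuation of a weak SPE $\bar\sigma$ has outcome $\bot$: at any $v_i$ the one-shot deviation $v_i\to l_i$ gives $o_i\succ_i\bot$. (b) No subgame continuation of $\bar\sigma$ reaching $v_i$ has outcome $o_{i-1}$: otherwise that continuation necessarily moves $v_i\to v_{i+1}$ (moving to $l_i$ would give $o_i\ne o_{i-1}$), so player~$i$ gets $o_{i-1}$, while the one-shot deviation $v_i\to l_i$ gives $o_i\succ_i o_{i-1}$. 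The engine is then a cascade lemma: if $\bar\sigma$ prescribes the exit $v_i\to l_i$ at some history, the continuation obtained by the one-shot deviation $v_i\to v_{i+1}$ has, by the weak-SPE inequality for player~$i$, outcome $\preceq_i o_i$, hence in $\{\bot,o_{i-1},o_i\}$, hence --- by (a) and (b) --- equal to $o_{i-1}$; so that continuation runs almost all the way around the central cycle and exits at $l_{i-1}$, where again $\bar\sigma$ prescribes an exit $v_{i-1}\to l_{i-1}$. Iterating produces a $\sqsubseteq$-chain of histories along which $\bar\sigma$ prescribes exits to $l_i,l_{i-1},l_{i-2},\dots$, cycling through all $n$ leaves in a fixed rotating order. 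The final step is a pumping argument on the Moore machines of $\bar\sigma$: reconciling this rotation with the constraints (b) around the whole cycle forces some player's machine to visit $\Omega(n)$ distinct states, so the memory size of $\bar\sigma$ is $\Omega(n)=\Omega(|\BotSCC|)$.

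\textbf{Where the difficulty lies.} The upper bound is essentially bookkeeping on an existing construction. The lower bound is the real content, and within it the final, global combinatorial step: a single cascade started from a fixed history is $n$-periodic, and by itself each player sees a constant continuation-outcome along it, so one must argue globally --- combining cascades started from several vertices, or directly analysing the ``distance-to-the-next-exit'' pattern of $\bar\sigma$ along the cycle --- to conclude that no profile of memory size $o(n)$ can satisfy all the constraints (b) at once. Making this rotation-versus-bounded-memory tension precise is, I expect, the main obstacle.
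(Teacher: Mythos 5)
Your proposal follows essentially the same route as the paper's for both halves: the upper bound is the same bookkeeping on the construction of Theorem~\ref{thm:finitegraph} (uniform, hence memoryless, profiles on the bottom components glued via Lemma~\ref{lem:cutting} to the $|\setP_\setL|$-state profile of Theorem~\ref{thm:generalgraph}), and the lower bound uses the same family $G_n$ with the same rigidity analysis, so I will only address the step you flag as open.

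The ``rotation-versus-bounded-memory'' step you identify as the main obstacle is not actually an obstacle, and the worry that ``one must argue globally'' by combining several cascades is unfounded. A single cascade started at the first prescribed exit already determines $\bar \sigma$ at \emph{every} history along the main line $(v_1v_2\cdots v_n)^\omega$, not only at the exit histories: between two consecutive exits the prescribed continuation is precisely ``go around the cycle to the next exit'', so all intermediate moves are forced to be $v_j \to v_{j+1}$. Taking w.l.o.g.\ the first exit to be $\sigma_1(v_1)=l_1$, the cascade places the successive exits at positions $0,\, n-1,\, 2(n-1),\dots$ of the path $(v_1\cdots v_n)^\omega$, while player~$1$'s visits to $v_1$ are at positions $0,n,2n,\dots$; the two arithmetic progressions meet exactly when $k(n-1)\equiv 0 \pmod n$, i.e.\ at every $(n-1)$-th visit. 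Hence, writing $h_j=(v_1\cdots v_n)^j$, one gets $\sigma_1(h_0v_1)=\sigma_1(h_{n-1}v_1)=l_1$ and $\sigma_1(h_jv_1)=v_2$ for all $j\in\{1,\dots,n-2\}$. This is a single deterministic requirement on one player's strategy, and the pumping is immediate: a Moore machine for $\sigma_1$ with fewer than $n-1$ states must repeat a memory state at two visits $j_1\neq j_2$ in $\{1,\dots,n-2\}$, hence enters a cycle on the periodic input $(v_1\cdots v_n)^\omega$ and outputs $v_2$ at every subsequent visit to $v_1$, contradicting $\sigma_1(h_{n-1}v_1)=l_1$. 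This is exactly how the paper closes the argument; with this observation added, your proposal is complete, and the rest of what you wrote (the upper bound, facts (a) and (b), and the cascade lemma) is correct.
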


\begin{proof}
In the proof of Theorem~\ref{thm:finitegraph}, we have constructed a weak SPE $\bar \tau$. Let us show that $\bar \tau$ is a finite-memory strategy profile with memory size bounded by $|\BotSCC|$. Let us first come back to the construction of $\bar \tau$ given in the proof of Lemma~\ref{lem:cutting}. Consider player~$i \in \Pi$ and history $hv \in \Hist_i(v_0)$. If no vertex of $\setL$ occurs in $hv$, then $\tau_i(hv) = \sigma'_i(hv)$. Otherwise, decompose $hv$ as $h_1h_2v$ such that the first occurrence of a vertex $l \in C \subseteq \setL$ is the first vertex of $h_2$, then
\begin{eqnarray} \label{eq:l-C}
\tau_i(hv) = \sigma_{C,i}(v).
\end{eqnarray}
Notice that in (\ref{eq:l-C}) $\tau_i(hv)$ only depends on $C$, and not on $l \in C$, since $\bar \sigma_C$ is uniform.
Now let us recall the construction of $\bar \sigma'$ with a memory size $|\setL|$ given in the proof of Theorem~\ref{thm:generalgraph}, and in particular to equation (\ref{eq:rhovl}). In $(G',v_0)$ the plays $\rho_{v,o} = \pi l^\omega$ can be produced positionally while keeping $l \in \setL$ in memory. Therefore by $(\ref{eq:l-C})$ and as $\bar \sigma_C$ is uniform, it follows that the memory size of $\bar \tau$ can be reduced from $|\setL|$ to $|\BotSCC|$.

Let us now prove that there exist games with a finite set $V$ and a prefix-independent function $\mu$, that require a memory size in $O(|\BotSCC|)$ for their weak SPEs. To this end, we come back to the family of games $G_n$ of Example~\ref{ex:Gn} with $n$ bottom strongly connected components. Consider the unravelling of $G_n$ from the initial vertex $v_1$ as depicted in Figure~\ref{fig:profileG4} and let us study the form of any weak SPE $\bar \sigma$ in $(G_n,v_1)$. In all subgames $(\Sub{G_n}{h},v_i)$, the induced play cannot be $(v_iv_{i+1} \ldots v_{i-1})^\omega$ with outcome $\bot$ since each player would have a profitable one-shot deviation. W.l.o.g let us suppose that $\sigma_1(v_1) =  l_1$ (player~$1$ decides to move from $v_1$ to $l_1$ at the root of the unravelling, as in Figure~\ref{fig:profileG4}). Then the outcome of the play $\rho$ induced by $\Sub{\bar\sigma}{v_1}$ in the subgame $(\Sub{G_n}{v_1},v_2)$ is necessarily $o_1$ or $o_n$, otherwise player~$1$ would have a profitable one-shot deviation in $(G_n,v_0)$ (recall that $o_1 \prec_1 o_j$ for all $j \in \Pi \setminus \{1,n\}$). The first case $o_1$ cannot occur otherwise player~$2$ would have a profitable one-shot deviation in $(\Sub{G_n}{v_1},v_2)$ (recall that $o_1 \prec_2 o_2$). With similar arguments one can verify that the induced play $\rho$ is necessarily equal to $v_2v_3 \ldots v_nl_n^\omega$ with outcome $o_n$ (as in Figure~\ref{fig:profileG4}). We can repeat the same reasoning for the play induced by $\Sub{\bar\sigma}{v_1v_2 \cdots v_n}$ in the subgame $(\Sub{G_n}{v_1v_2 \cdots v_n},v_1)$ which must be equal to $v_1v_2 \ldots v_{n-1}l_{n-1}^\omega$ with outcome $o_{n-1}$, aso. Hence all weak SPEs of $(G_n,v_1)$ have the form of the one described in Figure~\ref{fig:profileG4} and they have finite memory of size $n-1$ as explained previously in Example~\ref{ex:Gn} (see also Figure~\ref{fig:MooreG4}). Let us show that such a weak SPE $\bar \sigma$ cannot have a memory size $< n-1$. Assume the contrary: wlog consider the previous weak SPE $\bar \sigma$ (as in Figure~\ref{fig:profileG4}) and in particular a Moore machine ${\cal M} = (M,m_0,\alpha_U,\alpha_N)$ encoding $\sigma_1$ such that $|M| < n-1$. Let $h_jv_1$, $j \in \{0, \ldots, n-1\}$ be consecutive histories, with $h_j = (v_1v_2 \cdots v_n)^j$. On one hand, we have $\sigma_1(h_jv_1) = \alpha_N(\widehat{\alpha}_U(m_0,h_j),v_1)$ for all $j$. On the other hand, $\sigma_1(h_0v_1) = \sigma_1(h_{n-1}v_1) = l_1$ and $\sigma_1(h_jv_1) = v_2$ for all $j \in \{1, \ldots, n-2\}$. Therefore there exists $j_1, j_2 \in \{1, \ldots, n-2\}, j_1 \neq j_2$, such that the associated memory state is identical, i.e, $\widehat{\alpha}_U(m_0,h_{j_1}) = \widehat{\alpha}_U(m_0,h_{j_2})$. Thus $\cal M$ enters into a cycle while reading the prefixes of $(v_1v_2 \cdots v_n)^\omega$. This means that $\cal M$ defines $\sigma_1(hv) = v_2$ for all histories $h$ of which $h_1$ is prefix, in contradiction with $\sigma_1(h_{n-1}v_1) = l_1$.
\qed\end{proof}

\subsection{Positional weak SPE} \label{subsec:positional}

In the previous section, Corollary~\ref{cor:finite-mem} guarantees the existence of a finite-memory weak SPE for games with a finite underlying graph and a prefix-independent outcome function. In this section, we identify conditions on the preference relations of the players, as expressed in the next lemma, that guarantee the existence of a \emph{uniform} weak SPE (see Theorem~\ref{thm:uniform}). 

\begin{lemma}[Lemma 4 of \cite{Roux15}] \label{lem:killer}
Let $\setP$ be a non empty set of outcomes. Let $\prec_i$ be a preference relation over $\setP$, for all $i \in \Pi$. The following assertions are equivalent.
\begin{itemize}
\item For all $i, i' \in \Pi$ and all $o, p, q \in \setP$, we have $\neg(o \prec_i p \prec_i q \wedge q \prec_{i'} o \prec_{i'} p)$.
\item There exist a partition $\{\setP_k \}_{k \in \setK}$ of $\setP$ and a linear order $<$ over $\setK$ such that
\begin{itemize}
\item $k < k'$ implies $o \prec_i o'$ for all $i \in \Pi$, $o \in \setP_k$ and $o' \in \setP_{k'}$,
\item $\Sub{\prec_i}{\setP_k} = \Sub{\prec_{i'}}{\setP_k}$ or $\Sub{\prec_i}{\setP_k} = (\Sub{\prec_{i'}}{\setP_k})^{-1}$ for all $i, i' \in \Pi$.
\end{itemize}
\end{itemize}
\end{lemma}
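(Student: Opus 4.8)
The plan is to prove the equivalence of the two assertions as a purely combinatorial statement about a family of linear orders over a common finite (or arbitrary) set $\setP$. The direction ``second implies first'' is the routine one, so I would dispatch it first: assume the partition $\{\setP_k\}_{k\in\setK}$ with linear order $<$ exists, and suppose for contradiction that $o \prec_i p \prec_i q$ and $q \prec_{i'} o \prec_{i'} p$ for some $i,i'$ and $o,p,q$. The cross-block monotonicity clause forces all three of $o,p,q$ to lie in the \emph{same} block $\setP_k$: indeed if two of them lay in different blocks, the order between them would be the same for every player, contradicting that $i$ and $i'$ disagree on the relative position of (say) $o$ and $q$. Once $o,p,q \in \setP_k$, the second clause says $\Sub{\prec_i}{\setP_k}$ and $\Sub{\prec_{i'}}{\setP_k}$ are either equal or mutually inverse; the pattern ``$o\prec_i p\prec_i q$'' versus ``$q\prec_{i'} o\prec_{i'} p$'' is a cyclic rotation, not the identity and not the reversal, so neither case is possible. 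This is a short finite check.

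For the harder direction, ``first implies second'', I would build the partition explicitly from a cleverly chosen equivalence relation. The condition $\neg(o \prec_i p \prec_i q \wedge q \prec_{i'} o \prec_{i'} p)$ is exactly the absence of a ``bad pattern''; the natural move is to declare $o \sim q$ when no player ``strictly separates'' them, i.e. when there is no $p$ and no $i$ with $o \prec_i p \prec_i q$ or $q \prec_i p \prec_i o$ — intuitively $o$ and $q$ are ``adjacent'' in every player's order in a way that the forbidden pattern cannot exploit. One then needs: (a) $\sim$ is an equivalence relation (transitivity is where the forbidden pattern is used); (b) the quotient order induced on the blocks is well-defined and linear — all players agree on the order between distinct blocks, because if they disagreed, one could manufacture a separating element and then the forbidden pattern; (c) within a single block, any two players' restricted orders are either equal or reversed. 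For (c) the key sub-claim is that inside a block the orders cannot ``cross'': if $\Sub{\prec_i}{\setP_k}$ and $\Sub{\prec_{i'}}{\setP_k}$ agreed on one pair but disagreed on an overlapping pair, one again extracts an instance of $o\prec_i p\prec_i q \wedge q\prec_{i'} o\prec_{i'} p$ (or its mirror) using a third element of the block. I would organize this as a sequence of small lemmas so the appeal to the forbidden pattern is localized each time.

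The main obstacle I expect is getting the right definition of the equivalence relation so that all three properties (a)--(c) actually follow. A naive ``agree everywhere'' relation is too coarse (it would not merge enough), while ``adjacent for some single player'' is too fine and not transitive. The delicate balancing act — finding the relation for which transitivity is precisely the contrapositive of the forbidden pattern, and for which the block-level dichotomy in (c) is forced — is the technical heart; the rest is bookkeeping. Since this lemma is quoted from \cite{Roux15} as ``Lemma 4'', in the paper I would simply cite it rather than reprove it, but the proof sketch above is the route I would follow if a self-contained argument were required.
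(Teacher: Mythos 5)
First, note that the paper does not prove this lemma at all: it is imported verbatim as Lemma~4 of \cite{Roux15}, so your closing remark that you would simply cite it is exactly what the authors do. Your sketch of the direction ``second implies first'' is correct: the two pairs on which $i$ and $i'$ disagree, namely $\{o,q\}$ and $\{p,q\}$, each force their elements into a common block, hence all of $o,p,q$ lie in one block, where the pattern $q \prec_{i'} o \prec_{i'} p$ is neither $\Sub{\prec_i}{\setP_k}$ nor its inverse.

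The gap is in the converse direction, in your choice of equivalence relation. You propose $o \sim q$ iff no player $i$ admits a $p$ with $o \prec_i p \prec_i q$ or $q \prec_i p \prec_i o$, and you assert in step (a) that transitivity of $\sim$ follows from the forbidden-pattern hypothesis. This is false: take a single player with $a \prec b \prec c$. The hypothesis holds vacuously (one linear order cannot realize the pattern), yet $a \sim b$ and $b \sim c$ while $a \not\sim c$, since $b$ separates $a$ from $c$. So your relation is not an equivalence relation even in the most degenerate case, and the claim that transitivity is ``precisely the contrapositive of the forbidden pattern'' cannot be salvaged for it. The relation that works --- and the one underlying the proof in \cite{Roux15} --- is based on disagreement, not adjacency: declare $o \approx q$ iff $o = q$ or there exist players $i, i'$ with $o \prec_i q$ and $q \prec_{i'} o$, and let $\sim$ be the transitive closure of $\approx$. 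This candidate is forced, since the first clause of the second assertion requires every disagreement pair to lie in a single block; the forbidden-pattern hypothesis is then what shows that each class is an interval in every player's order, that the induced order on classes is linear and player-independent, and that within a class any two restricted orders are equal or mutually inverse. Your items (b) and (c) are the right checklist, but they must be run on this relation, not on yours.
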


In the previous lemma, we call each set $\setP_k$ a \emph{layer}. The second assertion states that $(i)$ if $k < k'$ then all outcomes in $\setP_{k'}$ are preferred to all outcomes in $\setP_k$ by all players, and $(ii)$ inside a layer, any two players have either the same preference relations or the inverse preference relations.
When a set of outcomes satisfies the conditions of Lemma~\ref{lem:killer}, we say that it is \emph{layered}. In~\cite{Roux15}, the author characterizes the preference relations that always yield SPE in games with outcome functions in the Hausdorff difference hierarchy of the open sets. One condition is that the set of outcomes is layered.

\begin{theorem} \label{thm:uniform}
Let $G$ be a game with a finite underlying graph and such that the outcome function is prefix-independent with a layered set $\setP$ of outcomes.
Then there exists a uniform weak SPE in $(G,v)$, for all $v \in V$.
\end{theorem}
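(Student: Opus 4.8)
The plan is to re-run the proof of Theorem~\ref{thm:finitegraph} but to keep every strategy \emph{positional}, the layered hypothesis being exactly what makes this possible by collapsing the multi-player interaction into a two-coalition one. As there, let $\BotSCC$ be the set of bottom strongly connected components of the finite graph of $G$; by Corollary~\ref{cor:SCC} fix, for each $C\in\BotSCC$, a uniform weak SPE $\bar\sigma_C$ with outcome $o_C$, and let $G'$ be the cut game of Lemma~\ref{lem:cutting} obtained with $\setL=\bigcup_{C\in\BotSCC}C$. Its leaf-outcome set $\setP_\setL=\{o_C\mid C\in\BotSCC\}$ is finite and, being a subset of a layered set, is again layered (restrict the layer partition of Lemma~\ref{lem:killer}). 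By Lemma~\ref{lem:cutting} it then suffices to produce a \emph{uniform} weak SPE in $G'$: gluing it with the uniform, history-independent (by prefix-independence) profiles $\bar\sigma_C$ yields a uniform profile that is a weak SPE of $(G,v)$ for every $v$.

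The crux is the following strengthening of Theorem~\ref{thm:generalgraph}: \emph{a finite game satisfying the three hypotheses of Theorem~\ref{thm:generalgraph} whose leaf-outcome set is layered admits a uniform weak SPE, and indeed one whose play from every vertex ends in a leaf.} I would prove it by induction on the number of layers meeting $\setP_\setL$. Write $\setP_{k^*}$ for the top such layer and let $R$ be the set of vertices from which some play reaches a leaf whose outcome lies in $\setP_{k^*}$. Because a play witnessing $v\in R$ keeps its tail in $R$ (prefix-independence), $R$ is a sub-arena without dead ends, and its complement $V\setminus R$ is closed under $E$, is again a game meeting the hypotheses of Theorem~\ref{thm:generalgraph}, and has a leaf-outcome set meeting strictly fewer layers --- so the induction hypothesis supplies a uniform weak SPE there. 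On $R$ the only outcomes that matter lie in $\setP_{k^*}$, so by Lemma~\ref{lem:killer} the players split into a coalition maximising and a coalition minimising the linear order that, up to reversal, every player induces on $\setP_{k^*}$, every player moreover preferring any outcome of $\setP_{k^*}$ to ``never reaching $\setP_{k^*}$''. Treating, for each threshold $o\in\setP_{k^*}\cap\setP_\setL$, ``reach a leaf with outcome at least $o$'' as a reachability objective and using positional determinacy of reachability games on the finite graph $R$, one obtains at each $v\in R$ a value $g(v)\in\setP_{k^*}$ together with uniform positional strategies for the two coalitions that realise it; combined, these give a uniform positional profile on $R$ whose play from every vertex ends in a leaf with outcome $g(v)$, and the value inequalities make it a weak SPE of the restricted game. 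Finally glue this profile with the one on $V\setminus R$: the result is positional, a one-shot deviation crossing from $R$ into $V\setminus R$ strictly lowers the layer and hence is never profitable, and deviations staying inside one region are covered by that region's weak SPE property. This proves the lemma, hence the theorem.

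The main obstacle is precisely the analysis on $R$. The game there is zero-sum \emph{within} $\setP_{k^*}$ but cooperative in the goal of reaching $\setP_{k^*}$ at all, so the naive zero-sum value (declaring ``never reach $\setP_{k^*}$'' worst for everyone) is \emph{not} the right quantity: a minimising player who is about to reach a $\setP_{k^*}$-leaf will not sabotage that, even though the abstract value would tell him to. One must instead argue directly that the layered structure forces the equilibrium outcome at every $v\in R$ into $\setP_{k^*}$, and then exhibit the positional witnesses; this is the single step where the layered hypothesis is genuinely used, and the non-layered game of Figure~\ref{fig:gameNoSPE} shows it cannot be dispensed with. A lesser point to watch is the bookkeeping in the two gluing steps: Lemma~\ref{lem:cutting} is stated for a fixed initial vertex, so one must check that, all ingredients being uniform and prefix-independence making the pieces history-independent, the single profile obtained is a weak SPE from every initial vertex of $G$.
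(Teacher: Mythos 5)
Your outer architecture is essentially the paper's: peel off the top layer, observe that the complement of the set of vertices that can reach it is edge-closed and has fewer layers, recurse there, and note that a one-shot deviation crossing into the lower-layer region is never profitable because the layering makes every player prefer the top layer. (The paper does this directly on $G$ with the bottom strongly connected components and a fixed simple-cycle play $\rho_C$ per component, rather than through the cut game of Lemma~\ref{lem:cutting}, but that difference is cosmetic.) The problem is the single-layer case on your region $R$, which is where all the work lies, and which you leave unproved. You propose to compute, via positionally determined reachability games with thresholds ``reach a leaf with outcome at least $o$'', a value $g(v)$ for the two coalitions ${\cal P}_1$ and ${\cal P}_2$ given by Lemma~\ref{lem:killer} --- and then you yourself point out that this is not the right quantity, because the minimizing coalition is not adversarial to reaching leaves at all: its spoiling strategy for ``do not reach a leaf with outcome $\succeq_1 o$'' may simply produce an infinite non-leaf play, whose outcome is an arbitrary element of $\setP$ (possibly in a higher layer, possibly preferred by ${\cal P}_1$ to $g(v)$), so the value inequalities you invoke do not establish the absence of profitable one-shot deviations. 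Saying ``one must instead argue directly that the layered structure forces the equilibrium outcome into $\setP_{k^*}$'' names the missing lemma; it does not prove it. As stated, the proposal therefore has a genuine gap precisely at the step it identifies as the crux.

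For comparison, the paper's Lemma~\ref{lem:onelayer} avoids any value computation. It runs an \emph{inner} induction on the number of distinct outcomes $|\setP_{\cal C}|$ within the single layer: it carves out $V'$ by repeatedly taking the ${\cal P}_1$-attractor $\Attr_1$ of the bottom components with the top outcome $o_n$ and absorbing every ``new'' bottom component of the remainder, so that $G'$ has \emph{only} bottom components of outcome $o_n$ and Corollary~\ref{cor:SCCwithsamepayoff} yields a uniform positional profile on $G'$ with the \emph{constant} outcome $o_n$ from every vertex; $G''$ is handled by the inner induction. Because every play consistent with each piece (including after a one-shot deviation) has the same outcome within that piece, no infinite-play outcome ever enters the analysis, and the crossing edges are controlled by the ``right'' coalition by the very definition of the attractor ($V'\!\to\!V''$ edges belong to ${\cal P}_1$, who would drop below $o_n$; $V''\!\to\!V'$ edges belong to ${\cal P}_2$, who would be handed its worst outcome $o_n$). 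If you want to salvage your route, you should replace the threshold-value argument by this constant-outcome decomposition, or otherwise prove that the equilibrium play from every $v\in R$ ends in a top-layer leaf; without one of these, the single-layer case is not established.
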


\begin{example} \label{ex:opt}
Remember the class $G_n$ of games, $n \geq 3$, of Example~\ref{ex:Gn}, such that $\setP = \{o_1, \ldots, o_n, \bot \}$ and each player $i$ has a preference relation $\prec_i$ satisfying $\bot \prec_i o_{i-1} \prec_i o_i \prec_i o_j$ for all $j \in \Pi \setminus \{i-1,i\}$. This set of outcomes is not layered because the first assertion of Lemma~\ref{lem:killer} is not satisfied. Indeed we have
$$o_2 \prec_3 o_3 \prec_3 o_1 \wedge o_1 \prec_2 o_2 \prec_2 o_3.$$
Recall that in the proof of Corollary~\ref{cor:finite-mem} we have shown that all weak SPEs of $G_n$ require a memory size in $O(n)$. Hence the hypothesis of Theorem~\ref{thm:uniform} about the preference relations is not completely dispensable.
\end{example}

Let us proceed to the proof of Theorem~\ref{thm:uniform}.
Let $\BotSCC$ be the set of the bottom strongly connected components of the finite underlying graph of $G$. For each $C \in \BotSCC$, we fix a play $\rho_C \in \Plays(v)$ for some $v \in C$ induced by a simple cycle. The set $\setP_{\cal C} = \{o_C  \mid o_C = \mu(\rho_C), C \in \BotSCC \}$ is finite. It is layered by hypothesis with a finite partition into layers $\{\setP_k \}_{k \in \setK}$. The proof of Theorem~\ref{thm:uniform} is by induction on the number of layers and uses the next lemma dealing with one layer.

\begin{lemma} \label{lem:onelayer}
Suppose that $|K| = 1$, then there exists a uniform strategy profile~$\bar \sigma$ that is a weak SPE in each $(G,v)$, $v \in V$, such that $\mu(\out{\bar \sigma}_v) = o_C$ for some $C \in \BotSCC$.
\end{lemma}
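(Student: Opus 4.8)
The plan is to reduce the single-layer case to Corollary~\ref{cor:SCCwithsamepayoff} by showing that under the layer hypothesis, all bottom strongly connected components can in fact be made to realize a \emph{common} outcome in $\setP_{\cal C}$. Recall that within one layer, Lemma~\ref{lem:killer} guarantees that any two players have either identical preference relations $\Sub{\prec_i}{\setP_{\cal C}}$ or exactly inverse ones; so the players split into (at most) two camps $A$ and $B$ with $\Sub{\prec_i}{\setP_{\cal C}} = \Sub{\prec}{\setP_{\cal C}}$ for $i \in A$ and $\Sub{\prec_i}{\setP_{\cal C}} = (\Sub{\prec}{\setP_{\cal C}})^{-1}$ for $i \in B$, where $\prec$ is a fixed linear order on the finite set $\setP_{\cal C}$. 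This is essentially a two-player zero-sum situation over finitely many outcomes.

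First I would argue that there is one outcome $o^* \in \setP_{\cal C}$ that is simultaneously realizable as $o_C$ in every bottom SCC $C$. The point is that each bottom SCC $C$ is strongly connected, so by prefix-independence the set of outcomes realizable from vertices of $C$ by simple cycles inside $C$ is exactly the set $\{\mu(\rho) : \rho \in \Plays(v), v \in C\}$; call it $S_C$. By definition $o_C \in S_C$, and more importantly every element of $S_C$ is realized by some simple-cycle play in $C$. I then want to select, for each $C$, the ``best compromise'' element of $S_C$: since a bottom SCC is a subgame that the players cannot leave, standard reasoning on strongly connected prefix-independent games (cf.\ Corollary~\ref{cor:SCC}) shows that the outcome that survives as a weak SPE outcome from inside $C$ is forced, and — because within the layer the preferences are linear and essentially antagonistic — this forced value coincides across all $C$: it is the $\prec$-minimal outcome that can be guaranteed by camp $B$ against camp $A$ (equivalently the $\prec$-maximal one guaranteed by $A$), which is well-defined by Borel determinacy of the induced two-outcome-at-a-time games and does not depend on which $C$ we are in once all relevant outcomes lie in the single layer. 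Let $o^* $ be that common value.

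With $o^*$ in hand, in each bottom SCC $C$ pick a simple-cycle play $\rho_C$ with $\mu(\rho_C)=o^*$; this is possible because $o^* \in S_C$ and elements of $S_C$ are simple-cycle-realizable. Now the hypotheses of Corollary~\ref{cor:SCCwithsamepayoff} are met: the underlying graph is finite, $\mu$ is prefix-independent, and we have a single outcome $o^*$ with a simple-cycle play of outcome $o^*$ in each bottom SCC. That corollary directly yields a uniform strategy profile $\bar\sigma$ that is a weak SPE in $(G,v)$ for every $v\in V$, with $\mu(\out{\bar\sigma}_v)=o^* = o_C$ for the corresponding $C$, which is exactly the statement of Lemma~\ref{lem:onelayer}.

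The main obstacle is the claim that a \emph{single} $o^*\in\setP_{\cal C}$ works for all bottom SCCs; the naive worry is that different bottom components might force incompatible outcomes. The resolution relies crucially on the layer structure: since all of $\setP_{\cal C}$ sits in one layer, the preference orders of all players restricted to $\setP_{\cal C}$ are, up to inversion, one fixed linear order, so each bottom SCC behaves like the same antagonistic two-player game over the same totally ordered outcome set, and the value — the optimal guaranteeable outcome — is the same combinatorial quantity computed inside each $C$; since a bottom SCC is strongly connected, whatever outcome can be guaranteed from one of its vertices can be guaranteed from all of them and equals this common value. Making this precise requires a short determinacy argument (Borel determinacy of the derived win/lose games ``is the outcome $\preceq o$?'') together with the observation that in a strongly connected prefix-independent arena the set of outcomes a coalition can force is downward- or upward-closed in the layer order; once that is established the rest is an application of the already-proved Corollary~\ref{cor:SCCwithsamepayoff}.
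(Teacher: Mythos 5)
Your proof hinges on the claim that a single outcome $o^* \in \setP_{\cal C}$ is realizable by a simple cycle in \emph{every} bottom strongly connected component, so that Corollary~\ref{cor:SCCwithsamepayoff} applies in one shot. This claim is false. The bottom components are pairwise disjoint closed subarenas, and the sets of outcomes realizable inside them can be pairwise disjoint: one component $C_1$ may contain only plays of outcome $o_1$ while another $C_2$ contains only plays of outcome $o_2 \neq o_1$, both outcomes lying in the same layer. No determinacy argument can manufacture a common value here, because the antagonistic games played inside $C_1$ and $C_2$ are \emph{different} games over different sets of available outcomes; the ``optimal guaranteeable outcome'' is a function of what each component offers, not only of the shared linear order on the layer. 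Note also that the statement of the lemma does not assert a common outcome: the quantification is ``for each $v$ there is \emph{some} $C$ with $\mu(\out{\bar \sigma}_v) = o_C$'', so the outcome is allowed to (and in general must) depend on the initial vertex. A secondary issue: your appeal to a ``forced'' weak-SPE value inside a strongly connected component contradicts Corollary~\ref{cor:SCC}, which shows that \emph{every} realizable outcome there is the outcome of some weak SPE.

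The paper handles exactly the difficulty you flag, but by decomposition rather than by a common value: it proceeds by induction on $|\setP_{\cal C}|$, merges the players into two antagonistic meta-players ${\cal P}_1, {\cal P}_2$, and builds $V'$ by repeatedly taking the ${\cal P}_1$-attractor of the components realizing the ${\cal P}_1$-best outcome $o_n$ (absorbing any spurious new bottom components of the complement along the way). This splits $G$ into $G'$, whose bottom components all carry $o_n$ (so Corollary~\ref{cor:SCCwithsamepayoff} applies), and $G''$, with strictly fewer outcomes (so induction applies); the attractor structure ensures that every crossing edge from $V'$ to $V''$ is controlled by ${\cal P}_1$ and every crossing edge from $V''$ to $V'$ by ${\cal P}_2$, neither of whom profits from a one-shot deviation across the cut. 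Your argument is missing this entire mechanism, and the step it replaces it with does not hold.
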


The proof of this lemma is by induction on $|\setP_{\cal C}|$. The case of only one outcome is solved by Corollary~\ref{cor:SCCwithsamepayoff}. When they are several outcomes in $\setP_{\cal C}$, we will show how to decompose $G$ into two subgames $G'$ and $G''$ such that the bottom strongly connected component of $G'$ (resp. $G''$) are those components $C \in \cal C$ of $G$ such that $o_C = o$ for some $o$ (resp. $o_C \in \setP_{\cal C} \setminus \{o\}$). By Corollary~\ref{cor:SCCwithsamepayoff} for $G'$ and by induction hypothesis for $G''$, we will get two uniform weak SPEs that can be merged to get a uniform weak SPE for $G$.


\begin{proof}[of Lemma~\ref{lem:onelayer}]
The proof is by induction on $|\setP_{\cal C}|$. We solve the basic case $|\setP_{\cal C}| = 1$ by Corollary~\ref{cor:SCCwithsamepayoff}. Suppose that $|\setP_{\cal C}| = n > 1$. By Lemma~\ref{lem:killer}, we have $\prec_i = \prec_{i'}$ or $\prec_i = \prec_{i'}^{-1}$ for all $i, i' \in \Pi$. We can thus merge the players into two \emph{meta-players} ${\cal P}_1$ and ${\cal P}_2$ with their respective preference relations $\prec_1$, $\prec_2$ on $\setP_{\cal C}$ satisfying $o_1 \prec_1 o_2 \prec_1 \ldots \prec_1 o_n$ and $o_n \prec_2 o_{n-1} \prec_2 \ldots \prec_2 o_1$. Notice that ${\cal P}_2$ could not exist.

For the sequel, we need the classical concept of \emph{attractor} of $U \subseteq V$ for ${\cal P}_1$~\cite{2001automata}: it is the set $\Attr_1(U)$ composed of all $v \in V$ from which ${\cal P}_1$ can force, against ${\cal P}_2$, to reach $U$. More precisely, $\Attr_1(U)$ is constructed by induction as follows:
$\Attr_1(U) =  \cup_{k \geq 0} X_k$ such that
\begin{eqnarray*} \label{eq:attr}
X_0 &=& U, \\
X_{k+1} &=& X_k \cup \{v \in V \mid v \mbox{ is controlled by ${\cal P}_1$ and }  \exists (v,v') \in E, v' \in X_k \} \\
&& ~~~~\cup \{v \in V \mid v \mbox{ is controlled by ${\cal P}_2$ and } \forall (v,v') \in E, v' \in X_k \}.
\end{eqnarray*}

Let $\BotSCC' = \{C \in \BotSCC \mid o_C = o_n\}$ and $\BotSCC'' = \BotSCC \setminus \BotSCC'$. We construct a subset $V'$ of $V$ as follows:
\begin{enumerate}
\item Initially $V' \gets \cup \{C \mid C \in \BotSCC'\}$
\item $V' \gets \Attr_1(V')$. Let $\cal D$ be the set of bottom strongly connected components of $\Sub{G}{V \setminus V'}$
\item If $\cal D$ contains components not in $\BotSCC''$, then add all of them to $V'$ and goto~2, else stop
\end{enumerate}
At the end of the process, we get two sets $V'$ and $V'' = V \setminus V'$, and the related subgames $G'$ and $G''$ respectively induced by $V'$ and $V''$.


Let us prove by induction on the three steps that (*) for all $v \in V'$, there is a path from $v$ to some $C \in \BotSCC'$. To this end, we denote $W = \Attr_1(V')$ at step~2 and $T = W \bigcup \cup\{D \in {\cal D} \mid D \not\in {\cal C}\}$ at step~3. After step~1, (*) is true (with the empty path from $v$ to $v$). It is also the case after step~2, since by definition of the attractor, there is a path from $v \in W = \Attr_1(V')$ to some $v' \in V'$ for which there is a path to some $C \in \BotSCC'$ by induction hypothesis. Consider now $v \in D$ such that $D \in \cal D$ is added to $W$ in step 3. As $D$ does not belong to $\BotSCC''$ and $D$ is a bottom component of $\Sub{G}{V \setminus W}$, then there must exist a path from $v \in D$ to some $C \in \BotSCC'$ and (*) holds.

By construction each $C \in \BotSCC'$ (resp. $C \in \BotSCC''$) is a bottom strongly connected component of $G'$ (resp. $G''$). Let us prove that neither $G'$ nor $G''$ contain other bottom components. Assume the contrary and let $v$ be a vertex belonging to such a bottom component $D$. By step 3 of the previous process, $v$ cannot belong to $V''$. By (*), $v$ cannot belong to $V'$.
Therefore the set of bottom strongly connected components of $G'$ and $G''$ is equal to $\BotSCC$.

%

By Corollary~\ref{cor:SCCwithsamepayoff} for $G'$ and by induction hypothesis for $G''$, there exist two uniform strategy profiles $\bar \sigma'$ and $\bar \sigma''$ respectively on $G'$ and $G''$ such that $\bar \sigma'$ (resp. $\bar \sigma''$) is a weak SPE in each $(G',v')$, $v' \in V'$ (in each $(G'',v'')$, $v'' \in V''$). Moreover $\mu(\out{\bar \sigma'}_{v'}) = o_n$ and $\mu(\out{\bar \sigma''}_{v''}) \in P_{\cal C} \setminus \{o_n\}$. The required uniform strategy profile $\bar \sigma$ on $G$ is built such that $\Sub{\bar \sigma}{V'} = \bar \sigma'$ and $\Sub{\bar \sigma}{V''} = \bar \sigma''$. Let us show that it is a weak SPE in all $(G,v)$, $v \in V$. Consider first a subgame $(\Sub{G}{h},v')$ such that $\out{\Sub{\bar \sigma}{h}}_{v'}$ is a play in $G'$ and a one-shot deviating strategy using an edge $(v',v'')$ with $v' \in V'$ and $v'' \in V''$. By step 2  (i.e. by definition of the attractor), $v'$ belongs to ${\cal P}_1$ who has no incentive to use $(v',v'')$ since the deviating play goes to $G''$ for which ${\cal P}_1$ receives an outcome $o_m$ such that $o_m \prec_1 o_n$. Consider next a subgame $(\Sub{G}{h},v'')$ such that $\out{\Sub{\bar \sigma}{h}}_{v''}$ is a play in $G''$ and a one-shot deviating strategy using an edge $(v'',v')$ with $v' \in V'$ and $v'' \in V''$. By step 2, $v''$ now belongs to ${\cal P}_2$ who has no incentive to use $(v'',v')$ since he will receive an outcome $o_m$ such that $o_n \prec_2 o_m$.
\qed\end{proof}

We can now proceed to the proof of Theorem~\ref{thm:uniform}, which is by induction on the number of layers of $\setP$. The case of one layer is treated in Lemma~\ref{lem:onelayer}. In case of several layers, we show in the proof how to decompose $G$ into two subgames $G'$ and $G''$ such that there is only one layer in $G'$ and less layers in $G''$ than in $G$. From the two uniform weak SPEs obtained for $G'$ by Lemma~\ref{lem:onelayer} and for $G''$ by induction hypothesis, we construct the required uniform weak SPE for $G$.

\begin{proof}[of Theorem~\ref{thm:uniform}]
We will prove the theorem by induction on the number of layers and additionally show that for all $v \in V$, $\mu(\out{\bar \sigma}_v) = o_C$ for some $C \in \BotSCC$. Let $\setP' \subseteq \setP_{\cal C}$ be the highest layer of $\setP_{\cal C}$ (with respect to the linear order $<$ over $\setK$). 

If $\setP' = \setP_{\cal C}$, then there is only one layer and the required uniform strategy profile follows from Lemma~\ref{lem:onelayer}.

If $\setP' \subset \setP_{\cal C}$, we define $V' \subset V$ composed of all vertices $v$ for which there exists a path from $v$ to some component $C \in \cal C$ such that $o_C \in \setP'$ (in particular $V'$ includes all such components), and we let $V'' = V \setminus V'$. We obtain two subgames $G'$ and $G''$ respectively induced by $V'$ and $V''$. By construction of $V'$, one easily checks that the union of the bottom strongly connected components of $G'$ and $G''$ is equal to $\BotSCC$. Hence, $G'$ has only one layer (equal to $\setP'$) and $G''$ has one layer less than $G$. It follows (by Lemma~\ref{lem:onelayer} and by induction hypothesis) the existence of two strategy profiles $\bar \sigma'$ and $\bar \sigma''$ respectively on $G'$ and $G''$: $\bar \sigma'$ is a uniform weak SPE in each $(G',v')$, $v' \in V'$, such that $\mu(\out{\bar \sigma'}_{v'}) \in \setP'$, and $\bar \sigma''$ is a uniform weak SPE in each $(G'',v'')$, $v'' \in V''$, such that $\mu(\out{\bar \sigma''}_{v''}) \in \setP \setminus \setP'$. The required strategy profile $\bar \sigma$ on $G$ is built such that $\Sub{\bar \sigma}{V'} = \bar \sigma'$ and $\Sub{\bar \sigma}{V''} = \bar \sigma''$. As in the proof of Lemma~\ref{lem:onelayer}, we consider crossing edges between $G'$ and $G''$. By construction, there is no edge $(v'',v')$ with $v' \in V'$ and $v'' \in V''$ showing that a play starting in $G''$ remains in $G''$. On the contrary, there exist edges $(v',v'')$ with $v' \in V'$ and $v'' \in V''$, but no player has an incentive to use them in a one-shot deviating strategy since the resulting outcome is in a layer smaller than $\setP'$. Therefore, $\bar \sigma$ is a weak SPE in each $(G,v)$.
\qed\end{proof}

\section{A counterexample for countably many players and outcomes}
\label{sec:counterexample}
We proceed to give an example of a game without weak SPE. It shows that the requirement of only finitely many leaf-outcomes is not dispensable in Theorem~\ref{thm:generalgraph} or Theorem~\ref{thm:infinitetree}. In \cite[Section 4.3]{Flesch10} there is an example of a game in extensive form with countably many players, uncountably many outcomes, preference heights $3$, but without weak SPE. Our example is similar, but with only countably many outcomes, one single proper infinite play (i.e.~not ending in a leaf), and preferences of height $3$.

\begin{example}
We consider the initialized game $(G,v_0)$ of Figure~\ref{fig:counterex}. The set of players is $\mathbb N$. The player $i$ acts at most once, at the vertex $v_i$, and can either enter the leaf $l_i$ or move onwards to $v_{i+1}$. The play starts with player $0$ at $v_0$. The outcome attached to reaching $l_i$ is denoted by $2^{i}1^\omega$, the outcome attached to the infinite path $v_0v_1v_2\ldots$ is denoted by $0^\omega$. The preferences of player $i$ are given by $p \prec_i q$ iff $p(i) < q(i)$.

The game $(G,v_0)$ has no SPE. To prove this statement, it is enough to show that there is no very weak SPE by Proposition~\ref{prop:weak-veryweak} and since every player only acts one. Assume by contradiction that there exists a very weak SPE $\bar \sigma$. In each subgame $(\Sub{G}{h},v_i)$, the play induced by $\bar\sigma$ cannot be the one with outcome $0^\omega$. Otherwise player $i$ has a profitable one-shot deviating strategy by moving to leaf $l_i$ (by increasing his payoff from 0 to 1). Therefore, for all $k$, there exists a player $i \geq k$ who moves to leaf $l_{i}$. Let $i$ be the first such player. It follows that in $(\Sub{G}{h},v_i)$, he can increase his payoff from 1 to 2 by moving to $v_{i+1}$ instead to $l_i$, contradiction.

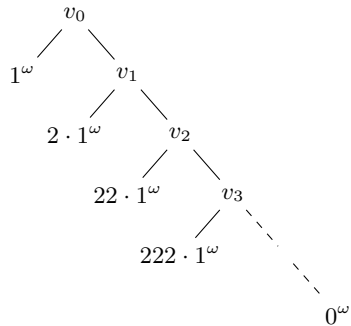
\begin{figure}
\centering
\begin{tikzpicture}[level distance=8mm]
\node{$v_0$}[sibling distance=14mm]
	child{node{$1^\omega$}[sibling distance=10mm]}
	child{node{$v_1$}[sibling distance=14mm]
		child{node{$2\cdot 1^\omega$}}
		child{node{$v_2$} [sibling distance=14mm] edge from parent[solid]
			child{node{$22\cdot 1^\omega$} edge from parent[solid]}
			child{node{$v_3$} [sibling distance=14mm] edge from parent[solid]
				child{node{$222\cdot 1^\omega$} edge from parent[solid]}
					child{node{} edge from parent[dashed]
						child{node{}edge from parent[draw=none]}
						child{node{$0^\omega$} edge from parent[dashed]}
					}	
			}
		}	
	};
\end{tikzpicture}

 \caption{A game with no weak SPE}
 \label{fig:counterex}
 \end{figure}

\end{example}



\bibliographystyle{plain}
\bibliography{biblio}

\end{document}